\tikzstyle arrowstyle=[scale=1]
\tikzstyle directed=[postaction={decorate,decoration={markings,mark=at position .5 with {\arrow[arrowstyle]{stealth}}}}]
\newtheorem{theorem}{Theorem}[section]
\newtheorem{lemma}[theorem]{Lemma}
\newtheorem{proposition}[theorem]{Proposition}
\newtheorem{corollary}{Corollary}[theorem]
\theoremstyle{definition}
\newtheorem{definition}[theorem]{Definition}
\newtheorem{example}[theorem]{Example}
\newtheorem{remark}[theorem]{Remark}
\newcommand{\arabiclist}{enumerate}
\newcommand{\Dendron}{Final tree}
\newcommand{\dendron}{final tree}
\newcommand{\MCF}{multiple context-free}
\newcommand{\dom}{\operatorname{dom}}
\newcommand{\TSG}{\operatorname{TS}(\CCC)}
\newcommand{\up}{\operatorname{up}}
\newcommand{\down}{\operatorname{down}}
\newcommand{\true}{\texttt{true}}
\newcommand{\id}{\operatorname{id}}
\newcommand{\equals}{\operatorname{eq}}
\newcommand{\set}{\operatorname{set}}
\newcommand{\push}{\operatorname{push}}
\newcommand{\CCC}{C}
\newcommand{\Updownvector}{Up-down vector}
\newcommand{\updownvector}{up-down vector}
\newcommand{\historia}{history}
\newcommand{\Historyvec}{History array}
\newcommand{\historyvec}{history array}
\newcommand{\harry}{\mathbf{h}}
\newcommand{\candq}{(\mathbf c,\mathbf q)}
\newcommand{\UpSet}{up-set}
\newcommand{\Good}{Standardised}\newcommand{\good}{standardised}
\newcommand{\degree}{\deg}
\newcommand{\cA}{\mathcal A}
\newcommand{\cB}{\mathcal B}
\newcommand{\cM}{\mathcal M}
\newcommand{\cR}{\mathcal R}
\newcommand{\cU}{\mathcal U}
\newcommand{\s}{\sigma }
\renewcommand{\t}{\tau }
\newcommand{\GG}{\mathbb G}
\newcommand{\N}{\mathbb N}
\newcommand{\Z}{\mathbb Z}
\renewcommand{\geq}{\geqslant} 
\renewcommand{\leq}{\leqslant} 
\renewcommand{\ge}{\geqslant} 
\renewcommand{\le}{\leqslant}
\begin{document}

\title[A substitution lemma for multiple context-free languages]{A substitution lemma for  multiple context-free languages}

\author[A. Duncan]{Andrew Duncan}
\address{School of Mathematics, Statistics and Physics, Newcastle University,
Newcastle upon Tyne
NE1 7RU, United Kingdom}
\email{andrew.duncan@newcastle.ac.uk }

\author[M. Elder]{Murray Elder}
\address{School of Mathematical and Physical Sciences, University of Technology Sydney, Ultimo NSW 2007, Australia}
\email{murray.elder@uts.edu.au}

\author[L. Frenkel]{Lisa Frenkel}
\address{Independent Researcher, Tallinn, 10159, Estonia}
\email{lizzy.frenkel@gmail.com}

\author[M. Lyu]{Mengfan Lyu}
\address{School of Computer, Data and Mathematical Sciences, Western Sydney University, NSW, Australia}
\email{mengfan.lyu@westernsydney.edu.au}

\date{\today}

\keywords{multiple context-free language; $k$-restricted tree stack automaton; Substitution Lemma}

\subjclass[2020]{20F10, 68Q45}

\maketitle

\begin{abstract}
We present a  necessary condition for an infinite  language to be multiple context-free, 
which we call a Substitution Lemma. We apply it  to show a sample selection of languages are not multiple context-free, including the word problem of the group $F_2\times F_2$. We also show  that groups with multiple context-free word problem  have decidable rational subset membership problem. 

Our result contrasts with previous work showing that the standard pumping lemma for context-free languages cannot be generalised to multiple context-free languages, and that weak variants of generalised Ogden's lemma do not apply to multiple context-free languages.

\end{abstract}

\section{Introduction}

\setcounter{footnote}{1}
Multiple context-free languages were introduced by Seki,  Matsumura, Fujii and Kasami \cite{Kasami2, Seki}. They were proposed as an attempt to describe the syntax of natural languages (see for example \cite{Clark}), but were later shown not to suffice when Salvati proved that the language MIX$_3$\footnote{MIX$_n=\{w\in \{a_1,\dots, a_n\}^*\mid |w|_{a_1}=\cdots= |w|_{a_n}\}$ where $ |w|_{x}$ is the number of occurrences of the letter $x$ in $w$.} is multiple context-free  \cite{Salvati}, since MIX$_3$ is regarded as a language that should not appear in any class capturing the syntax of natural languages \cite{Joshi_1985}.

A useful tool for working with formal language classes is to find a necessary condition for a language to belong to the class.
For example, the well-known pumping lemmas for regular and context-free languages (see for example \cite{sipser13}),  \emph{Ogden's lemma} (see Lemma~\ref{lem:Ogden}) and the \emph{Interchange lemma} for context-free languages \cite{Ogden, Interchange}, Gilman's {\em Shrinking lemma}  for indexed languages \cite{Gilman-Shrinking}, Mitrana and Stiebe's \emph{Interchange lemma}  and
the {\em Swapping lemma} of the second author   for blind multi-counter languages  \cite{Mitrana,E-BScounter}.

Seki \emph{et al.} \cite{Seki} gave what was later called a \emph{weak} pumping lemma for \MCF\ languages, which is weak in the sense that pumping only applies to a single word in the language, rather than all words that are sufficiently long (see Lemma~\ref{lem:Seki}). 
Kanazawa \emph{et al.} \cite{failure} proved that it was not possible to prove a certain version of
the standard pumping lemma for context-free languages to \MCF\ languages, and Kanazawa also showed  a variant of generalised Ogden's lemma is not possible \cite{KanazawaOgden}. Meanwhile, attempts have also been made to prove stronger versions of both lemmas for some subclasses of multiple context-free languages (for example, for well-nested multiple context-free languages there is a pumping lemma \cite{KanazawaWN}, strengthened and accompanied with Ogden's lemma in \cite{Sorokin}).

In this article we introduce a necessary condition for an infinite  language to be multiple context-free, which we call a {\em Substitution Lemma} (Theorem~\ref{thm:Substitution}),  as an alternative type of pumping lemma for \MCF\ languages, which is \emph{strong} in the sense that the substitution applies to all words in the language that are of sufficient length. 
Our result can be viewed as a  generalisation of Ogden's and the Interchange lemmas \cite{Ogden, Interchange}.
Its proof relies on a {characterisation} of $k$-\MCF\ languages as those accepted by \emph{$k$-restricted tree stack automata} due to Denkinger \cite{Denkinger}.
As a demonstration, in Section~\ref{sec:Applications} we use the Substitution Lemma to show that some  languages are not $k$-\MCF\ for some or all $k$, including languages whose Parikh image is semi-linear.

A key motivation for devising a new type of pumping lemma is to understand groups which have  \MCF\ word problem, see subsection~\ref{subsec:F2F2}. In Section~\ref{sec:RationalSubset} we observe that the rational subset intersection and membership problems are decidable for groups with \MCF\ word problem. 
Other recent work on properties and applications of  \MCF\ languages  includes \cite{PermClosure,LLsaw,cyclicShift}.

\section{Preliminaries}

We use $\N_0,\N_+$ to denote the sets of natural numbers starting at $0,1$ respectively. If $i,j\in \Z$ and $i\leq j$ we write $[i,j]$ for the set of integers $\{i,i+1,\dots, j-1,j\}$. 
If $\Sigma$ is a set, we let $\Sigma^*$ denote the set of all words (finite length strings) with letters from $\Sigma$, including the empty string $\varepsilon$ of length $0$, and $\Sigma^+$ the set of positive length words with letters from $\Sigma$.  If  $w\in\Sigma^*$ and $a\in\Sigma$ we  let $|w|$ denote the length of $w$, and
$|w|_{a}$ the number of occurrences of the letter $a$ in $w$.  We use $\sqcup$ to denote the disjoint union of two sets.
For any set $\Sigma$ and letter $x\not\in\Sigma$ we let $\Sigma_x$ denote $\Sigma\sqcup\{x\}$.
An \emph{alphabet} is a finite set. 
 If $u,v\in \Sigma^*$ we say $v$ is a \emph{factor} of $u$ if there exist $\alpha,\beta\in\Sigma^*$ so that $u=\alpha v\beta$.

\subsection{Multiple context-free grammars} 

See \cite{KS,Salvati} for more precise 
definitions; for the purposes of this paper we rely on an equivalent definition 
in terms of tree stack automata given in Subsection~\ref{subsec:TSA} below, so the grammar definition we present here 
 is somewhat informal.

Fix an alphabet  $N$ of \emph{non-terminals} and an alphabet $\Sigma$ of \emph{terminals}.
In a context-free grammar $G_0=(N,\Sigma,R,S)$, rules  
 in $R$ are of the form
\[A\to w_1B_1w_2B_2\dots w_nB_nw_{n+1}\]
where $A,B_1, \dots, B_n \in N$, $w_1, \dots, w_{n+1} \in \Sigma^*$, and $S\in N$ is a distinguished {\em start} non-terminal.
We could think of rules in another way:
 \[A(s_1)\leftarrow B_1({x_1}), B_2({x_2}),\dots, B_n({x_n})\] 
where $x_1,x_2,\dots,x_n$ are variables,  
$s_1=w_1{x_1}w_2{x_2}\dots  w_n{x_n}w_{n+1}$, $w_1, \dots, w_{n+1} \in \Sigma^*$, 
and non-terminals are  regarded as functions of a single variable, with the rule taking what has already been produced by each $B_i$, $i \in [1,n]$ (a word $u_i\in\Sigma^*$ in place of the variable $x_i$, which we can call the \emph{value} of $B_i$) and inserting these   into the string $s_1$, with the resulting word  becoming the value of the variable for $A$. Also, rules of the form $A\rightarrow t$, where $t\in \Sigma^*$, take the form
\[A(t)\leftarrow\]
and are used to initialise non-terminals. 
The language $L\subseteq \Sigma^*$ of a context-free grammar defined in this way
is the set of all strings $w\in \Sigma^*$ that can be obtained by a sequence of rules finishing with the non-terminal $S(w)$
(this is tantamount to working
from the leaves of the derivation tree to the root, as opposed the usual process of beginning with the start non-terminal and working towards the leaves.)

In a multiple context-free grammar $G=(N,\Sigma,R,S)$,  each non-terminal in $N$ is allowed to  have more than one variable, rules are of the form:
\[A(s_1,\dots,s_m)\leftarrow B_1({x_1^1},\dots, {x_{k_1}^1}), B_2({x_1^2},\dots, {x_{k_2}^2}),\dots, B_n({x_1^n},\dots, {x_{k_n}^n})\] and 
\[A(t_1,\dots,t_m)\leftarrow\] 
where  $s_1,\dots, s_m$ are strings of terminals and 
variables $x_j^i$, where the $x_j^i$ are distinct and can only be used at most once in $s_1\cdots s_m$ (the concatenation of all the $s_i$),    $t_i$ are strings of terminals only (so that the grammar is \emph{initialised} with words in place of variables by rules of the form  $A(t_1,\dots,t_m)\leftarrow$). A non-terminal $N(x_1,\dots,x_m)$ with $m$ variables is said to have  \emph{rank} $m$. Here $S\in N$ is a distinguished {\em start} non-terminal which has rank 1 (and is
applied  last). 

We call a multiple context-free grammar {\em $k$-multiple context-free}, abbreviated as $k$-MCF, if each non-terminal has rank at most $k$. By this construction,  any context-free  grammar  is $1$-MCF. 
The language $L\subseteq \Sigma^*$ of a multiple context-free grammar is the set of all strings $w\in \Sigma^*$ that can be obtained by a sequence of rules finishing with the non-terminal $S(w)$.
A language $L$ of a $k$-multiple context-free grammar $G=(N,\Sigma,R,S)$ is called a $k$-multiple context-free language, and $L\subseteq \Sigma^*$ is called multiple context-free if it is $k$-\MCF\ for some integer $k\geq 1$.

\begin{example}\label{abcd}
The language  $\{a^mb^mc^md^m\mid m\in \N_0\}$ is  2-multiple context-free, produced by the grammar
\[\begin{array}{rcl}
T(\varepsilon,  \varepsilon)& \leftarrow &\\
T(a{x_1}b,c{x_2}d)& \leftarrow &T({x_1},{x_2})\\
S({x_1}{x_2})& \leftarrow &T({x_1},{x_2})\\
\end{array}\]
\end{example}

\begin{example}\label{abcdnm}
The language  $\{a^nb^mc^nd^m\mid m,n\in \N_0\}$ is 2-multiple context-free, produced by the grammar
\[\begin{array}{rclcrcl}
P(\varepsilon, \varepsilon)& \leftarrow &  & Q(\varepsilon, \varepsilon)& \leftarrow &\\
P(a{x_1},c{x_2}) & \leftarrow &P({x_1},{x_2})& Q(b{x_1},d{x_2}) & \leftarrow &Q({x_1},{x_2})\\
S({x_1}{y_1}{x_2}{y_2})& \leftarrow &P({x_1},{x_2}), Q({y_1},{y_2})\\
\end{array}\]
\end{example}

\begin{proposition}[See for example \cite{Seki}]\label{prop:closure_props}
If $\Sigma,\Gamma$ are alphabets, $L,L_1\subseteq \Sigma^*$  are $k$-multiple context-free and $R\subseteq \Sigma^*$ is a regular language then 
\begin{\arabiclist}
\item\label{item:1}  $L\cup L_1$ is $k$-multiple context-free;
\item\label{item:2} $L\cap R$ is $k$-multiple context-free;
\item\label{item:3}  $\psi(L), \tau^{-1}(L)$ are $k$-multiple context-free for any homomorphisms $\psi\colon\Sigma^*\to \Gamma^*, \tau\colon\Gamma^*\to \Sigma^*$.
\end{\arabiclist}\end{proposition}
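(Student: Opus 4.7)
My plan is to prove each part by a construction that preserves the maximum non-terminal rank, deferring the hardest case to the tree stack automaton characterisation in Subsection~\ref{subsec:TSA}. For (i), take $k$-MCF grammars $G=(N,\Sigma,R,S)$ for $L$ and $G_1=(N_1,\Sigma,R_1,S_1)$ for $L_1$, with $N\cap N_1=\emptyset$ after renaming. Since $S$ and $S_1$ have rank $1$, I would introduce a fresh rank-$1$ start symbol $T$ and add the two rules $T(x_1)\leftarrow S(x_1)$ and $T(x_1)\leftarrow S_1(x_1)$. Every derivation of the combined grammar comes from $G$ or from $G_1$, so the language is exactly $L\cup L_1$, and the maximum rank is $\max(k,1)=k$.

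For (ii), I would use the product-with-DFA construction. Let $M=(Q,\Sigma,\delta,q_0,F)$ be a DFA accepting $R$. For each non-terminal $A\in N$ of rank $m\leq k$, introduce new non-terminals $A^{(p_1,q_1,\ldots,p_m,q_m)}$ of rank $m$ indexed by tuples in $Q^{2m}$, with the intended meaning that $A^{(p_1,q_1,\ldots,p_m,q_m)}$ produces $(w_1,\ldots,w_m)$ exactly when $A$ produces $(w_1,\ldots,w_m)$ and each $w_i$ drives $M$ from $p_i$ to $q_i$. Each rule $A(s_1,\ldots,s_m)\leftarrow B_1(\vec{x}^1),\ldots,B_n(\vec{x}^n)$ is replaced by one rule per consistent labelling by states of the boundaries between successive atomic pieces (terminals and variables) within each $s_j$: terminals force a $\delta$-transition, and the endpoint states of a variable belonging to $B_i$ must agree with the corresponding endpoint labels chosen for $B_i$. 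A fresh rank-$1$ start symbol $\widetilde{S}$ with rules $\widetilde{S}(x_1)\leftarrow S^{(q_0,f)}(x_1)$ for $f\in F$ completes the grammar, and induction on derivation length gives $L\cap R$ with ranks unchanged.

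For (iii), closure under $\psi$ is immediate: in each rule of a $k$-MCF grammar for $L$, replace every terminal $a\in\Sigma$ by the string $\psi(a)\in\Gamma^*$; ranks, variables and derivation structure are preserved, yielding a grammar for $\psi(L)$. The main obstacle is closure under $\tau^{-1}$, since a single symbol may have an infinite, structured preimage. Here I would use the $k$-restricted tree stack automaton characterisation established in Subsection~\ref{subsec:TSA}: given an automaton $\mathcal{A}$ accepting $L$, construct $\mathcal{A}'$ on input alphabet $\Gamma$ by augmenting the finite-state control with a buffer of length at most $\max_{a\in\Gamma}|\tau(a)|$. On reading $a\in\Gamma$, $\mathcal{A}'$ loads $\tau(a)$ into the buffer and thereafter simulates each input-reading transition of $\mathcal{A}$ as a buffer-consuming internal move, leaving the tree-stack behaviour, and in particular the $k$-restriction, intact. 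Acceptance requires the buffer to be empty and $\mathcal{A}$ to accept, so $L(\mathcal{A}')=\tau^{-1}(L)$ is $k$-MCF.
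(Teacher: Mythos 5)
Your constructions are all correct. Note, however, that the paper gives no proof of this proposition at all: it is stated as a known result with the citation ``See for example \cite{Seki}'', where the closure properties are established at the grammar level. Your arguments for (i), (ii) and the image under $\psi$ in (iii) are essentially the standard grammar-level constructions from that literature (disjoint union of grammars with a fresh rank-$1$ start symbol; the product-with-DFA annotation of each component of each non-terminal by a pair of states; terminal replacement), and they correctly preserve the maximum rank $k$. Where you genuinely depart from the cited source is in handling $\tau^{-1}(L)$: rather than working with the grammar, you invoke Denkinger's characterisation (Theorem~\ref{equ}) and equip a $k$-restricted tree stack automaton with a bounded buffer in its finite control. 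This is legitimate and not circular, since Denkinger's theorem is independent of the closure properties; it also has the virtue that the $k$-restriction is manifestly preserved, because the buffer-loading transitions use $\id$ and so never visit a vertex from below. One presentational point worth making explicit: a TSA transition reads at most one letter and performs one instruction, so the transition consuming $a\in\Gamma$ should carry the instruction $\id$ and predicate $\true$, with the simulation of $\mathcal A$'s letter-reading transitions then realised as $\varepsilon$-transitions of $\mathcal A'$ that pop the front of the buffer; acceptance requires an empty buffer. With that spelled out, the argument is complete.
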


\subsection{Semi-linear sets}\label{subsec:semi-linear}

An important concept in formal language theory is that of semi-linear sets. See for example  \cite{Bondarenko,Brough,Ginsburg} for more details. 
For  $d\in\N_+$ a set 
$A\subseteq \N_0^d$ is called \emph{linear} if there exist elements (vectors) $v_0,v_1,\dots, v_m\in\N_0^d$ so that \[A=\{v_0+\alpha_1v_1+\dots +\alpha_mv_m\mid \alpha_1, \dots, \alpha_m \in \N_0\},\] and $A$ is called \emph{semi-linear} if it is equal to a finite union of linear subsets of $ \N_0^d$.
If $\Sigma=\{a_1,\dots, a_d\}$ then  the map $\psi\colon \Sigma^*\to \N_0^d$ defined by $\psi(w)=(|w|_{a_1}, \dots, |w|_{a_d})$ is called the \emph{Parikh map}. Parikh showed that  if $L$ is context-free then $\psi(L)$ is semi-linear.
Vijay-Shanker,  Weir,  and Joshi \cite{Vijay} showed that  if $L$ is  \MCF\ then $\psi(L)$ is semi-linear. We list in the next lemma further facts about semi-linear sets.
\begin{lemma}[See for example \cite{Ginsburg}] \label{lem:SemiL-properties}
The class of semi-linear sets {is} closed under union, intersection, and complement. Moreover,
\begin{\arabiclist}
 \item if  
 $L\subseteq \Sigma^*$ and $\psi(L)$ is semi-linear then there exists a regular language $R\subseteq \Sigma^*$ so that $\psi(L)=\psi(R)$;
 \item if $L$ has semi-linear Parikh image and $R$ is a regular language then  $L\cap R$ has semi-linear Parikh image.
\end{\arabiclist}
\end{lemma}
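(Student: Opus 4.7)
My plan is to handle the three closure properties first and then address the two numbered items.

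For closure under union the definition gives the result immediately: a union of two finite unions of linear sets is a finite union of linear sets. For closure under intersection, I would use distributivity to reduce to the intersection of two linear sets $v_0+\sum_i\N v_i$ and $u_0+\sum_j\N u_j$ in $\N_0^d$, then observe that the set of non-negative integer solutions $(\alpha,\beta)\in\N_0^{m+n}$ of $v_0+\sum_i\alpha_iv_i=u_0+\sum_j\beta_ju_j$ is, by the Hilbert basis theorem for affine monoids (equivalently Gordan's lemma), a finite union of shifts of a finitely generated submonoid of $\N_0^{m+n}$. Projecting under $(\alpha,\beta)\mapsto v_0+\sum_i\alpha_iv_i$ then yields a semi-linear subset of $\N_0^d$. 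For closure under complement the cleanest route is to invoke the Ginsburg--Spanier equivalence of semi-linear subsets of $\N_0^d$ with Presburger-definable sets and then appeal to closure of Presburger arithmetic under negation; a direct combinatorial induction on dimension is possible but notably more involved.

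For item~(1), I would start with a presentation $\psi(L)=\bigcup_{i=1}^N(v_0^{(i)}+\N v_1^{(i)}+\cdots+\N v_{m_i}^{(i)})$ and, for each $(i,j)$, choose any word $w_j^{(i)}\in\Sigma^*$ with $\psi(w_j^{(i)})=v_j^{(i)}$ (possible because $v_j^{(i)}\in\N_0^d$). Setting
\[R=\bigcup_{i=1}^N w_0^{(i)}\bigl(w_1^{(i)}\bigr)^*\cdots\bigl(w_{m_i}^{(i)}\bigr)^*\]
gives a regular language whose Parikh image is readily checked to equal $\psi(L)$.

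For item~(2), I would first apply~(1) to $\psi(L)$ to obtain a regular language $R'$ with $\psi(R')=\psi(L)$; then $R\cap R'$ is regular, and hence its Parikh image is semi-linear by Parikh's theorem (or by item~(1) in reverse). The step I expect to be the main obstacle is concluding semi-linearity of $\psi(L\cap R)$ itself: although $L$ and $R'$ share a Parikh image, the intersections $L\cap R$ and $R'\cap R$ need not agree even up to Parikh image, so the argument has to exploit more than $\psi(L)$. I would expect the proof to proceed by describing $\psi(L\cap R)$ through a Presburger formula built from $\psi(L)$ and a finite-state description of $R$, and then to read off semi-linearity via the Ginsburg--Spanier equivalence used above.
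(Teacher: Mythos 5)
The paper itself gives no proof of this lemma --- it is stated with a citation to Ginsburg --- so there is no internal argument to compare against; I can only assess your proposal on its own terms. Your treatment of closure under union is immediate, your reduction of intersection to the solution monoid of a linear Diophantine system (Dickson/Gordan gives finitely many minimal inhomogeneous and homogeneous solutions, hence a semi-linear solution set, whose affine image is semi-linear) is the standard correct argument, your appeal to the Ginsburg--Spanier/Presburger equivalence for complementation is the usual route, and your proof of item~(1) by choosing representative words $w_j^{(i)}$ with $\psi(w_j^{(i)})=v_j^{(i)}$ and forming $\bigcup_i w_0^{(i)}(w_1^{(i)})^*\cdots(w_{m_i}^{(i)})^*$ is exactly right.

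For item~(2), however, the obstacle you flag is not merely the hard step of the proof --- it is fatal to the strategy you sketch, and in fact to the statement as literally written. The set $\psi(L\cap R)$ is not determined by $\psi(L)$ and $R$, so no Presburger formula built from those two data can describe it. Concretely, take $S=\{2^n\mid n\in\N_0\}$ and
\[
L=\{a^nb\mid n\in S\}\cup\{ba^n\mid n\in\N_0\setminus S\},\qquad R=a^*b .
\]
Then $\psi(L)=\{(n,1)\mid n\in\N_0\}=(0,1)+\N_0(1,0)$ is linear and $R$ is regular, but $L\cap R=\{b\}\cup\{a^{2^n}b\mid n\in\N_0\}$ has Parikh image $\{(0,1)\}\cup\{(2^n,1)\mid n\in\N_0\}$, which is not semi-linear. (Compare with $L'=a^*b$, which has the same Parikh image as $L$ yet $\psi(L'\cap R)$ is semi-linear --- this is precisely the failure of invariance you anticipated.) So item~(2) requires a stronger hypothesis than ``$\psi(L)$ is semi-linear'': it holds when $L$ belongs to a class that is closed under intersection with regular languages and all of whose members have semi-linear Parikh image (regular, context-free, or \MCF\ languages, via Proposition~\ref{prop:closure_props}(ii) together with Parikh's theorem or the Vijay-Shanker--Weir--Joshi result), and that is the only way the paper ever uses such a fact. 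As stated for arbitrary $L$, no proof exists because the claim is false; you should either add such a hypothesis or restrict item~(2) accordingly.
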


We also observe the following (c.f.  \cite[Proposition 3.2]{Brough}). 

\begin{lemma}[Gap lemma for preimages of semi-linear sets]
\label{lem:GapSemi} If $L\subseteq \Sigma^*$ is an infinite language
such that for every $m\in\N$ there exists $N_m\in\N$ such that $|w_1|-|w_2|>m$ for all $w_1,w_2\in L$ with $|w_1|>|w_2|>N_m$,
then the Parikh image of $L$ is not semi-linear.\end{lemma}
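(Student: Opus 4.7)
The plan is to argue by contradiction: assume $\psi(L)$ is semi-linear and produce two words in $L$ of arbitrarily large length whose lengths differ by only a bounded amount, violating the hypothesis. Write $\|v\|=v^{(1)}+\cdots+v^{(d)}$ for the coordinate-sum of $v\in\N_0^d$, so that $|w|=\|\psi(w)\|$ for every $w\in\Sigma^*$.

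First I would observe that since $\Sigma$ is an alphabet (hence finite), an infinite language $L$ must contain words of arbitrarily large length, so $\psi(L)$ is an infinite subset of $\N_0^d$. Writing $\psi(L)=\bigcup_{i=1}^k A_i$ as a finite union of linear sets $A_i=\{v_0^{(i)}+\alpha_1 v_1^{(i)}+\dots+\alpha_{m_i}v_{m_i}^{(i)}\mid \alpha_j\in\N_0\}$, at least one $A_i$ must be infinite, and so at least one of its period vectors $v_j^{(i)}$ (with $j\geq 1$) must be nonzero; fix such an $i$ and $j$ and set $c=\|v_j^{(i)}\|\in\N_+$.

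Next I would extract the arithmetic progression of lengths. For every $\alpha\in\N_0$ the vector $v_0^{(i)}+\alpha v_j^{(i)}$ lies in $A_i\subseteq\psi(L)$, so we can choose $w_\alpha\in L$ with $\psi(w_\alpha)=v_0^{(i)}+\alpha v_j^{(i)}$; then
\[
|w_\alpha|=\|v_0^{(i)}\|+\alpha c,
\]
so $|w_{\alpha+1}|-|w_\alpha|=c$ for every $\alpha$. Applying the hypothesis with $m=c$ gives an $N_c\in\N$ such that any two words in $L$ of lengths exceeding $N_c$ differ in length by strictly more than $c$. Choosing $\alpha$ large enough that $|w_\alpha|>N_c$ yields $w_{\alpha+1},w_\alpha\in L$ with $|w_{\alpha+1}|>|w_\alpha|>N_c$ yet $|w_{\alpha+1}|-|w_\alpha|=c\not>c$, the desired contradiction.

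There is no real obstacle here; the only point that needs care is ensuring some $v_j^{(i)}$ with $j\geq 1$ is nonzero, which is immediate from the infinitude of $A_i$. The argument uses only the definition of semi-linearity and the identification $|w|=\|\psi(w)\|$.
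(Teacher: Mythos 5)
Your proof is correct, and it takes a genuinely different route from the paper's. The paper does not touch the definition of linear sets directly: it invokes the fact (its Lemma on semi-linear sets) that a semi-linear set is the Parikh image of some regular language $R$, fixes a DFA for $R$ with $p$ states, takes a long word $u_1\in R$, pumps \emph{down} to get $u_2=xz\in R$ with $0<|u_1|-|u_2|\le p$, and then pulls $\psi(u_1),\psi(u_2)$ back to words $w_1,w_2\in L$ of the same lengths to contradict the gap hypothesis with $m>p$. You instead work straight from the definition of semi-linearity: an infinite $\psi(L)$ must contain an infinite linear component, hence a nonzero period vector $v_j^{(i)}$, and the coordinate-sum identity $|w|=\|\psi(w)\|$ turns the ray $v_0^{(i)}+\alpha v_j^{(i)}$ into an exact arithmetic progression of lengths in $L$ with common difference $c=\|v_j^{(i)}\|\ge 1$, contradicting the hypothesis at $m=c$. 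Your argument is more elementary and self-contained (no appeal to the regular-language characterisation or the pumping lemma) and yields a sharper conclusion (consecutive lengths differing by exactly $c$ rather than by at most $p$); the paper's version has the mild advantage of reusing machinery it has already stated and of not needing to unpack the union-of-linear-sets structure. The one point that needed care in your write-up --- that some period vector with $j\ge 1$ is nonzero, which follows from the infinitude of $A_i$, and that consequently $c\ge 1$ so the two words really have distinct lengths --- you have handled correctly.
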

\begin{proof} 
By Lemma~\ref{lem:SemiL-properties}
there exists a regular language $R$ with $\psi(R)=\psi(L)$.
Fix a deterministic finite state automaton (DFA) accepting $R$ and let $p$ be the number of states of this DFA. 
 Take $m>p$, and let  $N_m$  be as in the hypothesis. 
Since $L$ is infinite, $R$ is infinite so we can find a word $u_1\in R$ with $|u_1|>N_m+m+p$, and by the pumping lemma for regular languages $u_1=xyz$ where $|y|\in[1,p]$
and $u_2=xz\in R$. However, $\psi(u_1)$ and $\psi(u_2)$ have preimages $w_1,w_2$ in $L$ with $|u_i|=|w_i|$ for $i=1,2$. Thus $|w_1|>|w_2|>N_m+m$, but  $|w_1|-|w_2|\leq p$, which is a contradiction.
\end{proof} 

It follows that languages such as \begin{multicols}{2}\begin{itemize}
\item	$L_1=\{a^{2^n} \mid n\in\N\}$,
\item	$L_2=\{a^{n^2}\mid n\in\N\}$,\columnbreak
\item	$L_3=\{a^{ \lfloor n\log n \rfloor}\mid n\in\N\}$,
\item	$L_4(\alpha)=\{a^{\lfloor n^\alpha \rfloor}\mid n\in\N\}, \alpha>1$
\end{itemize}\end{multicols}
do not have semi-linear Parikh image, and therefore are not \MCF. 
More generally we have the following
\begin{corollary}
  Let $f:\N\rightarrow\N$ be a function such that for any $m\in \N$, there exists $n\in\N$ so that $|f(n_1)-f(n_2)|>m$ for all $n_1>n_2>n$.
  Then the language $L:=\{a^{f(n)}\colon n\in\N\}$
does not have semi-linear Parikh image, and  thus is not \MCF.\end{corollary}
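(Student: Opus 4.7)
The plan is to apply the Gap lemma (Lemma~\ref{lem:GapSemi}) directly, by checking that its hypothesis on the length gap is a straightforward consequence of the hypothesis on $f$. The final assertion that $L$ is not \MCF{} then follows from the Vijay-Shanker--Weir--Joshi result that every multiple context-free language has semi-linear Parikh image.

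First I would verify that $L$ is infinite, which is needed to invoke Lemma~\ref{lem:GapSemi}. Applying the hypothesis on $f$ with $m=0$ produces some $n_0\in\N$ such that $f(n_1)\neq f(n_2)$ whenever $n_1>n_2>n_0$, so $f$ is injective on $[n_0+1,\infty)$ and therefore $L=\{a^{f(n)}\colon n\in\N\}$ is infinite.

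Next I would check the gap hypothesis of Lemma~\ref{lem:GapSemi}. Given $m\in\N$, choose $n\in\N$ from the hypothesis on $f$, and set $N_m:=\max\{f(k)\colon k\leq n\}$ (a finite maximum). Suppose $w_1=a^{f(n_1)}$ and $w_2=a^{f(n_2)}$ are elements of $L$ with $|w_1|>|w_2|>N_m$, i.e. $f(n_1)>f(n_2)>N_m$. By the choice of $N_m$, if either $n_i\leq n$ then $f(n_i)\leq N_m$, contradicting $f(n_i)>N_m$; hence both $n_1,n_2>n$. Since $f(n_1)\neq f(n_2)$, also $n_1\neq n_2$, so after relabelling (and using that $|f(n_1)-f(n_2)|$ is symmetric) the hypothesis on $f$ yields $|f(n_1)-f(n_2)|>m$, i.e.\ $|w_1|-|w_2|>m$. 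This is exactly what Lemma~\ref{lem:GapSemi} requires, so $\psi(L)$ is not semi-linear, and therefore $L$ is not \MCF.

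The only mild subtlety I expect, and the one step worth recording carefully, is the need to choose $N_m$ large enough to force the \emph{indices} $n_1,n_2$ into the range where the gap property on $f$ applies: the hypothesis is phrased in terms of the argument $n$ of $f$, whereas the Gap lemma is phrased in terms of the word length $|w|=f(n)$, and these are linked only through the finite set $\{f(0),\dots,f(n)\}$. Apart from that, the argument is a direct translation of the hypothesis.
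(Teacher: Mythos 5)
Your proof is correct and follows exactly the route the paper intends: the corollary is stated in the paper as an immediate consequence of the Gap Lemma (Lemma~\ref{lem:GapSemi}) with no written proof, and your argument simply fills in the routine details, the only non-trivial one being the choice $N_m=\max\{f(k)\colon k\leq n\}$ to translate the index condition on $f$ into the length condition required by the lemma. That step, and the appeal to semi-linearity of Parikh images of \MCF{} languages, are handled correctly.
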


\subsection{Tree stack automata}\label{subsec:TSA}

Denkinger \cite{Denkinger} showed that corresponding to multiple context-free grammars there is a machine called a \emph{tree stack automaton} accepting the same language. To explain this we first recall the definitions of trees and tree stacks as in \cite{Denkinger}.

\begin{definition}[Tree with labels in $\CCC$]
\label{defn:TreeWithLabels}
Let  $\CCC$ be an alphabet and $@\not\in\CCC$.
  Let $\xi$ be a  partial function from $\N_+^*$ to $\CCC_@$ ($=C\sqcup\{@\}$) with a non-empty, finite and prefix-closed domain denoted $\dom(\xi)$ such that $\xi(\nu)=@$ if and only if $\nu=\varepsilon$. 
  Then $\xi$ defines a labelled  rooted tree $T$ with vertices $V(T)=\dom(\xi)$ where $\nu\in V(T)$ is labelled $\xi(\nu)$,  root $\varepsilon$, and edges $E(T)=\{\{\nu,\nu n\} \mid \nu n \in \dom(\xi)\}$.
\end{definition}

Note that $\dom(\xi)$ being non-empty and prefix-closed ensures that every vertex is connected to the root vertex by an edge path. We will occasionally abuse notation and write $\xi=T$.

\begin{remark} 
Note that there is no requirement in the definition of a tree that a vertex  $\rho$ with a child $\rho n$, $n>1$, also has a child $\rho (n-1)$. 
  \end{remark}

\begin{definition}[Tree stack over $\CCC$]
A \emph{tree stack} 
 is a pair $(\xi,\rho)$, where 
 $\rho\in \dom(\xi)$. 
The set of all tree stacks over $\CCC$ is denoted by $\TSG$.
\end{definition}
We refer to the address $\rho$ as the \emph{pointer}, and visualise the tree stack as a rooted tree with root at address $\varepsilon$ labeled $@$ and all other vertices labeled by letters from $\CCC$, with the \emph{pointer} $\rho$ indicated by an arrow pointing to  the vertex at address $\rho$.
See for example Fig.~\ref{fig:TreeStackEG}.

We note that Denkinger's notation treats the root vertex slightly differently, but for ease of exposition we have chosen the conventions above. 

\begin{figure}[!htb]
\begin{center}
\begin{tikzpicture}[scale=1]
\draw[decorate]  (0,2) --  (1,1) -- (2,0) -- (3,1);
\draw[decorate] (1,1) -- (2,2);

\draw (0,2) node {$\bullet$};
\draw (1,1) node {$\bullet$};
\draw (2,2) node {$\bullet$};
\draw (2,0) node {$\bullet$};
\draw (3,1) node {$\bullet$};

\draw (2.1,-0.3) node {$@$};
\draw (0.1,2.3) node {$\#$};
\draw (2.1,2.3) node {$\dag$};

\draw (3.1,1.3) node {$\dag$};
\draw (1,1.3) node {$\ast$};

\draw[decorate,ultra thick,->] (-1,2) -- (-0.4,2);
\draw (-0.2,1.85) node {\color{cyan}\tiny$14$};
\draw (2.15,1.85) node {\color{cyan}\tiny$16$};
\draw (0.85,0.85) node {\color{cyan}\tiny$1$};
\draw (3.15,0.85) node {\color{cyan}\tiny$3$};
\draw (1.85,-0.15) node {\color{cyan}\tiny$\varepsilon$};
\end{tikzpicture}

\end{center}

\caption{Tree stack $(\xi, 14)$  with $\xi$  defined by  $\xi\colon \varepsilon\mapsto @, 1 \mapsto \ast, 14\mapsto \#, 16\mapsto \dag, 3\mapsto \dag$ and pointer at $14$. Here $\dom(\xi)=\{\varepsilon, 1,14, 16, 3\}$ and $\{\ast,  \#, \dag\}\subseteq C$.\label{fig:TreeStackEG}}
\end{figure}

To define a tree stack automaton, we employ the following predicates and {(partial)} functions. 
The \emph{predicates} are used to check if the pointer of a tree stack has a certain label, namely: 
\begin{\arabiclist} 
\item $\equals(c)=\{(\xi, \rho) \in \TSG \mid \xi(\rho)=c\}$ where $c \in \CCC_@$ (so $(\xi, \rho)\in  \equals(c)$, or $\equals(c)$ is true for the tree stack  $(\xi, \rho)$, if the label of $\rho$ is $c$)\footnote{In \cite{Denkinger} Denkinger calls the predicate $\equals(@)$ ``bottom'' and treats it separately.},
\item $\true$ coincides with $\TSG$ (true for any tree stack regardless of the current label pointed to).
\end{\arabiclist}

We list below the (partial) \emph{functions} that provide both construction of and movement around a tree stack: 
  \begin{\arabiclist} 
  \item $\id\colon \TSG \rightarrow \TSG$ where $\id(\xi, \rho)=(\xi, \rho)$ for every $(\xi, \rho) \in \TSG$. This is a function that makes no change to the tree stack.
\item  For each $(n,c)\in \N_+\times \CCC$ let  $\push_n(c) \colon \TSG  \rightarrow \TSG$  be the
  partial function defined,  whenever $\rho n$ is not an address in $\dom(\xi)$, 
  by $\push_n(c)((\xi, \rho))=(\xi', \rho n)$, 
  where $\xi'$ is the partial function defined as 
\[\xi'(\nu)=\begin{cases}\xi(\nu), & \nu\in\dom(\xi)\\
  c, & \nu=\rho n\end{cases}.\]
Thus $\push_n(c)$  adds a new vertex $\rho n$ labelled by $c$, a new edge from $\rho$ to $\rho n$ and moves the pointer from $\rho$ to $\rho n$,  provided $\rho n$ was not already a vertex of $\xi$.
\item   For each $n\in \N_+$ let $\up_n \colon \TSG \rightarrow \TSG$ be 
  the partial function defined,  whenever  $\rho n \in \dom(\xi)$,  by $\up_n((\xi, \rho))=(\xi, \rho n)$.
  So $\up_n$ moves the pointer of a tree stack to an existing child address.
\item   $\down \colon \TSG \rightarrow \TSG$ is a partial function defined as $\down(\xi, \rho n)=(\xi, \rho)$ (so down  moves the pointer from a vertex to its parent and may be  applied to any tree stack unless its
  pointer is already pointing to the root, which has no parent).
\item  For each $c\in \CCC$ let $\set(c) \colon \TSG  \rightarrow \TSG$  be the function defined,
  whenever $\rho\neq \varepsilon$, by  $\set(c)( (\xi,\rho))=(\xi',\rho)$, where $\xi'$ is the partial function  defined as 
\[\xi'(\nu)=\begin{cases}\xi(\nu), & \nu\in\dom(\xi)\setminus\{\rho\}\\
    c, & \nu=\rho\end{cases}.\]
Thus $\set(c)$ relabels the vertex pointed to by the pointer with the letter $c\in \CCC$ (and may be  applied to any tree stack except $(\xi,\varepsilon)$). 
\end{\arabiclist}

Note that none of the partial functions $\id$, $\push_n(c)$, $\up_n$, $\down$ or $\set(c)$ remove any vertices from a tree stack, and only $\push_n(c)$ adds a vertex. 

\begin{definition}[Tree stack automata]
A \emph{tree stack automaton (TSA)} is a tuple \[\mathcal{A}=(Q,\CCC, \Sigma, q_0,  \delta, Q_f)\] where $Q$ is a finite  \emph{state set}, $\CCC_@$ is an alphabet of \emph{tree-labels} (with $@ \notin \CCC$), $\Sigma$ is the alphabet of \emph{terminals}, $q_0\in Q$ is the \emph{initial state},  
 $Q_f\subseteq Q$ is the subset of \emph{final states}, and $\delta = \{\s_1,\dots, \s_{|\delta|}\}$
  is a finite set of \emph{transition rules} of  the form $\s_i=(q,x,p,f,q')$ where 
 \begin{\arabiclist}
 \item  $q,q'\in Q$ are the \emph{source} and \emph{target} states of $\s_i$,
 \item  $x\in\Sigma_\varepsilon$ is the \emph{input letter} for $\s_i$, 
 \item $p$ is the predicate of $\s_i$ (i.e. either $\equals(c)$, for some $c\in\CCC_@$, or $\true$),
 \item $f$ is the \emph{instruction} of $\s_i$, so is one of the functions $\id$, $\push_n(c)$, $\up_n$, $\down$, or $\set(c)$ for $c\in\CCC, n\in \N_+$.
 \end{\arabiclist}
\end{definition} 
Note that each coordinate of $\mathcal{A}$ has a finite description.
The tree stack automaton operates by starting in state $q_0$ with a tree stack initialised at $(\{(\varepsilon,@)\},\varepsilon)$,
reading each letter of an input word $w\in\Sigma^*$ with arbitrarily many $\varepsilon$ letters interspersed, applying transition rules from $\delta$, if they are allowed, (eg. $\push_n$ unless $\rho n$ is already a vertex;  the target state of the applied transition should also coincide with the source state of the following one). In other words, a composition of transitions is allowed if it determines
a well defined function which consumes input letters whilst moving between tree stacks. 
An allowed  sequence of transitions  $\tau_1\cdots \tau_r\in\delta^*$ is called a \emph{run} of the automaton. 
A run is \emph{valid} if it finishes in a state in $Q_f$; if $w\in \Sigma^*$ is the input word consumed during a valid run we say that $\mathcal A$ \emph{accepts} 
$w$ (we do not necessarily put a requirement on the final position of the pointer, but see Remark \ref{rmk:finish-root}).
Define the language $L(\mathcal A)$ to be the set of all words $w\in\Sigma^*$ accepted by $\mathcal A$.

\begin{example}\label{eg:ABCD}
Let $Q=\{q_0,q_1,q_2,q_3,q_4\}, \Sigma=\{a,b,c,d\}$ and $\CCC=\{*, \#\}$. Consider the TSA
\[
\mathcal{A}=(Q, \CCC, \Sigma, q_0, \delta, \{q_4\})
\]
where $\delta$ consists of the transitions
\begin{equation*}
\begin{array}{llllllllll}
 & \s_1=(q_{0}, {a}, \true, & \push_1(*), q_{0}),  & \hspace{1cm} &\s_6=(q_{2}, c, \equals(*), &\up_1, q_{2}), \\
 &\s_2=(q_{0}, \varepsilon, \true, & \push_1(\#), q_{1}), & &\s_7=(q_{2}, \varepsilon, \equals(\#),& \down, q_{3}),\\
 &\s_3=(q_{1}, \varepsilon, \equals(\#) ,& \down, q_{1}),  & &\s_8=(q_{3}, d, \equals(*),& \down, q_{3}),  \\
 &\s_4=(q_{1}, b,\equals(*), &\down, q_1), &  &\s_9=(q_{3}, \varepsilon, \equals(@),& \id, q_{4})\\
&\s_5=(q_{1},  \varepsilon, \equals(@), &\up_1, q_{2}), 
\end{array}
\end{equation*}
depicted by the labelled directed graph in Fig.~\ref{fig:EGabcdAUTOM}.

\begin{figure}[H]
\centering
\begin{tikzpicture}[shorten >=1pt,node distance=3cm,on grid,auto]

 \node[state,initial, initial where=below]  (1)                      {$q_0$};
  \node[state]          (2) [right =of 1] {$q_1$};
  \node[state]          (3) [right=of 2] {$q_2$};
    \node[state]          (4) [right=of 3] {$q_3$};
  \node[state,accepting](5) [right=of 4] {$q_4$};

  \path[->] (1) edge              node        { $\begin{array}{c}\varepsilon, \true ,\\ \push_1(\#)\end{array}$} (2)
                    edge [loop above] node        {$\begin{array}{c}{a},\true , \\\push_1(*)\end{array}$} ()
            (2) edge              node        {$\begin{array}{c}\varepsilon, \equals(@), \\\up_1\end{array}$} (3)
                       edge [loop below] node        {$\begin{array}{c}\varepsilon,\equals(\#) , \\\down\end{array}$} ()
                           edge [loop above] node        {$\begin{array}{c}b,\equals(*) ,\\\down\end{array}$} ()
            (3) edge              node {$ \begin{array}{c}\varepsilon, \equals(\#), \\\down\end{array}$} (4)
                  edge [loop below] node        {$\begin{array}{c}c,\equals(*), \\\up_1\end{array}$} ()
                              (4) edge              node {$\begin{array}{c} \varepsilon, \equals(@), \\\id\end{array}  $} (5)
                  edge [loop below] node        {$\begin{array}{c}d,\equals(*), \\\down\end{array}$} ()
                    ;
                    
\end{tikzpicture}
\caption{Finite state control encoding $\mathcal{A}$ accepting $a^mb^mc^md^m$, $m \geq 0$ in Example~\ref{eg:ABCD}. 
Each directed edge corresponds to a transition $\s_i\in\delta$.
\label{fig:EGabcdAUTOM}}

\end{figure}

\begin{figure}[H]
\centering

\begin{tikzpicture}[scale=1]
\draw[decorate] (0,3.9)-- (0,2.6) --  (0,1.3) -- (0,0);
\draw (0,3.9) node {$\bullet$};
\draw (0,2.6) node {$\bullet$};
\draw (0,1.3) node {$\bullet$};
\draw (0,0) node {$\bullet$};
\draw (0.3,0) node {$@$};
\draw (0.3,3.9) node {$\#$};
\draw (0.3,1.3) node {$\ast$};
\draw (0.3,2.6) node {$\ast$};
\draw[decorate,ultra thick,->] (-1,1.3) -- (-0.4,1.3);
\draw (-0.15,-0.15) node {\color{cyan}\tiny $\varepsilon$};
\draw (-0.15,1.25) node {\color{cyan}\tiny $1$};
\draw (-0.2,2.55) node {\color{cyan}\tiny $11$};
\draw (-0.25,3.85) node {\color{cyan}\tiny $111$};
\end{tikzpicture}
\caption{Tree stack after  having read $a^2b=aa\varepsilon\varepsilon b$ (on the way to accepting $a^2b^2c^2d^2$) using transitions $\s_1^2\s_2\s_3\s_4$ in Example~\ref{eg:ABCD}.
\label{fig:EGabcdTREE}}
\end{figure}

Then  $L(\mathcal A)=\{a^mb^mc^md^m|m\in \N_0\}$, as in Example~\ref{abcd}. 
For instance, 
the TSA $\mathcal{A}$ accepts $a^2b^2c^2d^2$ by using the sequence of transitions $\mathcal R= \s_1^2 \s_2 \s_3 \s_4^2 \s_5 \s_6^2 \s_7 \s_8^2 \s_9$ shown in Table~\ref{table:ABCD}.

\begin{table}
\begin{tabular}{|c|c|rl|c|} \hline
Transition & State  & Tree stack && Input read\\
\hline
& $q_{0}$ & $( \{  {(\varepsilon, @)} \},$ & $\varepsilon)$ & \\
$\vdash_{\s_1}$ & $q_0$ & $(\{(\varepsilon, @), {(1, *)}\},$ & $1)$ & $a$ \\
$\vdash_{\s_1}$ & $q_0$ & $(\{(\varepsilon, @), (1,*), {(11, *)}\},$ & $11)$ & $a$ \\
$\vdash_{\s_2}$ & $q_1$ & $(\{(\varepsilon, @), (1, *), (11,*), {(111, \#)}\},$ & $111)$ & $\varepsilon$ \\
$\vdash_{\s_3}$ & $q_1$ & $(\{(\varepsilon, @), (1,*), {(11, *)}, (111, \#)\},$ & $11)$ & $\varepsilon$ \\
$\vdash_{\s_4}$ & $q_1$ & $(\{(\varepsilon, @), (1, *), (11,*), {(111, \#)}\},$ & $1)$ & $b$ \\
$\vdash_{\s_4}$ & $q_1$ & $(\{(\varepsilon, @), (1, *), (11,*), {(111, \#)}\},$ & $\varepsilon)$ & $b$ \\
$\vdash_{\s_5}$ & $q_2$ & $(\{(\varepsilon, @), (1, *), (11,*), {(111, \#)}\},$ & $1)$ & $\varepsilon$ \\
$\vdash_{\s_6}$ & $q_2$ & $(\{(\varepsilon, @), (1, *), (11,*), {(111, \#)}\},$ & $11)$ & $c$ \\
$\vdash_{\s_6}$ & $q_2$ & $(\{(\varepsilon, @), (1, *), (11,*), {(111, \#)}\},$ & $111)$ & $c$ \\
$\vdash_{\s_7}$ & $q_3$ & $(\{(\varepsilon, @), (1, *), (11,*), {(111, \#)}\},$ & $11)$ & $\varepsilon$  \\
$\vdash_{\s_8}$ & $q_3$ & $(\{(\varepsilon, @), (1, *), (11,*), {(111, \#)}\},$ & $1)$ & $d$ \\
$\vdash_{\s_8}$ & $q_3$ & $(\{(\varepsilon, @), (1, *), (11,*), {(111, \#)}\},$ & $\varepsilon)$ & $d$ \\
$\vdash_{\s_9}$ & $q_4$ & $(\{(\varepsilon, @), (1, *), (11,*), {(111, \#)}\},$ & $\varepsilon)$ & $\varepsilon$ \\
\hline
\end{tabular}
\caption{Sequence of transitions to accept $a^2b^2c^2d^2$ in Example~\ref{eg:ABCD}.\label{table:ABCD}}
\end{table}

The tree stack after having read $a^2b$ using transitions $\s_1^2\s_2 \s_3 \s_4$ is depicted in Fig.~\ref{fig:EGabcdTREE}.
\end{example}

\begin{remark}[Tree-label alphabet is non-empty]\label{rmk:node-label-non-empty}
Without loss of generality we may assume the tree-label alphabet $C$ is  non-empty for any TSA.
That is, if a TSA is given where $C=\emptyset$ (in which case the tree-stack is  $(\varepsilon,@)$ throughout every run and the language of the TSA is regular),
we can modify its  description by setting $C=\{\#\}$,  adding a new start state $q_0'$ and two transitions  \[(q_{0}', {\varepsilon}, \equals(@), \push_1(\#), q_{0}'),   \hspace{1cm} (q_{0}', \varepsilon, \equals(\#), \down, q_{0})\]
so that every accepting run must first follow these transitions. The modified TSA accepts the same language as the original so, as claimed, we may always assume
$C\neq \emptyset$.    
\end{remark}

A non-empty run of the TSA $\mathcal A$ has the form $\tau_1 \cdots \tau_r$, for some transitions $\tau_j \in \delta, j \in [1,r].$
Denote by $(\xi_j,\rho_j)$ the tree stack after performing $\tau_1\cdots \tau_j$,  
and $(\xi_0,\rho_0) = (\{\varepsilon,@\},\varepsilon)$ the tree stack before any transitions are performed.

\begin{definition}[Degree]
\label{defn:degree}
If $\delta$ is the set of  transition rules for a TSA $\mathcal A$, let $\Delta=\{i\in \N_+ \mid \exists c\in C, \push_i(c)\in \delta\}$. This means that in any tree stack $(\xi,\rho)$ built  by $\mathcal A$ whilst reading any input, $\dom(\xi)\subseteq \Delta^*$, so the out-degree of  any vertex of $\xi$ is at most
$|\Delta|$.
 Define the 
 \emph{degree} of $\mathcal A$ to be $\degree(\mathcal A)=|\Delta|$.
\end{definition}
Since $\delta$ is finite, $\degree(\mathcal A)$ is a non-negative integer. 
Recall (Definition~\ref{defn:TreeWithLabels})
that we do not require  nodes $\nu n$ in the final tree of a run to have a sibling $\nu (n-1)$, so it is possible that $\Delta$ contains integers larger than  $\degree(\mathcal A)$. We make the following assumption.

\begin{remark}\label{rmk;WLOGconsecutiveChildren}
 Without loss of generality 
 we may assume \[\max\{i\in \N_+ \mid \exists c\in C, \push_i(c)\in \delta\}=\degree(\mathcal A),\] 
since if not, there exists $j\in \N_+$ so that $\mathcal A$ has an instruction $\push_{j}(c)$ for some $c\in\CCC$ but no instruction $\push_{j-1}(c')$ for any $c'\in\CCC$. 
Then replacing all rules $\push_j,\up_j$ in $\delta$ 
by $\push_{j-1},\up_{j-1}$ neither changes the language accepted nor affect the number of times a node is visited from below.
\end{remark}

\begin{definition}[Below/above] \label{defn:below-above}
In a rooted tree $T$ with vertex $\nu$, we say a vertex  is \emph{below} $\nu$ if it lies in the connected component of $T\setminus \{\nu\}$ containing the root, and \emph{above} $\nu$ if it is not below (so $\nu$ itself is considered to be \emph{above} $\nu$ by this definition). Equivalently, $\mu$ is below $\nu$ if $\nu$ is not a prefix of $\mu$, and above $\nu$  if $\nu$ is a prefix of $\mu$.\end{definition}

The next restriction puts a bound on the number of times the pointer can move a step in the direction away from the root (by a $\push_n$ or $\up_n$ instruction) at a given vertex.

\begin{definition}[Visited from below]
Let $\nu'\in \N_+^*$, $n\in\N_+$, $\nu=\nu' n$, and suppose that some run $\tau_1 \cdots \tau_r$  of a TSA  $\mathcal A$ produces a tree stack which includes the vertex with address $\nu$. We say that $\nu$ is 
\emph{visited from below} by $\tau_j$
if $\tau_j=(q,x,p,\push_n(c),q')$, $c\in\CCC$ or  $\tau_j=(q,x,p,\operatorname{up}_n,q')$ is applied when  the pointer is at $\nu'$.
\end{definition}
\begin{definition}[$k$-restricted]
 We say the run  $\tau_1 \cdots \tau_r$ is \emph{$k$-restricted} if for each $\nu\in \N_+^+$ the number of 
$j\in[1,r]$ such  that 
$\nu$ is visited from below by $\tau_j$ is at most $k$. We say that 
a TSA $\mathcal A$ is \emph{$k$-restricted} if for any word $w\in L(\mathcal A)$
there is an accepting run that is $k$-restricted. \end{definition}
 For example, one may verify that the automaton in Example~\ref{eg:ABCD} accepting the language $\{a^nb^nc^nd^n\mid n\in\N_0\}$ is 2-restricted as follows: each word $a^pb^pc^pd^p$ is accepted by a run $\s_1^p \s_2 \s_3 \s_4^p \s_5 \s_6^p \s_7\s_8^p \s_9$; each vertex address $1^i$ for $i\in[1,p+1]$  is visited from below once with a $\s_1$ or $\s_2$ ($\push_1$), and once with a $\s_5$ or $\s_6$ ($\up_1$).  See Table~\ref{table:ABCD} for the case $p=2$. Note that this language is also $2$-\MCF\ (Example~\ref{abcd}).  
 As another example, one may verify that every context-free language is accepted by a TSA that is 1-restricted (see \cite[Example 3.21]{KS}).
 
We now state the key result for  $k$-restricted tree stack automata due to  Denkinger.
\begin{theorem}[Theorem 4.12, \cite{Denkinger}]$\label{equ}$
Fix an alphabet $\Sigma$. Let $L \subseteq \Sigma^*$ and $k \in \N_{+}$. The following are equivalent:
\begin{\arabiclist}
\item 
 there is a $k$-multiple context-free grammar ${G}$ which generates  $L$;
\item 
there is a $k$-restricted tree stack automaton $\mathcal{A}$ which recognises  $L$.
\end{\arabiclist}
\end{theorem}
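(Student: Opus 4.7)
The plan is to prove the equivalence by providing constructions in both directions, with the key invariant that the rank of non-terminals in the grammar corresponds exactly to the number of times a node in the tree stack is visited from below. I would state up front the correspondence: a non-terminal $A$ of rank $m$ with value $(u_1,\dots,u_m)$ will correspond to a subtree of the tree stack whose root has been visited from below exactly $m$ times, with the $i$th component $u_i$ recording the input consumed between the $(i-1)$th descent from the root and the $i$th ascent out of it.

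For the direction $(1) \Rightarrow (2)$, given a $k$-MCFG $G=(N,\Sigma,R,S)$, I would construct a TSA that simulates a bottom-up derivation. The tree stack mirrors the derivation tree of $G$. For a rule $A(s_1,\dots,s_m)\leftarrow B_1(\vec{x}^1),\dots,B_n(\vec{x}^n)$, the TSA proceeds by: first, for each child $B_i$, doing a $\push_j(c)$ (with $c$ encoding $B_i$ and $j$ chosen to be a fresh child index) and recursively simulating the derivation of $B_i$; then returning to the parent vertex (which stores $A$ together with the ordered list of child positions). The components $s_1,\dots,s_m$ of $A$ are read off by interleaving visits to the children in the order dictated by $s_1,\dots,s_m$: each time the pointer climbs to a child $B_i$ via $\up_j$, it fetches the next component of $B_i$'s value by consuming the relevant input letters interspersed with further $\up_j/\down$ moves inside $B_i$'s subtree. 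The terminal letters appearing in the $s_l$ are consumed by the transitions on the way up and down. Since the rank of each non-terminal is at most $k$, each vertex is visited from below at most $k$ times, so the resulting run is $k$-restricted.

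For the direction $(2) \Rightarrow (1)$, given a $k$-restricted TSA $\mathcal{A}=(Q,C,\Sigma,q_0,\delta,Q_f)$, I would define non-terminals as tuples encoding the information visible at the boundary of a subtree. For each vertex $\nu$ of a tree stack built by $\mathcal{A}$ during some accepting run and each integer $m\le k$, consider the $m\le k$ occasions where the pointer enters $\nu$ from below and eventually leaves below again: each visit is characterised by the pair of states $(p_i,p_i')$ at entry and exit, together with the tree-label $c$ at $\nu$ on entry (which could change by $\set$ instructions between visits). I would take the non-terminals to be tuples of such $(p_i,p_i',c_i,c_i')$ data together with the child index of $\nu$, giving a non-terminal of rank equal to the number of visits. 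The components of the non-terminal record the sub-words of input consumed between consecutive descents/ascents at $\nu$. The rules of the MCFG come from composing the subtree information for children of $\nu$ with the transitions in $\delta$ applied at $\nu$ itself; the start non-terminal $S$ corresponds to the root being visited once with initial and final states respectively in $\{q_0\}$ and $Q_f$. By the $k$-restriction hypothesis every vertex is visited from below at most $k$ times, so all non-terminals have rank at most $k$.

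The main obstacle is the second direction: one must verify that the combinatorial data attached to a subtree is actually finite (hence admissible as a non-terminal alphabet) and that the rules faithfully reconstruct every accepting run. In particular, one has to check that between two consecutive visits to a vertex $\nu$ from below, the label of $\nu$ can only be changed finitely many times by $\set$ instructions during subtree computations above $\nu$, and that intervening $\id$-moves and $\down$-moves can be absorbed without inflating the rank. A careful bookkeeping argument is needed to show the $k$-restriction lifts precisely to the rank bound, and conversely that the rank-$k$ grammar admits a $k$-restricted simulation, so this bijective correspondence at the level of ranks is the crux of the theorem.
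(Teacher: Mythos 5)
The paper does not prove this statement: it is imported verbatim as Theorem~4.12 of Denkinger's paper and used as a black box, so there is no in-paper proof to compare against. Judged on its own terms, your proposal is a reasonable outline of the standard argument (bottom-up simulation of derivations by tree stacks in one direction, a ``boundary data'' / generalised triple construction in the other, with rank matched to the number of visits from below), but as written it is a sketch with two genuine gaps rather than a proof.

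First, in the direction $(1)\Rightarrow(2)$ you tacitly assume that the components of each child $B_i$ are consumed in the order $x_1^i, x_2^i, \dots, x_{k_i}^i$ as one scans $s_1\cdots s_m$. The definition of an MCFG places no such ordering constraint, while your simulation forces the $j$th visit from below to $B_i$'s subtree to yield $B_i$'s $j$th component (since that subtree is built incrementally and its own rule fixes the production order of its components). You therefore need to first pass to the non-permuting (monotone) normal form for $k$-MCFGs, and to note that this normal form preserves the rank bound $k$; without that step the construction as described fails on grammars whose rules permute a child's arguments.

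Second, you explicitly defer the crux of $(2)\Rightarrow(1)$ (``a careful bookkeeping argument is needed''), so that direction is not established. Two remarks on what you do say there: the finiteness worry about $\set$ instructions is a red herring --- the non-terminal only needs to record the label of $\nu$ at each entry and each exit, i.e.\ at most $k$ tuples from $Q\times Q\times C\times C$, which is finite regardless of how often the label changes in between; the real work is in showing that the rules, which must interleave the components of the children's non-terminals according to the order in which the pointer visits the children while stationed at $\nu$, generate exactly the words accepted by $k$-restricted runs, and that every accepted word admits such a run decomposition. That soundness-and-completeness argument is the substance of Denkinger's proof and is absent here.
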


In the following sections, we use this characterisation  to prove our results.

\section{Preliminary results: \good, proper, \dendron, factorisation, \historia\ vector, \UpSet, single swap}\label{sec:prelims}

  \begin{remark}[Pointer returns to the root]\label{rmk:finish-root}
 Without loss of generality (by modifying the $k$-restricted TSA description if required) we may assume $\mathcal A$ accepts inputs only if it finishes in a state in $Q_f$ with the pointer pointing to the root. 
Such a modification does not affect the number of times a node is visited from below by a run, since we simply add additional $\down$ transitions.
 \end{remark}

\begin{definition}[\Good]\label{def:standardised} A tree stack automaton $\mathcal{A}=(Q,\CCC, \Sigma, q_0,  \delta, Q_f)$ is called \emph{\good} if the following holds:
if $ (q_1,\varepsilon,p_1,f_1,q_2), (q_2,\varepsilon,p_2,f_2,q_3) \in \delta$ and $f_1,f_2\in\{\id,\set(x)\mid x\in \CCC\}$  then $(q_1,\varepsilon,p_3,f_3,q_3) \in \delta$ for the values listed in Table~\ref{table:standardising}.


\begin{table}[!h]
\begin{tabular}{|l|l|} 
\hline
 $p_1=p_2=\true$ & $f_3=f_2\circ f_1, p_3=p_2$\\
\hline
$p_1=\true, p_2=\equals(c), f_1=\id,  f_2=\id$ & $f_3=f_1, p_3=p_2$\\
\hline
$p_1=\true, p_2=\equals(c), f_1=\set(c),  f_2=\id$ & $f_3=f_1, p_3=p_1$\\
\hline
$p_1=\true, p_2=\equals(c), f_1=\id,   f_2\neq \id$ & $f_3=f_2, p_3=p_2$\\
\hline
$p_1=\true, p_2=\equals(c), f_1=\set(c),   f_2\neq \id$ & $f_3=f_2, p_3=p_1$\\
\hline
$p_1=\equals(c), p_2=\true, f_2=\id$ & $f_3=f_1, p_3=p_1$\\
\hline
$p_1=\equals(c), p_2=\equals(c),  f_1=\id\text{ or }\set(c), \forall f_2$  & $f_3=f_2, p_3=p_1$\\
\hline
$p_1=\equals(c), p_2=\equals(d),  c\neq d, f_1=\set(d),  f_2=\id$ & $f_3=f_1, p_3=p_1$\\
\hline 
$p_1=\equals(c), p_2=\equals(d),  c\neq d, f_1=\set(d),  f_2\neq\id$ & $f_3=f_2, p_3=p_1$\\
\hline
\end{tabular}\caption{Values in Definition~\ref{def:standardised}. \label{table:standardising}}
\end{table}

\end{definition}

In other words, if there are two $\varepsilon$ transitions that do not change the position of the pointer, which can be applied one after the other, 
then there is a single transition which achieves the same outcome.

\begin{remark}\label{rmk:standardised}
If $\mathcal A$ is not \good, it is clear that we can add transitions $(q_1,\varepsilon,p_3,f_3,q_3)$ without changing the set of words accepted (reminiscent of the method of Benois \cite{ben69}), so without loss of generality we may assume all TSA are \good. 
See for instance Fig.~\ref{fig:making_\good} which illustrates adding a transition to satisfy the final row of Table~\ref{table:standardising}.

\begin{figure}[H]
\begin{center}
\begin{tikzpicture}[shorten >=1pt,node distance=3.1cm,on grid,auto]

  \node[state]  (1)                      {$q_1$};
  \node[state]          (2) [right =of 1] {$q_2$};
  \node[state]          (3) [ right=of 2] {$q_3$};

  \path[->] (1) edge              node        { $\begin{array}{c}\varepsilon, \equals(c) ,\\ \set(d)\end{array}$} (2)
                   edge [bend right,dashed] node   [below]     {$\begin{array}{c}\varepsilon, \equals(c) ,\\ \set(e)\end{array}$} (3);
  \path[->] (2) edge              node        {$\begin{array}{c}\varepsilon, \equals(d),\\ \set(e)\end{array}$} (3) ;
\end{tikzpicture}
\end{center}\caption{Adding a transition to a non-\good\ TSA.\label{fig:making_standardised}}
\end{figure}
\end{remark}

Note that the $2$-TSA given in Example~\ref{eg:ABCD} is \good\ and accepts only when the pointer is at the root.

\begin{definition}[Proper run] Let  $\mathcal{A}=(Q,\CCC, \Sigma, q_0,  \delta, Q_f)$ be a tree stack automaton.  
We call a valid run     $\mathcal{R}=\tau_1 \cdots \tau_r$  \emph{proper}
    if it does not contain consecutive transitions $\tau_j= (q_1,\varepsilon,p_1,f_1,q_2)$, $\tau_{j+1}=(q_2,\varepsilon,p_2,f_2,q_3)$ with $f_1,f_2\in\{\id,\set(x)\mid x \in C\}$
    (that is, reading $\varepsilon$ with the pointer in the same position for two steps).
\end{definition}

It is clear that if $\mathcal A$ is \good, then every $w\in L(\mathcal A)$ is accepted by some proper run. 
Let $L$ be $k$-MCF language. Theorem~\ref{equ} and Remark~\ref{rmk:standardised} imply that there exists a \good\ $k$-restricted tree stack automaton $\mathcal{A}$ accepting $L$. We fix such a TSA which we denote as
$\mathcal{A}=(Q,\CCC, \Sigma, q_0,  \delta, Q_f)$  in what follows.

\begin{definition}[\Dendron]
Let $\mathcal A$ be a TSA, $\mathcal R$ a proper run of $\mathcal A$ accepting a word $w$, and  $(T_0,\varepsilon)$ the tree stack with pointer at the root after $\mathcal R$ has been performed. Call $T_0$ the \emph{\dendron} corresponding to $\mathcal R$.
\end{definition}

The next technical definition gives a notation to record states and tree-labels associated with a pointer visiting a node from below during a run of a TSA.

\begin{definition}[\Updownvector\ at at node $\nu$]\label{deflm} 
Let $\mathcal A$ be a TSA and $w \in L(\mathcal A)$ be a non-empty word  accepted by the proper run $\mathcal{R}=\tau_1 \cdots \tau_r$, where  for $1\le i\le r$, 
    $\t_i=(q_i, a_i, p_i, f_i,q'_i)$,  $q_i,q_i'\in Q$, $a_i\in \Sigma_{\varepsilon}$,  $p_i$ is a predicate and $f_i$ is a partial  function.
Let
   $(\xi_j,\rho_j)$ denote the tree stack after applying $\tau_1\cdots \tau_j$ for $1\le i\le r$,  $(\xi_0,\rho_0)=(\varepsilon,@)$,  and $T=\xi_r$ denote the \dendron\ corresponding to $\cR$. 
 Suppose $\nu, \nu'$ are addresses in $T$ with  $\nu'\in \N^*$ and $\nu = \nu'n$,  such that $\nu$ is visited from below exactly $s \in [1,k]$ times by $\mathcal R$.  Recall that $\nu$ has exactly one node below it in $T$, which is $\nu'$, so there exists a sequence of indices 
      $1\leq \ell_1\leq m_1 < \ell_2 \leq m_2 < \dots < \ell_s \leq m_s < r$ 
    such that 
 \begin{\arabiclist}
 \item\label{it:uptov} $\rho_{x-1}=\nu', \rho_{x}=\nu$  if and only if $x=\ell_j$ for some $j\in[1,s]$;  
\item\label{it:downfromv} $\rho_{y}=\nu, \rho_{y+1}=\nu'$, if and only if $y=m_j$ for some $j\in[1,s]$; 
\item \begin{alphlist}[(a)]
    \item $f_{\ell_1}=\push_n(c_{\ell_1})$, where $c_{\ell_1}\in \CCC$,
\item $f_{\ell_j}= \up_n$ for $j\in[2,s]$ and
\item $f_{m_j+1} = \down$ for $j\in[1,s]$.
 \end{alphlist}
\end{\arabiclist} 
Then $(\ell,m)_{\mathcal R, \nu} = (\ell_1,m_1,\dots, \ell_s,m_s)$ is called the \emph{\updownvector} of $w$ at $\nu$ with respect to $\mathcal R$.\\[1em]
\end{definition}

\begin{remark}\label{rem:abovev} We make the following observations.
  \begin{enumerate}
  \item The $\t_{\ell_j}$ transitions of  the \updownvector\ $(\ell,m)_{\mathcal R, \nu}$ of the proper run $\cR$ at $\nu$ are precisely those that
move the pointer \emph{up} from $\nu'$ to $\nu$, via either push or up partial functions.
The $\t_{m_j}$ transitions of  the \updownvector\ are precisely those that, after their application, result in the pointer being at $\nu$ immediately before it
is moved \emph{down} from $\nu$ to $\nu'$, via the down partial function of $\t_{m_{j+1}}$. Since,
by parts \ref{it:uptov} and \ref{it:downfromv} of the definition, there
are no other transitions of $\cR$ with either of these properties, it follows that the pointer remains above the vertex $\nu$ (according to
Definition \ref{defn:below-above}) 
from the start to finish  of  subsequences $\t_{\ell_j+1}\cdots \t_{m_j}$ of $\cR$, for $1\le j\le s$, and remains below $\nu$  at all other times. 
\item It is clear that $m_j < \ell_{j+1}$ for $j\in[1,s-1]$ since a single transition cannot simultaneously move the pointer up and down.
\item If $\ell_1=1$ then $\nu'=\varepsilon$ is the root node.
\item   If $\ell_j=m_j$ for some $j\in[1,s]$,  the transition $\tau_{\ell_j}$ moves the pointer from $\nu'$ to $\nu$ then $\tau_{m_j+1}=\tau_{\ell_j+1}$  immediately moves the pointer back down to $\nu'$.
\end{enumerate}
\end{remark}

For the purposes of the following definitions,  let $(l,m)_{\cR,\nu}=(\ell_1,m_1,\dots, \ell_s,m_s)$ be the \updownvector\  
  of a non-empty word $w \in L$, with respect to a proper run $\mathcal R$ of $\mathcal A$ at a (non-root) vertex $\nu$ of $T$.
\begin{definition}[$\nu$-factorisation]\label{defnuf}
Define the \emph{$\nu$-factorisation} of $w$, with respect to $\mathcal R$, to be \[[w]_{\mathcal R, \nu}=w_0u_1w_1 \cdots u_s w_s,\] where 
\begin{\arabiclist}
\item $w_0$ is the prefix of $w$ that is read by $\tau_1 \cdots \tau_{\ell_1}$, 
\item if $\ell_j=m_j$ then  $u_j=\varepsilon$, else $u_j$  is the factor of $w$ that is read by $\tau_{\ell_j+1} \cdots \tau_{m_j}$ for $j \in[1,s]$, 
\item $w_j$ is the factor of $w$ that is read by $\tau_{m_j+1} \cdots \tau_{\ell_{(j+1)}}$ for $j \in [1,s-1]$,
\item $w_s$ is the factor of $w$ that is read by $\tau_{m_s+1} \cdots \tau_r.$
\end{\arabiclist}
\end{definition}

\begin{remark} From Remark \ref{rem:abovev},  the factors $w_i$ in a $\nu$-factorisation represent letters of the input word which (in the terminology of  Definition~\ref{defn:below-above}) are read by transitions where the pointer starts or finishes \emph{at or below} the vertex $\nu'$  of the tree stack (as it is being constructed), and the factors $u_i$ represent letters read when the pointer starts and finishes \emph{at or above} $\nu$. 
For example, transitions $\tau_1,\dots,\tau_{\ell_1-1}$ move the pointer from the root around nodes which are \emph{below} $\nu$, $\tau_{\ell_1}$ starts with the pointer below $\nu$ and finishes with the pointer at $\nu$, 
if $\ell_1\neq m_1$, the transitions $\tau_{\ell_1+1} \cdots    \tau_{m_1}$ move the pointer around nodes \emph{above} $\nu$
and $\tau_{m_1+1}$ finishes with the pointer below $\nu$. 
 See Fig.~\ref{fig:nu-fact} for a visual representation. 

\begin{figure}[!htb]
  \[\begin{array}{c|c|c|c|c|c|cccccc}
  \tau_1  \cdots  \tau_{\ell_1} & \tau_{\ell_1+1} \cdots    \tau_{m_1} &  \tau_{m_1+1}  \cdots   \tau_{\ell_2} & \cdots 
  &  \tau_{m_{(s-1)}+1}  \cdots 
    \tau_{\ell_s} &  \tau_{\ell_s+1}  \cdots   \tau_{m_s} &  \tau_{m_s+1}  \cdots   \tau_r\\
   w_0 &  u_1 &w_ 1 &\cdots & w_{s-1}   & u_s & w_s
  \end{array}
  \]

  \caption{Schematic of a $\nu$-factorisation of $w$;  $u_j$ factors are possibly empty.  \label{fig:nu-fact}}
  \end{figure}
\end{remark}

The idea for the next definition is to record, at each non-root vertex of the tree stack, a list of 
states and labels that occur each time the pointer is at that vertex, during a proper run accepting a word $w$. This list will be encoded as a
``\emph{\historyvec}'' at the vertex.

  \begin{definition}[\Historyvec]\label{defn:HistVect}
  Let $\nu$ be a non-root vertex,  $(\ell,m)_{\mathcal R, \nu} = (\ell_1,m_1,\dots, \ell_s,m_s)$ be  the \updownvector\ of $w$ at $\nu$ with respect to a proper run $\mathcal R=\tau_1 \cdots \tau_r$  where $\tau_z =(q_z, a_z, p_z , f_z, q'_z) $,   $q_z, q'_z \in Q$  for $z\in[1,r]$.   
  
Define the \emph{\historyvec} associated to $w, \mathcal R$ and $\nu$ to be the $2\times (2s)$ array
\begin{equation*}
\harry_{w, \mathcal{R},\nu}=  
\begin{pmatrix}
c_1^u & c_1^d& \cdots & c_s^u& c_s^d\\
q_1^u & q_1^d & \cdots & q_s^u & q_s^d
\end{pmatrix}
\end{equation*}
where for $j\in[1,s]$:
 \begin{\arabiclist} 
 \item  $c_j^u$ is the  label of the address $\nu$ in the tree stack $(\xi_{\ell_j}, \rho_{\ell_j}=\nu)$  when the pointer has just moved from $\nu'$ up to  $\nu$ via the transition $\tau_{\ell_j}$,
  \item  $c_j^d$ is the  label of the address $\nu$ in the tree stack $(\xi_{m_j}, \rho_{m_j}=\nu)$  just before the pointer moves from $\nu$ down to  $\nu'$ by the transition $\tau_{m_j+1}$, 
\item $q_j^u=q'_{\ell_j}$  is the state of the TSA when the pointer has just moved from $\nu'$ up to  $\nu$,
\item $q_j^d=q'_{m_j}$  is the state of the TSA just before the pointer moves from $\nu$ down to  $\nu'$.
\end{\arabiclist}

\end{definition}

The superscripts $u,d$ stand for ``up'' and ``down'', to indicate when these labels and states occur.

\begin{example}[\updownvector, $\nu$-factorisation, \historyvec]
\label{moreDetailEG}
Consider a TSA $\mathcal A$ which admits a run $\mathcal R = \s_1 \cdots \s_{14}$ where 
  \[\begin{array}{llllll}
  \s_1= (q_0,a,\true,\push_1(c_1),q_1),  & \s_6= (q_5,d,\true,\down,q_6),&   \s_{11}= (q_3,f,\true,\push_1(c_7),q_3),
  \\
  \s_2= (q_1,b,\true,\push_1(c_2),q_2), &  \s_7= (q_6,e,\true,\push_2(c_5),q_7),&     \s_{12}= (q_3,\varepsilon,\true,\down,q_0), 
          \\
      
      \s_{3}= (q_2,\varepsilon,\true,\set(c_3),q_3), & \s_8= (q_7,\varepsilon,\true,\down,q_4),&   \s_{13}= (q_0,g,\true,\down,q_7),
    \\
  \s_4= (q_3,c,\true,\push_2(c_4),q_4), &\s_9= (q_4,\varepsilon,\true,\up_1,q_4), &   \s_{14}= (q_7,h,\true,\down,q_8). 
      \\
  \s_5= (q_4,\varepsilon,\true,\down,q_5), & \s_{10}= (q_4,\varepsilon,\true,\set(c_6),q_3),   
  \end{array}\]
 accepting the word $abcdefgh$ whilst constructing the tree stack   shown in Fig.~\ref{fig:TreeStackEG2DETAILS}.

\begin{figure}[h]   

\begin{tabular}{|c|c|c|}
\hline
\begin{tikzpicture}[scale=.8]
\draw (2,0) node {$\bullet$};
\draw (2.1,0.3) node {$@$};
\draw[decorate,ultra thick,->] (1,0) -- (1.5,0);
\draw (1.85,-0.15) node {\color{cyan}\tiny$\varepsilon$};
\end{tikzpicture}

&
$\sigma_1$:
\begin{tikzpicture}[scale=.8]
\draw[decorate]  (1,1) -- (2,0);
\draw (2,0) node {$\bullet$};
\draw (1,1) node {$\bullet$};
\draw (2.1,0.3) node {$@$};
\draw (1,1.3) node {$c_1$};
\draw[decorate,ultra thick,->] (0,1) -- (0.5,1);
\draw (0.85,0.85) node {\color{cyan}\tiny$1$};
\draw (1.85,-0.15) node {\color{cyan}\tiny$\varepsilon$};
\end{tikzpicture}

& $\sigma_2$\tiny$=\tau_{\ell_1}$\normalsize \begin{tikzpicture}[scale=.8]
\draw[thick,magenta]  (0,2) --  (1,1);
\draw[decorate]   (1,1) -- (2,0);
\draw (0,2) node {$\bullet$};
\draw (1,1) node {$\bullet$};
\draw (2,0) node {$\bullet$};
\draw (2.1,0.3) node {$@$};
\draw (0,2.3) node {$c_2$};
\draw (1,1.3) node {$c_1$};
\draw[decorate,ultra thick,->] (-1,2) -- (-.5,2);
\draw (-.15,1.85) node {\color{cyan}\tiny$11$};
\draw (0.85,0.85) node {\color{cyan}\tiny$1$};
\draw (1.85,-0.15) node {\color{cyan}\tiny$\varepsilon$};
\end{tikzpicture}

\\
\hline $\sigma_3$:\begin{tikzpicture}[scale=.8]
\draw[thick,magenta]  (0,2) --  (1,1);
\draw[decorate]   (1,1) -- (2,0);
\draw (0,2) node {$\bullet$};
\draw (1,1) node {$\bullet$};
\draw (2,0) node {$\bullet$};
\draw (2.1,0.3) node {$@$};
\draw (0,2.3) node {{\color{red}$c_3$}};
\draw (1,1.3) node {$c_1$};
\draw[decorate,ultra thick,->] (-1,2) -- (-.5,2);
\draw (-.15,1.85) node {\color{cyan}\tiny$11$};
\draw (0.85,0.85) node {\color{cyan}\tiny$1$};
\draw (1.85,-0.15) node {\color{cyan}\tiny$\varepsilon$};
\end{tikzpicture} 

& $\sigma_4$:\begin{tikzpicture}[scale=.8]
\draw[thick,magenta]  (0,2) --  (1,1);
\draw[decorate]   (1,1) -- (2,0);
\draw[decorate] (0,2) -- (1,3);
\draw (0,2) node {$\bullet$};
\draw (1,1) node {$\bullet$};
\draw (2,0) node {$\bullet$};
\draw (1,3) node {$\bullet$};
\draw (2.1,0.3) node {$@$};
\draw (0,2.3) node {{$c_3$}};
\draw (1,1.3) node {$c_1$};
\draw (1.1,3.3) node {$c_4$};
\draw[decorate,ultra thick,->] (0,3) -- (0.5,3);
\draw (-.15,1.85) node {\color{cyan}\tiny$11$};
\draw (0.85,0.85) node {\color{cyan}\tiny$1$};
\draw (1.15,2.85) node {\color{cyan}\tiny$112$};
\draw (1.85,-0.15) node {\color{cyan}\tiny$\varepsilon$};
\end{tikzpicture}

& $\sigma_5$ \tiny $=\tau_{m_1}$ \normalsize \begin{tikzpicture}[scale=.8]
\draw[thick,magenta]  (0,2) --  (1,1);
\draw[decorate]   (1,1) -- (2,0);
\draw[decorate] (0,2) -- (1,3);
\draw (0,2) node {$\bullet$};
\draw (1,1) node {$\bullet$};
\draw (2,0) node {$\bullet$};
\draw (1,3) node {$\bullet$};
\draw (2.1,0.3) node {$@$};
\draw (0,2.3) node {{$c_3$}};
\draw (1,1.3) node {$c_1$};
\draw (1.1,3.3) node {$c_4$};
\draw[decorate,ultra thick,->] (-1,2) -- (-.5,2);
\draw (-.15,1.85) node {\color{cyan}\tiny$11$};
\draw (0.85,0.85) node {\color{cyan}\tiny$1$};
\draw (1.15,2.85) node {\color{cyan}\tiny$112$};
\draw (1.85,-0.15) node {\color{cyan}\tiny$\varepsilon$};
\end{tikzpicture}

\\
\hline $\sigma_6$ \tiny $=\tau_{m_1+1}$ \normalsize  \begin{tikzpicture}[scale=.8]
\draw[thick,magenta]  (0,2) --  (1,1);
\draw[decorate]   (1,1) -- (2,0);
\draw[decorate] (0,2) -- (1,3);
\draw (0,2) node {$\bullet$};
\draw (1,1) node {$\bullet$};
\draw (2,0) node {$\bullet$};
\draw (1,3) node {$\bullet$};
\draw (2.1,0.3) node {$@$};
\draw (0,2.3) node {{$c_3$}};
\draw (1,1.3) node {$c_1$};
\draw (1.1,3.3) node {$c_4$};
\draw[decorate,ultra thick,->] (0,1) -- (0.5,1);
\draw (-.15,1.85) node {\color{cyan}\tiny$11$};
\draw (0.85,0.85) node {\color{cyan}\tiny$1$};
\draw (1.15,2.85) node {\color{cyan}\tiny$112$};
\draw (1.85,-0.15) node {\color{cyan}\tiny$\varepsilon$};
\end{tikzpicture}

&$\sigma_7$:\begin{tikzpicture}[scale=.8]
\draw[thick,magenta]  (0,2) --  (1,1);
\draw[decorate]   (1,1) -- (2,0);
\draw[decorate] (1,1) -- (2,2);
\draw[decorate] (0,2) -- (1,3);
\draw (0,2) node {$\bullet$};
\draw (1,1) node {$\bullet$};
\draw (2,2) node {$\bullet$};
\draw (2,0) node {$\bullet$};
\draw (1,3) node {$\bullet$};
\draw (2.1,0.3) node {$@$};
\draw (0,2.3) node {{$c_3$}};
\draw (2.1,2.3) node {$c_5$};
\draw (1,1.3) node {$c_1$};
\draw (1.1,3.3) node {$c_4$};
\draw[decorate,ultra thick,->] (1,2) -- (1.5,2);
\draw (-.15,1.85) node {\color{cyan}\tiny$11$};
\draw (0.85,0.85) node {\color{cyan}\tiny$1$};
\draw (2.15,1.85) node {\color{cyan}\tiny$12$};
\draw (1.15,2.85) node {\color{cyan}\tiny$112$};
\draw (1.85,-0.15) node {\color{cyan}\tiny$\varepsilon$};
\end{tikzpicture}

&
$\sigma_8$:\begin{tikzpicture}[scale=.8]
\draw[thick,magenta]  (0,2) --  (1,1);
\draw[decorate]   (1,1) -- (2,0);
\draw[decorate] (1,1) -- (2,2);
\draw[decorate] (0,2) -- (1,3);
\draw (0,2) node {$\bullet$};
\draw (1,1) node {$\bullet$};
\draw (2,2) node {$\bullet$};
\draw (2,0) node {$\bullet$};
\draw (1,3) node {$\bullet$};
\draw (2.1,0.3) node {$@$};
\draw (0,2.3) node {{$c_3$}};
\draw (2.1,2.3) node {$c_5$};
\draw (1,1.3) node {$c_1$};
\draw (1.1,3.3) node {$c_4$};
\draw[decorate,ultra thick,->] (0,1) -- (0.5,1);
\draw (-.15,1.85) node {\color{cyan}\tiny$11$};
\draw (0.85,0.85) node {\color{cyan}\tiny$1$};
\draw (2.15,1.85) node {\color{cyan}\tiny$12$};
\draw (1.15,2.85) node {\color{cyan}\tiny$112$};
\draw (1.85,-0.15) node {\color{cyan}\tiny$\varepsilon$};
\end{tikzpicture}

\\
\hline

$\sigma_9$\tiny $=\tau_{\ell_2}$ \normalsize \begin{tikzpicture}[scale=.8]
\draw[thick,magenta]  (0,2) --  (1,1);
\draw[decorate]   (1,1) -- (2,0);
\draw[decorate] (1,1) -- (2,2);
\draw[decorate] (0,2) -- (1,3);
\draw (0,2) node {$\bullet$};
\draw (1,1) node {$\bullet$};
\draw (2,2) node {$\bullet$};
\draw (2,0) node {$\bullet$};
\draw (1,3) node {$\bullet$};
\draw (2.1,0.3) node {$@$};
\draw (0,2.3) node {{$c_3$}};
\draw (2.1,2.3) node {$c_5$};
\draw (1,1.3) node {$c_1$};
\draw (1.1,3.3) node {$c_4$};
\draw[decorate,ultra thick,->] (-1,2) -- (-.5,2);
\draw (-.15,1.85) node {\color{cyan}\tiny$11$};
\draw (0.85,0.85) node {\color{cyan}\tiny$1$};
\draw (2.15,1.85) node {\color{cyan}\tiny$12$};
\draw (1.15,2.85) node {\color{cyan}\tiny$112$};
\draw (1.85,-0.15) node {\color{cyan}\tiny$\varepsilon$};
\end{tikzpicture}

&$\sigma_{10}$:\begin{tikzpicture}[scale=.8]
\draw[thick,magenta]  (0,2) --  (1,1);
\draw[decorate]   (1,1) -- (2,0);
\draw[decorate] (1,1) -- (2,2);
\draw[decorate] (0,2) -- (1,3);
\draw (0,2) node {$\bullet$};
\draw (1,1) node {$\bullet$};
\draw (2,2) node {$\bullet$};
\draw (2,0) node {$\bullet$};
\draw (1,3) node {$\bullet$};
\draw (2.1,0.3) node {$@$};
\draw (0,2.3) node {{{\color{red}$c_6$}}};
\draw (2.1,2.3) node {$c_5$};
\draw (1,1.3) node {$c_1$};
\draw (1.1,3.3) node {$c_4$};
\draw[decorate,ultra thick,->] (-1,2) -- (-.5,2);
\draw (-.15,1.85) node {\color{cyan}\tiny$11$};
\draw (0.85,0.85) node {\color{cyan}\tiny$1$};
\draw (2.15,1.85) node {\color{cyan}\tiny$12$};
\draw (1.15,2.85) node {\color{cyan}\tiny$112$};
\draw (1.85,-0.15) node {\color{cyan}\tiny$\varepsilon$};
\end{tikzpicture} 

& $\sigma_{11}$:\begin{tikzpicture}[scale=.8]
\draw[thick,magenta]  (0,2) --  (1,1);
\draw[decorate]   (1,1) -- (2,0);
\draw[decorate] (1,1) -- (2,2);
\draw[decorate] (-1,3)-- (0,2) -- (1,3);
\draw (0,2) node {$\bullet$};
\draw (1,1) node {$\bullet$};
\draw (2,2) node {$\bullet$};
\draw (2,0) node {$\bullet$};
\draw (-1,3) node {$\bullet$};
\draw (1,3) node {$\bullet$};
\draw (2.1,0.3) node {$@$};
\draw (0,2.3) node {{{$c_6$}}};
\draw (2.1,2.3) node {$c_5$};
\draw (1,1.3) node {$c_1$};
\draw (-1.1,3.3) node {$c_7$};
\draw (1.1,3.3) node {$c_4$};
\draw[decorate,ultra thick,->] (-2,3) -- (-1.5,3);
\draw (-.15,1.85) node {\color{cyan}\tiny$11$};
\draw (0.85,0.85) node {\color{cyan}\tiny$1$};
\draw (2.15,1.85) node {\color{cyan}\tiny$12$};
\draw (-1.15,2.85) node {\color{cyan}\tiny$111$};
\draw (1.15,2.85) node {\color{cyan}\tiny$112$};
\draw (1.85,-0.15) node {\color{cyan}\tiny$\varepsilon$};
\end{tikzpicture}
\\
\hline
 $\sigma_{12}$\tiny $=\tau_{m_2}$ \normalsize \begin{tikzpicture}[scale=.8]
\draw[thick,magenta]  (0,2) --  (1,1);
\draw[decorate]   (1,1) -- (2,0);
\draw[decorate] (1,1) -- (2,2);
\draw[decorate] (-1,3)-- (0,2) -- (1,3);
\draw (0,2) node {$\bullet$};
\draw (1,1) node {$\bullet$};
\draw (2,2) node {$\bullet$};
\draw (2,0) node {$\bullet$};
\draw (-1,3) node {$\bullet$};
\draw (1,3) node {$\bullet$};
\draw (2.1,0.3) node {$@$};
\draw (0,2.3) node {{{$c_6$}}};
\draw (2.1,2.3) node {$c_5$};
\draw (1,1.3) node {$c_1$};
\draw (-1.1,3.3) node {$c_7$};
\draw (1.1,3.3) node {$c_4$};
\draw[decorate,ultra thick,->] (-1,2) -- (-.5,2);
\draw (-.15,1.85) node {\color{cyan}\tiny$11$};
\draw (0.85,0.85) node {\color{cyan}\tiny$1$};
\draw (2.15,1.85) node {\color{cyan}\tiny$12$};
\draw (-1.15,2.85) node {\color{cyan}\tiny$111$};
\draw (1.15,2.85) node {\color{cyan}\tiny$112$};
\draw (1.85,-0.15) node {\color{cyan}\tiny$\varepsilon$};
\end{tikzpicture}

&

$\sigma_{13}$\tiny $=\tau_{m_2+1}$ \normalsize \begin{tikzpicture}[scale=.8]
\draw[thick,magenta]  (0,2) --  (1,1);
\draw[decorate]   (1,1) -- (2,0);
\draw[decorate] (1,1) -- (2,2);
\draw[decorate] (-1,3)-- (0,2) -- (1,3);
\draw (0,2) node {$\bullet$};
\draw (1,1) node {$\bullet$};
\draw (2,2) node {$\bullet$};
\draw (2,0) node {$\bullet$};
\draw (-1,3) node {$\bullet$};
\draw (1,3) node {$\bullet$};
\draw (2.1,0.3) node {$@$};
\draw (0,2.3) node {{{$c_6$}}};
\draw (2.1,2.3) node {$c_5$};
\draw (1,1.3) node {$c_1$};
\draw (-1.1,3.3) node {$c_7$};
\draw (1.1,3.3) node {$c_4$};
\draw[decorate,ultra thick,->] (0,1) -- (0.5,1);
\draw (-.15,1.85) node {\color{cyan}\tiny$11$};
\draw (0.85,0.85) node {\color{cyan}\tiny$1$};
\draw (2.15,1.85) node {\color{cyan}\tiny$12$};
\draw (-1.15,2.85) node {\color{cyan}\tiny$111$};
\draw (1.15,2.85) node {\color{cyan}\tiny$112$};
\draw (1.85,-0.15) node {\color{cyan}\tiny$\varepsilon$};
\end{tikzpicture} 

& 
$\sigma_{14}$:
\begin{tikzpicture}[scale=.8]
\draw[thick,magenta]  (0,2) --  (1,1);
\draw[decorate]   (1,1) -- (2,0);
\draw[decorate] (1,1) -- (2,2);
\draw[decorate] (-1,3)-- (0,2) -- (1,3);
\draw (0,2) node {$\bullet$};
\draw (1,1) node {$\bullet$};
\draw (2,2) node {$\bullet$};
\draw (2,0) node {$\bullet$};
\draw (-1,3) node {$\bullet$};
\draw (1,3) node {$\bullet$};
\draw (2.1,0.3) node {$@$};
\draw (0,2.3) node {$c_6$};
\draw (2.1,2.3) node {$c_5$};
\draw (1,1.3) node {$c_1$};
\draw (-1.1,3.3) node {$c_7$};
\draw (1.1,3.3) node {$c_4$};
\draw[decorate,ultra thick,->] (1,0) -- (1.5,0);
\draw (-.15,1.85) node {\color{cyan}\tiny$11$};
\draw (0.85,0.85) node {\color{cyan}\tiny$1$};
\draw (2.15,1.85) node {\color{cyan}\tiny$12$};
\draw (-1.15,2.85) node {\color{cyan}\tiny$111$};
\draw (1.15,2.85) node {\color{cyan}\tiny$112$};
\draw (1.85,-0.15) node {\color{cyan}\tiny$\varepsilon$};
\end{tikzpicture} \\
\hline
\end{tabular}

\caption{Tree stack shown transition-by-transition 
in Example~\ref{moreDetailEG}, giving \updownvector\ $(2,5,9,12)$. The edge from $\nu'=1$ to $\nu=11$ is highlighted, as are labels that are replaced using a $\set(c_i)$ move.
\label{fig:TreeStackEG2DETAILS}}
\end{figure}

Choosing $\nu=11$ we obtain \updownvector\ $(2,5,9,12)$
and 
$\nu$-factorisation \[[w]_{\s_1 \cdots \s_{14}, 11}= ab \cdot c \cdot de \cdot f \cdot gh\] (see Fig.~\ref{fig:TreeStackEG2DETAILS}), via:
  \[\begin{array}{c|c|c|c|ccccccc}
  \s_{1}  \s_2 &  \s_3 \s_4 \s_5 & \s_6 \s_7  \s_8 \s_9 & \s_{10} \s_{11} \s_{12} & \s_{13} \s_{14}\\
  ab & c &de &f  &gh \\
  \end{array}.
\]

The \historyvec\
  $\harry_{abcdefgh,\s_1 \cdots \s_{14},11}$ is 
 \[\left(\begin{array}{lllllll}
 c_2 & c_3 & c_3 & c_6 \\
 q_{2} & q_{5} &q_{4} & q_{0}
 \end{array}\right).\] 
  \end{example}

\begin{definition}[Up-set]\label{defU}   Let $L$ be a $k$-MCF language accepted by $\mathcal{A}$. Let  $\mathbf{c}=(c_1,c'_1,\dots,c_s,c'_s)\in \CCC^{2s}$ and
  $\mathbf{q}=(q_1,q'_1,\dots, q_s,q'_s)\in Q^{2s}$, for some $s\in[1, k]$. 
    Define the \emph{\UpSet} associated to
  $\candq$, denoted 
  $U_{\candq}$, to be the set consisting of all $(u_1,u_2,\dots,u_s)\in (\Sigma^*)^s$ such that there exists $w\in L$, 
  a proper run $\mathcal R$ accepting $w$ with the \dendron\ $T$ 
   and an address $\nu$ in $T$ such that $w$ has a  factorisation 
\begin{equation*}
    [w]_{\mathcal{R},\nu}= w_0  u_1 w_1  \cdots u_s w_s,
\end{equation*}
with corresponding \historyvec\
\begin{equation*}
\harry_{w, \mathcal{R},\nu}= \begin{pmatrix} \mathbf{c}\\ \mathbf{q}\end{pmatrix} =
\begin{pmatrix}
c_1 & c'_1& \cdots & c_s& c'_s\\
q_1 & q'_1 & \cdots & q_s & q'_s
\end{pmatrix}
.
\end{equation*}
\end{definition}

Note that since $s\leq k$ and $|C|,|Q|$ are finite, there are finitely many $\candq$ and hence finitely many sets $U_{\candq}$. The size of each \UpSet\ (the number of tuples it contains) can be finite or infinite.
For $(\mathbf{c},\mathbf{q})\in \cup_{i=1}^k \CCC^{2i}\times Q^{2i}$ such that   $U_{\candq}$ is finite  let
  $\mathcal M_{\mathbf{c},\mathbf{q}} = \max\{(|u_1| + \ldots + |u_s|) \,\colon\, (u_1,u_2,\dots,u_s) \in U_{\candq}\}$  and
\begin{align}\label{eq:M}
  \mathcal M  =\begin{cases}
   \max\{\cM_{\mathbf{c},\mathbf{q}}\,:\,  (\mathbf{c},\mathbf{q})\in \cup_{i=1}^k \CCC^{2i}\times Q^{2i}\textrm{ and }
  U_{\candq}\textrm{ is finite}\}\\
  0  \quad \text{if all \UpSet s are infinite}.
  \end{cases}
\end{align}

The next lemma is a simple application of \UpSet s and \historyvec s.
  \begin{lemma} [Single swap] \label{1s}
Let $L$ be a $k$-MCF language accepted by $\mathcal{A}$. For $w\in L$, let $\mathcal{R}$ be  a proper run accepting $w$ with the corresponding \dendron\ ${T}$. Choose a non-root vertex $\nu$ 
in $T$, and let $[w]_{\mathcal{R},\nu}$ and $\harry_{w,\mathcal{R},\nu}$ be the corresponding factorisation and \historyvec s.
If there exists some $w'\in L$ accepted by a proper run $\mathcal{R}'$, producing \dendron\ ${T}'$ and vertex $\nu'\in {T}'$ so that
\begin{equation*}
   [w']_{\mathcal{R}',\nu'}= w_0' u_1' w_1'  \cdots u_s' w_s',
\end{equation*}
and 
\begin{equation*}
    \harry_{w',\mathcal{R}',\nu'}=\harry_{w,\mathcal{R},\nu},
\end{equation*}
then 
\begin{equation*}
    w''=w_0u_1'w_1u_2'w_2\dots u_s'w_s\in L.
\end{equation*}
\end{lemma}

\begin{proof}
  Let the up-down vector and \historyvec\ of $w$ at $\nu$  with respect to $\cR$ be \[(l,m)_{\cR,\nu}=(l_1,m_1,\ldots, l_s,m_s)\] and 
  \[ 
\harry_{w, \mathcal{R},\nu}=
\begin{pmatrix}
c^u_1 & c^d_1& \cdots & c^u_s& c^d_s\\
q^u_1 & q^d_1 & \cdots & q^u_s & q^d_s
\end{pmatrix}= \harry_{w',\mathcal{R}',\nu'}\] respectively, and let the up-down vector of $w'$ at $\nu'$ with respect to $\cR'$ be
\[(l,m)_{\cR',\nu'}=(l'_1,m'_1,\ldots, l'_s,m'_s).\] 
We can describe an accepting run for $w''$ as follows. Read $w_0$ by following 
$\mathcal R$ until the pointer is at $\nu$ for the first time, when $\mathcal A$ is in state $q_1^u$ and the label of $\nu$ is $c_1^u$.
Then follow the factor $\mathcal R'$ beginning with $\tau_{l_1'+1}$, which reads $u_1'$ with $\mathcal A$ starting in state $q_1^u$ and the pointer at $\nu$ labeled $c_1^u$, building the first part of a copy of a subtree of $T'$ above $\nu$,
until the transition $\tau_{m_1'}$ of $\cR'$, when the pointer is at $\nu$ and about to move down. At this point
 the node $\nu$ is labeled $c_1^d$ and $\mathcal A$  is in state $q_1^d$. 
 Then follow the factor of $\mathcal R$, starting with $\tau_{m_1+1}$, which reads $w_1$ starting in state $q_1^d$ with the pointer at $\nu$ labeled $c_1^d$, moving around and building more of the tree below $\nu$, until the pointer returns to $\nu$ for the second time, with $\mathcal A$ in  state $q_2^u$ and the label of $\nu$ equal to $c_2^u$.

Continue switching between the runs $\mathcal R$ and $\mathcal R'$, where each switch is possible since the state of $\mathcal A$ and label of $\nu$ match;  the tree stack above $\nu$ corresponds to the tree stack for $w'$ above $\nu'$, and the tree stack below $\nu$ corresponds to the tree stack for $w$ below $\nu$. 

Note that the tree stack built by this process has addresses 
\begin{center}
\begin{tabular}{ll}
$\mu$ & if $\mu$ is a vertex of $T$ and $\nu$ is not a prefix of $\mu$\\
$\nu\mu'$ & if $\nu'\mu'$ is a vertex of $T'$. \\
 \end{tabular}
\end{center}
\end{proof}

\section{Substitution Lemma}

In this section, we state and prove the Substitution Lemma for infinite multiple context-free languages.
We assume further all tree stack automata have at least one push command and therefore have positive degree. 
We shall use the following lemma. In the statement of the lemma, a word $u \in \Sigma^+$ is said to be
 \emph{read by a run of a TSA while the pointer is at a vertex $\nu$ of a tree stack}
  if each letter of $u$ is read by a transition which does not move the pointer and the pointer is at $\nu$ before and after
  applying each transition.

\begin{lemma}\label{lem:atv_2} Let  $L$ be an infinite  $k$-MCF language accepted by $\cA$ and let $D$ be the degree of $\cA$. Let $\cR$
  be a proper run accepting a word $w$, $T$ be the corresponding \dendron\, and $\nu$ be a vertex of $T$. 
   Let $\mu\in \N_+$ and assume that the following condition holds:
  \begin{align}\label{condition}
    \text{ if } w' \text{is a factor of } & w \text{ such that the pointer remains at } \nu \nonumber \\
     \text{ while } \cR \text{ is reading } & w', \text{ then } |w'|\le  \mu|C||Q|.
  \end{align}
    Then the number $n_{w,\nu}$ of letters of $w$ read by $\cR$ while the pointer is at $\nu$ is bounded by
  \begin{itemize}
	\item $n_{w,\nu}\leq \mu k|C||Q|(D+1)$ if  $\nu$ is a non-root, non-leaf vertex of $T$;
	\item $n_{w,\nu}\leq \mu kD|Q|$  if $\nu=\varepsilon$ is the root vertex of $T$;
	\item $n_{w,\nu}\leq \mu k|C||Q|$ if $\nu$ is a leaf of $T$.
  \end{itemize}
\end{lemma}
\begin{proof}
  From the  definitions, $\nu$ has at most $D$ children and (at most) one parent. As $\cA$ is $k$-restricted
the number of transitions of $\cR$ which move the pointer to $\nu$ from below (from its parent) is at most $k$; and the
number of transitions that move the pointer to $\nu$ from one of its children is at most $kD$. Let $r_1, \ldots , r_t$ be the maximal subsequences
of the sequence of transitions $\cR$ such that the pointer is kept at $\nu$ while executing subsequence $r_i$, for $i=1,\ldots, t$.
Then $t \leq k(D+1)$.
By assumption and the fact that $\cR$ is proper, none of these subsequences $r_i$ has length greater than $\mu|C||Q|$.
 Therefore the number of letters of $w$ read by transitions of $\cR$ with the pointer at $\nu$ is at most $\mu k|C||Q|(D+1)$. Note also that the root has no parent and always has label $@$ so if $\nu=\varepsilon$ then this bound is $\mu k D|Q|$.  Furthermore, if $\nu$ is a leaf of $T$, so has no children,  then the bound is $\mu k|C||Q|$.
\end{proof}

\begin{definition}[Pumpable words]
   Let  $L$ be a language over an alphabet $\Sigma$ and let $\alpha,\beta \in \N_+$ with $\alpha \le \beta$. 
A word $w\in L$ is called {\em $(\alpha,\beta)$-pumpable}
if there exists $x,y,z\in \Sigma^*$ with $|y|\in[\alpha,\beta]$
such that $w=xyz$ and for all $n\in\N_0$, $xy^nz\in L$.
\end{definition}

The next lemma says that if, in a given run, the pointer remains at a single vertex for sufficiently many transitions, then this run of the  TSA behaves
  like a run of a finite state automaton and an analogue of the pumping lemma for regular languages applies to the word accepted.  Recall that by Remark~\ref{rmk:node-label-non-empty} we assume $|C|>0$ for any TSA $\cA$.
\begin{lemma}\label{lem:pump_2} Let  $L$ be an infinite  $k$-MCF language accepted by $\cA$. Let $\cR$
  be a proper run accepting a word $w$, 
  let $T$ be the corresponding \dendron\, and let $\nu$ be a vertex of $T$. 
  Assume that, for some  $m\ge 1$, there is a sequence $\hat{r}$ of more than $m|C||Q|$ consecutive transitions of $\cR$,
  during which the pointer remains at $\nu$. Then $w$ is $(m,  m|C||Q|)$-pumpable.
\end{lemma}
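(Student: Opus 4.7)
The plan is to apply a pigeonhole argument to the sequence of pairs (state of $\cA$, label of $\nu$) recorded during $\hat{r}$, to locate a recurrence of the same pair, and to use the resulting loop to exhibit a pumpable factor $y$ of $w$.

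First I would fix notation. Since the pointer remains at $\nu$ throughout $\hat{r}$, every transition in $\hat{r}$ must have instruction $\id$ or $\set(c)$, so it leaves the tree stack fixed everywhere except possibly at $\nu$ (while the TSA state is allowed to update). For $i=0,1,\dots,N$ with $N>m|C||Q|$, let $\pi_i=(q_i,c_i)\in Q\times C$ denote the pair consisting of the state of $\cA$ and the label of $\nu$ just after the $i$-th transition of $\hat{r}$, and let $\pi_0$ denote the corresponding pair just before $\hat{r}$ begins. There are only $|Q||C|$ possible values for the $\pi_i$.

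Next I would extract a loop via pigeonhole: applied to the first $m|C||Q|+1$ pairs $\pi_0,\dots,\pi_{m|C||Q|}$, the pigeonhole principle yields indices $0\le i_0<i_1<\dots<i_m\le m|C||Q|$ at which a common pair $\pi^*$ recurs. I would take $y$ to be the factor of $w$ read by the transitions $\tau_{j_1+i_0+1},\dots,\tau_{j_1+i_m}$, and let $x$, $z$ be the corresponding prefix and suffix of $w$, so that $w=xyz$. Because $\pi_{i_0}=\pi_{i_m}$ and no transition in $\hat{r}$ alters any part of the tree stack away from $\nu$, this block of transitions is a genuine loop that can be repeated or skipped any number of times to produce another valid accepting run of $\cA$. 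Hence $xy^n z\in L$ for every $n\in\N_0$, and the upper bound $|y|\le i_m-i_0\le m|C||Q|$ is immediate because each transition reads at most one letter.

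The main obstacle is the lower bound $|y|\ge m$. A direct count using only $i_m-i_0\ge m$ and properness (no two consecutive $\varepsilon$-reading $\id/\set$ transitions in $\cR$) yields only $|y|\ge\lfloor m/2\rfloor$, which is a priori insufficient. I would close the gap by exploiting properness more carefully on the structure of the $m$ sub-segments $[i_j,i_{j+1})$: no two adjacent sub-segments can both be trivial one-step $\varepsilon$-stalls at $\pi^*$, for otherwise $\cR$ would contain two consecutive $\varepsilon$-reading $\id/\set$ transitions, violating properness. Combined with the observation that more than $m|C||Q|/2$ of the transitions in $\hat{r}$ are letter-reading, this lets one refine the pigeonhole — for instance by working with the subsequence of pairs recorded only at letter-reading positions — and force the total letter-length of the loop up to at least $m$ while keeping it within a window of at most $m|C||Q|$ transitions. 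Putting these pieces together yields a factorisation $w=xyz$ with $|y|\in[m,m|C||Q|]$ and $xy^n z\in L$ for all $n\in\N_0$, so that $w$ is $(m,m|C||Q|)$-pumpable.
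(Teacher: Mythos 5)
Your overall strategy --- pigeonhole on the pairs (state of $\mathcal A$, label of $\nu$) observed while the pointer sits at $\nu$, then excising or repeating the resulting loop --- is exactly the paper's. Your treatment of the upper bound $|y|\le m|C||Q|$ and of the validity of the pumped runs $xy^nz$ is correct.

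The genuine gap is the one you flag yourself: the lower bound $|y|\ge m$. Your global count (no two consecutive $\varepsilon$-reads, hence at least half of the $i_m-i_0\ge m$ transitions read a letter) gives only $\lfloor m/2\rfloor$, and the refinement you sketch does not close the gap: restricting the pigeonhole to letter-reading positions leaves only about $m|C||Q|/2$ positions, so some pair recurs only about $m/2$ times, and since each gap between recurrences is guaranteed only one letter you land at $|y|\approx m/2$ again. The correct accounting is \emph{per block}, not global. Take $m+1$ indices $j_1<\cdots<j_{m+1}$ at which the pair $(c,q)$ recurs and set $y$ to be the input read by $\tau_{j_1+1}\cdots\tau_{j_{m+1}}$. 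Each block $\tau_{j_i+1}\cdots\tau_{j_{i+1}}$ is itself a loop at $(c,q)$ consisting entirely of $\id$/$\set$ transitions; if it read no letters it would consist entirely of $\varepsilon$-reading $\id$/$\set$ transitions, which properness forbids as soon as the block has length at least $2$. The only residual case is a block of length exactly $1$ whose single transition reads $\varepsilon$: such a transition changes neither the state, nor the label of $\nu$, nor the pointer, i.e.\ it is a no-op, and one may assume without loss of generality that a proper run contains none (deleting a no-op preserves validity and properness, since in a proper run its neighbours cannot themselves be $\varepsilon$-reading $\id$/$\set$ transitions). Hence each of the $m$ blocks reads at least one letter and $|y|\ge m$. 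This block-by-block count is what the paper's terse ``as $\mathcal R$ is proper, $m\le|y|$'' is standing in for; without it, your argument as written establishes only $(\lfloor m/2\rfloor, m|C||Q|)$-pumpability.
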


\begin{proof}
  We may assume, by considering a prefix if necessary, that $\hat{r}$ consists of precisely
  $p=m|C||Q|+1$ transitions, 
  and write $\cR=r'\hat{r}r''$, where for some $a\ge 1$, $r'=\t_1\cdots \t_{a-1}$, $\hat{r}=\t_a\cdots \t_{a+p-1}$, and $r''=\t_{a+p}\cdots \t_r$, with $\t_1, \ldots, \t_r \in \delta$.
   The generalised pigeon hole principle implies that there are (at least) $m+1$ 
  indices  $j_1, \ldots, j_{m+1}$, such that
  $a\le j_1< \cdots< j_{m+1}\le a+ p-1$  and an element $(c,q)\in C\times Q$ 
  such that, for all $i\in[1,m+1]$, after applying $\t_{j_i}$ the label of the vertex $\nu$ is $c$ and the state of $\cA$ is $q$.
  
We may now rewrite $\hat{r}$ as $\hat{r}= r_0 r_1 r_2$, where $r_0 =\t_a\cdots\t_{j_1}$, $r_1 =\t_{j_1+1}\cdots\t_{j_{m+1}}$ and
$r_2 = \t_{j_{m+1}+1}\cdots\t_{a+p-1}$. Denote by $w_0$ the prefix of $w$ read by $r'r_0$,
by $y$ the subword of $w$ read by $r_1$ and by $w_1$ the suffix of $w$ read by $r_2 r''$. 
 By construction, after either of the transitions $\t_{j_1}$ and  $\t_{j_{m+1}}$ have been applied, the pointer is at $\nu$, the label of $\nu$ is $c$ and the state of $\cA$ is $q$.  Hence $\cA$ has a valid run $r'r_0r_2r''$ that accepts the word $w_0w_1$. Similarly, $\cA$ has a valid run $r'r_0r_1^n r_2r''$ that accepts the word $w_0y^nw_1$, for all $n\ge 2$.  That is, $w_0y^nw_1\in L$, for all $n\in \N_0$.
  Moreover, as $\cR$ is proper, $m\le |y|\le p-1$.
\end{proof}

The next definition allows us to refer to the
process of switching between different factorisations of a word $w$
without explicit mention of any particular $\nu$-factorisation.

\begin{definition}[$U$-switchable factorisation]\label{def:Uswitchable}
  Let $L$ be a language over an alphabet $\Sigma$, let $s\in \N^+$ and let $U$ be a set of $s$-tuples of elements of $\Sigma^*$: that is $U\subseteq (\Sigma^*)^s$.
  \begin{\arabiclist}
  \item A $U$\emph{-factorisation} of a word $w\in \Sigma^*$ is a factorisation
      \[w=w_0u_1w_1\cdots  u_s w_s\] 
      such that $(u_1,\ldots ,u_s)\in U$. 
  \item A word $w\in L$ is said to be $U$\emph{-switchable} if $w$ has a $U$-factorisation
    $w=w_0u_1w_1\cdots  u_s w_s$ such that, for all $(v_1,\ldots, v_s)\in U$ the word
    \[w_0v_1w_1\cdots v_s w_s\]
    belongs to $L$. In this case we say the given $U$-factorisation of $w$ is a $U$\emph{-switchable} factorisation. 
\end{\arabiclist}
\end{definition}

Given a word $w$ over an alphabet $\Sigma$ we may distinguish, by marking, some of the letters of $w$. Formally, considering $w$ as a sequence of letters,
we take a subsequence $w'$ and consider its elements \emph{marked}. The number of marked letters of $w$ is then $|w'|$. For notational
convenience we use the $\#$ symbol (which we assume does not belong to $\Sigma$) to represent any marked letter of $w$ and then
write the number of marked letters $|w'|$ of $w$ as  $|w|_\#$. For example, if $w=(ab^3)^5$ and we mark the underlined letters in this expression \[\underline{a}bbba\underline{b}bbabb\underline{b} \, \underline{a}bbb\underline{a}bbb\] for $w$ 
 then $|w|_\#=5$.

\begin{theorem}[Substitution Lemma]  
\label{thm:Substitution} 
  Let  $L$ be an infinite  $k$-MCF language over an alphabet $\Sigma$.
  Given $\mu\in \N_+$ 
   there exist
integers $M, m_1,\dots, m_M\in\N_+$ 
  where $m_i\in[1, k]$ and
 sets $U_1,\dots, U_M$ where  $U_i\subseteq (\Sigma^*)^{m_i}$  for $i\in[1,M]$ 
 such that the following  hold.
\begin{\arabiclist}
\item\label{it:4.1.1} For each $i\in[1,M]$ and each  $(u_1,\dots, u_{m_i})\in U_i$, there exists a word $w\in L$ with $U_i$-switchable
  factorisation \[w=w_0u_1w_1\dots w_{m_i-1}u_{m_i}w_{m_i}.\]
\item\label{it:4.1.2} Given  $\lambda \in \N_+$ 
 there exists $N_\lambda\in\N_+$ such that $N_\lambda\geq \mu$ and if $w\in L$ has  at least $N_{\lambda}$ marked 
 positions,
 then either
 \begin{alphlist}[(a)]
	\item $w$ is $(\mu,N_\lambda)$-pumpable or 
	\item there exists $i\in [1,M]$, such that $U_i$ is infinite,
    and $w$ has a $U_i$-switchable factorisation 
    \[w=w_0u_1w_1\dots w_{m_{i-1}}u_{m_i}w_{m_i}\]
  satisfying \[\Sigma_{j=0}^{m_i}|w_j|_\#\geq \lambda 
  \textrm{ and }\Sigma_{j=1}^{m_i}|u_j|_\#\geq \mu.\]
 \end{alphlist}
\item\label{it:4.1.3} 
  There exists $N_\mu\in \N_+$ such that if $w\in L$ and  $|w|_\#\ge N_\mu$ then
  \begin{alphlist}[(a)]
    \item $w$ is $(\mu,N_\mu)$-pumpable or
    \item there exists $i\in [1,M]$, such that $U_i$ is infinite,   and $w$ has a $U_i$-switchable factorisation 
      \[w=w_0u_1w_1\dots w_{m_{i-1}}u_{m_i}w_{m_i},\]
      satisfying  
      \[\mu\le \Sigma_{j=1}^{m_i}|u_j|_\#\leq N_\mu.\]
  \end{alphlist}
\end{\arabiclist}
\end{theorem}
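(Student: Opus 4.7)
I take $U_1, \ldots, U_M$ to be an enumeration of the finitely many \UpSet s $U_{\candq}$ from Definition~\ref{defU}, with each $m_i\in[1,k]$ equal to the length parameter $s$ of the corresponding \UpSet. Assertion~\eqref{it:4.1.1} is then immediate from Lemma~\ref{ss}: by definition of $U_i$, any tuple $(u_1,\dots,u_{m_i})\in U_i$ arises as the ``up'' content of some $\nu$-factorisation of some word $w\in L$ with the designated history vector, and applying the single swap lemma (swapping in any other tuple from $U_i$, all of which share this history vector by construction) certifies that the given factorisation is $U_i$-switchable.

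For parts~\eqref{it:4.1.2} and~\eqref{it:4.1.3}, I set
\[B=\mu k|C||Q|(D+1),\quad \theta=\max\{B,\mathcal M,\mu\},\quad N_\mu=B+D\theta,\quad N_\lambda=\lambda+N_\mu,\]
with $\mathcal M$ as in~\eqref{eq:M}. Given $w\in L$ with $|w|_\#\geq N_\mu$ that is not $(\mu,N_\mu)$-pumpable (respectively not $(\mu,N_\lambda)$-pumpable for~\eqref{it:4.1.2}), fix a proper accepting run $\cR$ with \dendrite\ $T$. Since $\mu|C||Q|\leq N_\mu$, Lemma~\ref{lem:pump_2} (applied with $m=\mu$) ensures no vertex of $T$ has a stretch of more than $\mu|C||Q|$ consecutive pointer-fixing transitions; Lemma~\ref{lem:atv_2} then bounds the letters $A_\nu$ read at each vertex $\nu$ by $B$.

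For each vertex $\nu$, write $S^\#_\nu$ for the number of marked letters read while the pointer is above $\nu$, so that $S^\#_\nu=\sum_j|u_j|_\#$ in the $\nu$-factorisation of $w$. The core argument is a greedy descent: starting at the root, where $S^\#_\varepsilon=|w|_\#\geq N_\mu>\theta$, I repeatedly move to the child maximising $S^\#$ so long as some child has $S^\#>\theta$, stopping at the first vertex $\nu'$ all of whose children $c$ satisfy $S^\#_c\leq\theta$. Any leaf $\nu$ has $S^\#_\nu=A^\#_\nu\leq B\leq\theta$ so cannot be moved to; and the descent cannot stop at the root because then $|w|_\#\leq A^\#_\varepsilon + D\theta\leq B+D\theta=N_\mu$ would force $A^\#_\varepsilon=B$, contradicting the sharper root bound $A^\#_\varepsilon\leq \mu kD|Q|<B$. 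Hence $\nu'$ is a non-root, non-leaf internal vertex. By the stopping rule,
\[\theta<S^\#_{\nu'}=A^\#_{\nu'}+\sum_{c\text{ a child of }\nu'}S^\#_c\leq B+D\theta=N_\mu,\]
and if $U_{\mathbf{c}_{\nu'},\mathbf{q}_{\nu'}}$ were finite then $S^\#_{\nu'}\leq S_{\nu'}\leq\mathcal M\leq\theta$, a contradiction; so this \UpSet\ is infinite. Using Lemma~\ref{ss} again, the $\nu'$-factorisation is $U_i$-switchable for the index $i$ with $U_i=U_{\mathbf{c}_{\nu'},\mathbf{q}_{\nu'}}$ and satisfies $\mu\leq\theta<S^\#_{\nu'}\leq N_\mu$, proving~\eqref{it:4.1.3}. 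For~\eqref{it:4.1.2}, the same $\nu'$ works: $\sum|u_j|_\#=S^\#_{\nu'}\geq\mu$ and $\sum|w_j|_\#=|w|_\#-S^\#_{\nu'}\geq N_\lambda-N_\mu=\lambda$.

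The main obstacle is the simultaneous balancing at the descent step: the cut-off vertex $\nu'$ must have $S^\#_{\nu'}\in[\mu,N_\mu]$ \emph{and} an infinite associated \UpSet. The threefold threshold choice $\theta\geq\mathcal M$ (forcing the \UpSet\ at $\nu'$ to be infinite, via \eqref{eq:M}), $\theta\geq B$ (forcing the descent to stop inside the tree rather than at a leaf), and $\theta\geq\mu$ (providing the lower bound on $S^\#_{\nu'}$), together with the compensating bound $N_\mu=B+D\theta$ absorbing the contribution of the $D$ subtrees below the cut, is precisely what aligns the three constraints; the pumpability alternative is what lets us rule out long stretches of transitions fixing the pointer, which in turn lets Lemma~\ref{lem:atv_2} control letters-per-vertex and so close the accounting.
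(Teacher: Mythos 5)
Your proof is correct and its core is the same as the paper's: enumerate the finitely many \UpSet s, obtain switchability from the single-swap lemma (Lemma~\ref{ss}), dispose of long pointer-fixing stretches via Lemma~\ref{lem:pump_2}, bound the letters read per vertex via Lemma~\ref{lem:atv_2}, and descend through the \dendrite\ to a non-root vertex whose ``up'' content lies in a controlled window and whose \UpSet\ is forced to be infinite because that content exceeds $\cM$ from equation~(\ref{eq:M}). The genuine difference is in part~\eqref{it:4.1.2}: the paper runs a separate argument via $\lambda$-singular vertices (those outside whose subtree fewer than $\lambda$ marked letters are read) and only then gives an iterative descent for part~\eqref{it:4.1.3}, whereas you prove \eqref{it:4.1.3} first by a single greedy threshold descent and then get \eqref{it:4.1.2} for free by setting $N_\lambda=\lambda+N_\mu$ and observing $\sum_j|w_j|_\#=|w|_\#-\sum_j|u_j|_\#\ge N_\lambda-N_\mu=\lambda$. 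This is a clean unification that also yields an upper bound on $\sum_j|u_j|_\#$ in case~\eqref{it:4.1.2}, which the paper's version does not provide; the paper's $\lambda$-singular machinery buys nothing extra here. One small caveat, which the paper's own proof shares: the identity $S^\#_{\nu'}=A^\#_{\nu'}+\sum_c S^\#_c$ silently drops the letters read by the (at most $2kD$) transitions crossing an edge between $\nu'$ and one of its children, which are counted in $S_{\nu'}$ but in neither $A_{\nu'}$ nor any $S_c$; this perturbs the accounting by an additive $O(kD)$ that is absorbed by enlarging $N_\mu$, so it does not affect the validity of either argument. Likewise your parenthetical ``would force $A^\#_\varepsilon=B$'' should read $A^\#_\varepsilon\ge B$, but the intended contradiction with the root bound $\mu kD|Q|<B$ stands.
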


\begin{proof}
  Since $L$ is $k$-MCF, there exists a \good\ $k$-restricted tree stack automaton $\mathcal{A}=(Q,\CCC, \Sigma, q_0,  \delta, Q_f)$ which accepts $L$. 
  Recalling Definitions~\ref{defn:HistVect} and~\ref{defU}, let $\mathcal C$ be the set of elements
  $(\mathbf c,\mathbf q)\in  \cup_{i=1}^k \CCC^{2i}\times Q^{2i}$  such that there exists $w\in L$, 
  a proper run $\mathcal R$ accepting $w$ with \dendron\ $T$ 
  and an address $\nu$ in $T$ such that the history array $\harry_{w,\cR,\nu}$ is equal to  the $2\times 2i$ array
  $\left(\begin{smallmatrix} \mathbf{c}\\ \mathbf{q} \end{smallmatrix}\right)$. 
  Define \[\mathcal U=\{U_{\candq}\,:\, (\mathbf c,\mathbf q)\in \mathcal C \},\quad
 \mathcal{U}_0=\{U_{\candq}\in\mathcal U\,:\,  U_{\candq}\  \text{is infinite}\},\]
 and  \[\mathcal{U}_1=\{U_{\candq}\in\mathcal U\,:\,  U_{\candq}\  \text{is finite}\}.\]

The set $\cU=\mathcal{U}_0\cup \mathcal{U}_1$ is then finite and we choose
to write it as  $\{ U_1,\dots, U_M\}$. For $U_i\in \mathcal{U}$, let  $m_i$ be the positive integer such that
there is $(\mathbf c,\mathbf q)\in C^{2m_i}\times Q^{2m_i}$ with 
$U_i=U_{\candq}$. 

\smallskip
\noindent (\ref{it:4.1.1}): 
For each $i\in[1,M]$ and each  $(u_1,\dots, u_{m_i})\in U_i$, there exists a word $w\in L$ with $U_i$-factorisation by Definition~\ref{defU} and this factorisation is $U_i$-switchable by Lemma~\ref{1s}.

\smallskip
\noindent(\ref{it:4.1.2}): 
Suppose $\lambda\geq \mu$. 
Let  \[N_{\lambda} = (D+1)(\mu k|\CCC||Q|+\lambda)+D\cM,\] 
where $D = \deg(\mathcal A)$ and $\cM$ is defined by Eq.~\eqref{eq:M}. 
Let $w\in L$, assume at least $N_{\lambda}$ letters of $w$ are marked, 
 let $\mathcal{R}$ be a proper accepting run
 for $w$, and let
 $T$ be the corresponding \dendron. 

 If there exists a vertex $\nu$ of $T$ such  
  that there is a sequence $\hat{r}$ of more than $\mu|C||Q|$ consecutive transitions of $\cR$,
  during which the pointer remains at $\nu$,  then from Lemma~\ref{lem:pump_2}  there exist $w_0,u_1, w_1\in \Sigma^*$ such
  that $w=w_0u_1 w_1$, $\mu\le |u_1|\le \mu|C||Q|$ and, for all $n\in \N_0$, the word
  $w_0u_1^n w_1\in L$. As $\mu |C||Q|\le N_\lambda$,   $w$ is $(\mu, N_\lambda)$-pumpable and (\ref{it:4.1.2}.) holds, in this case.

  We now assume that, for all vertices $\nu$ of $T$, the pointer may remain at $\nu$ for at most  $\mu|C||Q|$ consecutive transitions (so the
 condition~\eqref{condition} of Lemma~\ref{lem:atv_2} holds).
If $\nu$ is a non-root vertex of $T$ then there is a \historyvec\ $\mathbf{h}=\harry_{w,\cR,\nu}$  associated
to $w$, $\cR$ and $\nu$ (see Definition~\ref{defU}) and then a unique integer
$i_{(w,\cR,\nu)}\in [1,M]$ such that the \UpSet\ $U_{\mathbf{h}}$ associated to
$\mathbf{h}$ is equal to $U_{i_{(w,\cR,\nu)}}$. 

We next show the existence of a (non-root) vertex $\nu$ 
 such that, setting $i=i_{(w,\cR,\nu)}$,  the \UpSet\ $U_{i}$ is infinite and the  
 $\nu$-factorisation of $w$ at $\nu$ (see Definition~\ref{defnuf}) has the form $[w]_{\cR,\nu}=w_0u_1\cdots u_{m_i}w_{m_i}$ and
satisfies $\Sigma_{j=0}^{m_i}|w_j|_\#\geq \lambda$
and $\Sigma_{j=1}^{m_i}|u_j|_\#\geq \mu$.

For a vertex $\nu$ of $T$ denote by $T_\nu$ the subtree of $T$ with root $\nu$ consisting of vertices connected
to $\varepsilon$ by a simple path containing $\nu$. That is, $T_\nu$ consists of $\nu$ and all its descendants. 
We call a vertex $\nu$ of $T$ a $\lambda$\emph{-singular vertex} if the automaton $\mathcal{A}$,
while executing $\cR$, reads fewer than $\lambda$ marked letters when the pointer is not at a vertex of $T_\nu$.

As $\lambda \ge 1$, by definition  the root  of $T$ is $\lambda$-singular. Also, if $l$ is a leaf of $T$  then,
by Lemma~\ref{lem:atv_2}, at most $\mu k|C||Q|$  letters of $w$
 can be read by $\cR$ while  the pointer is at $l$, and so if $l$ is $\lambda$-singular then 
 $|w|_\#<  \lambda+\mu k|C||Q|$, a contradiction. Therefore 
no leaf of $T$ is $\lambda$-singular. Moreover, if a vertex is $\lambda$-singular, its parent must be $\lambda$-singular; whence the $\lambda$-singular vertices form a
(non-empty) sub-tree of $T$.   
  From the above, a leaf $l_1$ of the $\lambda$-singular sub-tree cannot be a leaf of $T$, so
every such  $l_1$ is $\lambda$-singular and has $n \ge 1$ children, none of which is $\lambda$-singular.

Now choose a $\lambda$-singular vertex $\nu'$ with a positive number of children, none of which is $\lambda$-singular.
Since $\nu'$ is $\lambda$-singular, the automaton $\mathcal{A}$ reads more than $N_\lambda-\lambda$ marked letters of $w$
when the pointer is at the vertices of $T_{\nu'}$.
From  Lemma \ref{lem:atv_2}, at most $\mu k|C||Q|(D+1)$
 letters of $w$ can be read while the pointer is at $\nu'$ so there must be 
 some child $\nu$ of $\nu'$ such that more than
  $(N_{\lambda}-\lambda-\mu k|C||Q|(D+1))/D = \lambda+\cM$
  marked  letters of $w$ are read while  the pointer is at the vertices of $T_\nu$.  

 Form the factorisation $[w]_{\cR,\nu}=w_0u_1\cdots u_{m_i}w_{m_i}$ of $w$
 at $\nu$ and let $U_{i}$ be the associated \UpSet.
 The $u_j$'s in $[w]_{\cR,\nu}$ are precisely the factors of $w$
 read by $\cA$ while the pointer is at vertices of $T_\nu$. Hence,  
 $\Sigma_{j=1}^{m_i}|u_j|_\#> \lambda+\cM >\mu$, which by Eq.~\eqref{eq:M}
 is possible only for infinite $U_{i}\in \cU_0$.
 
 On the other hand, by assumption $\nu$ is not $\lambda$-singular,  so also $\Sigma_{j=0}^{m_i}|w_j|_\#\geq \lambda$.
 Therefore, in this case, (\ref{it:4.1.2}.) holds.

 Now suppose $\lambda< \mu$. We proved above that (\ref{it:4.1.2}.) holds in the case $\lambda=\mu$
   and using the fact that by definition, for all positive $\lambda'<\mu$, being $(\mu, N_\mu)$-pumpable implies being $(\mu, N_{\lambda'})$-pumpable, it
   can be readily verified that part (\ref{it:4.1.2}.) also holds with $\lambda<\mu$.

\smallskip
\noindent(\ref{it:4.1.3}): 
Let $N_\mu$ be given by the formula for $N_{\lambda}$ above by taking $\lambda=\mu$. Assume $w\in L$,
assume at least $N_{\lambda}$ letters of $w$ are marked  and let $\cR$ and $T$ be defined
as in the proof of (\ref{it:4.1.2}.).
In the case when there is  a vertex $\nu$ of $T$ at which the pointer  remains for more than $\mu|C||Q|$ consecutive transitions,
as in the proof of part (\ref{it:4.1.2}.) above, $w$ is $(\mu, N_\mu)$-pumpable, whence (\ref{it:4.1.3}.) holds in this case.

Now assume there is no vertex $\nu$ such that the pointer remains at $\nu$ for more than $\mu|C||Q|$ consecutive  transitions.
For any vertex $\nu \in {T}$ write $i_\nu$ for $i_{(w,\cR,\nu)}$, $m_\nu$ for $m_{i_\nu}$  and
    the factorisation $[w]_{\cR,\nu}$ of $w$ at $\nu$ as
    \[w=w_0^{(\nu)} u_1^{(\nu)}\cdots u_{m_\nu}^{(\nu)} w_{m_\nu}^{(\nu)}.\]
    If a vertex $\nu$ of $T$ has the property that 
    \[\Sigma_{j=1}^{m_\nu}|u_j^{(\nu)}|_\#> N_\mu= (D+1)(\mu k|\CCC||Q|+\mu)+D\cM,\]
    then, arguing as  in the proof of (\ref{it:4.1.2}.), we see that 
    $\nu$ must have a child $\nu'$  such that more than  $\mathcal{M}+\mu$ marked letters are read by $\cR$ while the pointer is
    at vertices of $T_{\nu'}$. For such $\nu'$ the factorisation $[w]_{\cR,\nu'}$ must satisfy
    \[ \mathcal{M} +\mu <\Sigma_{j=1}^{m_{\nu'}}|u_j^{(\nu')}|_\#.\]
    
    Note that, since $|w|_\#>N_\mu$ and from Lemma~\ref{lem:atv_2} 
    at most $\mu k|Q|D$ letters of $w$ can be read while the pointer is at the root $\varepsilon$,
    there is always some child $\eta$ of the root such that
    the factorisation $[w]_{\cR,\eta}$ satisfies
    \[ \mathcal{M} +\mu <\Sigma_{j=1}^{m_{\eta}}|u_j^{(\eta)}|_\#.\]
    If we also have  
    \[ \Sigma_{j=1}^{m_\eta}|u_j^{(\eta)}|_\#\leq N_\mu,\]
    this factorisation is the one we want. Otherwise, there is a child $\eta'$ of $\eta$ such that the factorisation $[w]_{\cR,\eta'}$  satisfies 
    \[ \mathcal{M} +\mu <\Sigma_{j=1}^{m_{\eta'}}|u_j^{(\eta')}|_\#\]
    and we may repeat the process beginning with $\eta$ instead of $\varepsilon$.

    Since the tree ${T}$ is finite, and by Lemma~\ref{lem:atv_2} at any leaf $l$ at most $\mu k|C||Q|$ letters of $w$ can be read, 
    continuing this process we can find a vertex $\nu$ such that
    \[ \mathcal{M} +\mu <\Sigma_{j=1}^{m_\nu}|u_j^{(\nu)}|_\#\leq N_\mu,\]
    as required. Finally, setting $i=i_{(w,\nu,\cR)}$, the left hand inequality implies
    that the set $U_i\in \cU$ corresponding to the \historyvec\
    of $w$ at $\nu$, with respect to $\cR$, is infinite. 
\end{proof}

\section{Applications of the Substitution Lemma }\label{sec:Applications}

As a first demonstration, we use Theorem~\ref{thm:Substitution}~\eqref{it:4.1.3} to prove the following. 

\begin{lemma}[{\cite[Lemma 3.3]{Seki}}]\label{lem:exampleXX}
Let  $m\in \N_+$ and $\Sigma=\{a_1,\dots, a_{2m+1}\}$ be an alphabet of size $2m+1$.
Then  $S_m=\{a_1^na_2^n\cdots a_{2m+1}^n\mid n\in\N_0\}$  is not $m$-\MCF.
\end{lemma}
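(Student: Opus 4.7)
The plan is to apply Theorem~\ref{thm:Substitution}(\ref{it:4.1.3}) with $\mu=1$ to a long, fully marked word of $S_m$ and derive a contradiction from each of its two alternatives. Assume for contradiction that $S_m$ is $m$-MCF, and let $N = N_1$ denote the constant the theorem produces. For any $n > N$, take $w = a_1^n a_2^n \cdots a_{2m+1}^n \in S_m$ with every letter marked, so $|w|_\# = (2m+1)n > N$ and the theorem applies.

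First, if $w$ is $(1,N)$-pumpable, write $w = xyz$ with $1 \le |y| \le N < n$ and $xy^tz \in S_m$ for all $t \in \N_0$. Since $|y|$ is strictly less than the length of a single block of $w$, the factor $y$ lies in at most two consecutive blocks, i.e.\ $y \in a_p^* a_{p+1}^*$ for some $p$. Pumping $y$ then alters only the $a_p$- and $a_{p+1}$-counts while every other letter count remains at $n$, so no iterate $xy^tz$ with $t \ne 1$ can have all $2m+1$ letter counts equal, contradicting membership in $S_m$.

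Otherwise $w$ admits a $U_i$-switchable factorisation $w = w_0 u_1 w_1 \cdots u_{m_i} w_{m_i}$ with $m_i \le m$, $U_i$ infinite, and $\sum_j |u_j| \le N$. Because $U_i$ is infinite it contains tuples of unbounded total length, so the substituted words $w_0 v_1 w_1 \cdots v_{m_i} w_{m_i}$, which all lie in $S_m$ by $U_i$-switchability, take the form $a_1^{n'} \cdots a_{2m+1}^{n'}$ for infinitely many distinct $n'$. Holding the $w_j$'s fixed while letting $n'$ vary severely restricts each $w_j$'s letter content: if $w_0$ contained any $a_2$, the complete $a_1^n$-block preceding it in $w$ would force $n' = n$ in every substitution, so $w_0 \in a_1^*$; symmetrically $w_{m_i} \in a_{2m+1}^*$. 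For an internal $w_j$ (with $1 \le j \le m_i - 1$), writing it as $a_p^{d_1} a_{p+1}^n \cdots a_{q-1}^n a_q^{d_2}$, any complete middle block $a_{p+1}^n$ (i.e.\ $q \ge p+2$) would again force $n' = n$; so every internal $w_j$ involves at most two consecutive letter types.

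Consequently the set of letters appearing across $w_0, w_1, \ldots, w_{m_i}$ has size at most $1 + 2(m_i - 1) + 1 = 2 m_i \le 2m$. Since $|\Sigma| = 2m+1$, some letter $a_K$ appears in no $w_j$, so all $n$ occurrences of $a_K$ in $w$ lie inside the $u_j$'s, giving $\sum_j |u_j| \ge \sum_j |u_j|_{a_K} = n > N$ and contradicting $\sum_j |u_j| \le N$. Both alternatives fail, so $S_m$ is not $m$-MCF. I expect the main obstacle to be the structural analysis of the $w_j$'s in the third paragraph; once the letter-content restriction is established, the pigeonhole count that closes the argument is immediate.
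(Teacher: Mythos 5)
Your proof is correct and follows essentially the same route as the paper: apply Theorem~\ref{thm:Substitution}~(\ref{it:4.1.3}) with $\mu=1$ to a fully marked word $a_1^n\cdots a_{2m+1}^n$, rule out $(1,N_1)$-pumpability because a short pumped factor meets at most two of the $2m+1$ blocks, and show each $w_j$ in a switchable factorisation can involve at most two consecutive letters (else a preserved complete middle block pins $n'=n$, contradicting infinitude of $U_i$). The only cosmetic difference is the closing count: the paper bounds $\sum_j|w_j|\le 2m_in$ and gets a length contradiction, whereas you pigeonhole a letter $a_K$ absent from every $w_j$ and force $\sum_j|u_j|\ge n>N_1$ --- the same counting argument in a slightly cleaner form.
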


\begin{proof}
Suppose $S_m\subseteq \Sigma^*$ is $m$-MCF. 
Setting $\mu=1$, let $M\in \N_+$ and sets $U_1,\dots, U_M$ be those given by  Theorem~\ref{thm:Substitution}, and 
$N_1\in \N_+$  the constant in Theorem~\ref{thm:Substitution}~\eqref{it:4.1.3}. 

For  $n>2N_1$ let $w[n]=a_1^{n}\cdots a_{2m+1}^{n}$ with all letters marked. 

If $w[n]=w_0u_1w_1$ with $1\leq |u_1|\leq N_1$ then either $u_1=a_j^sa_{j+1}^t$, with $s,t > 0$  and then $w_0w_1\notin S_m$, or
$u_1=a_j^s$ and again $w_0w_1=a_1^n\cdots a_{j-1}^na_j^{n-s}a_{j+1}^n\cdots a_{2m+1}^n\not\in S_m$. Hence $w[n]$ is not $(1,N_1)$-pumpable.

It follows that  there is an infinite \UpSet\ $U_i$ and a $U_i$-switchable factorisation of $w[n]=w_0u_1\cdots u_{m_i}w_{m_i}$ with $m_i\leq m$.
If, for some $j$ such that $1\le j<m_i$, $w_j$ contains a subword $a_{i-1}a_i^na_{i+1}$, for some $i$, then for all $(v_1,\ldots v_{m_i})\in U_i$ the word $w_0v_1\cdots v_{m_i}w_{m_i}$ is equal to $w[n]$, which is impossible since
$U_i$ is infinite. Similarly, if $w_0$ has a prefix $a_1^na_2$ or $w_{m_i}$ has a suffix $a_{2m}a_{2m+1}^n$, we obtain a contradiction. Therefore $w_0=a_1^{s_0}$ and $w_{m_i}=a_{2m+1}^{s_{m_i}}$, with $0\le s_0,s_{m_i}\le n$ and,  for $1\le j<m_i$, $w_j=a_{i_j}^{s_j}a_{i_j+1}^{t_j}$, for some
$i_j$ and $0\le s_j,t_j\le n$. This gives the bound
\[\sum_{j=0}^{m_i} |w_j|\le 2m_i n.\]
From Theorem~\ref{thm:Substitution}~\eqref{it:4.1.3} $\sum_{j=1}^{m_i}|u_j|\le N_1$ so we have $|w|\le 2m_in+N_1< (2m+1)n=|w|$, a contradiction.
\end{proof}

Note that the above result was first shown using 
 the weak pumping lemma of Seki {\em et al.} \cite{Seki}.
It is straightforward to prove that $S_m$ is $(m+1)$-\MCF\ (for example, using a $m+1$-restricted TSA where the \dendron\  built to accept $a_1^na_2^n\cdots a_{2m+1}^n$ is a single branch of length $n$, with the pointer pushing or moving up  the branch $m+1$ times reading odd-indexed letters, and down the branch $m$ times reading even-indexed letters).
This fact is observed in  \cite{Seki}, and yields \cite[Theorem 3.4]{Seki}: the class of $m$-\MCF\ languages is strictly contained in the class of  $(m+1)$-\MCF\ languages.

Next we use  Substitution Lemma  to prove  that the  language  $\{(a^mb^m)^n : m,n \geq 1 \}$ is not \MCF. Again, this 
result has already been shown using Seki's weak pumping lemma. 
The result shows that the Substitution Lemma is not already a consequence of having semi-linear Parikh image.

\begin{lemma}[{\cite[Lemma 6.14]{kallmeyer}}]\label{lem:exampleA}
Let 
 $L_1=\{(a^mb^m)^n \mid m,n \in\N_+ \}$. Then 
 \begin{itemize}\item the Parikh image of $L_1$ is semi-linear;
 \item $L_1$  is not multiple context-free.\end{itemize}
\end{lemma}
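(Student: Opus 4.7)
The Parikh image is immediate: each $w=(a^mb^m)^n$ has $|w|_a=|w|_b=mn$, so $\psi(L_1)=\{(k,k):k\in\N_+\}=(1,1)+\N_0\cdot(1,1)$, which is linear.

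To prove $L_1$ is not \MCF, the plan is to assume $L_1$ is $k$-\MCF\ for some $k$ and apply Theorem~\ref{thm:Substitution} with $\mu=1$, obtaining constants $M,m_1,\ldots,m_M$, upsets $U_1,\ldots,U_M$ and the threshold $N:=N_1$ from part~(\ref{it:4.1.3}). For each $j\in[1,M]$ with $U_j$ infinite, fix once and for all a tuple $\vec v^{(j)}=(v^{(j)}_1,\ldots,v^{(j)}_{m_j})\in U_j$ with $\sum_\ell|v^{(j)}_\ell|>N$ (these exist because only finitely many tuples in $(\{a,b\}^*)^{m_j}$ have total length at most any given bound), and set $V:=\max_j\sum_\ell|v^{(j)}_\ell|$. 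The test word is $w:=(a^mb^m)^n$ with every letter marked, for $m>V$ and $n$ to be chosen large enough in terms of $m$, $k$, $N$. Any factor $y$ of $w$ with $|y|\le N<m$ has one of the shapes $a^i$, $b^i$, $a^ib^j$ or $b^ja^i$ with $i,j\ge 1$: the first two unbalance $|a|$ versus $|b|$ upon pumping, and the other two insert a short factor $b^ia^i$ or $a^ib^i$ with $0<i<m$ at a position that cannot lie on any block boundary of a word in $L_1$. Hence $w$ is not $(1,N)$-pumpable, so Theorem~\ref{thm:Substitution}(\ref{it:4.1.3}) supplies $i\in[1,M]$ with $U_i$ infinite and a $U_i$-switchable factorisation $w=w_0u_1w_1\cdots u_{m_i}w_{m_i}$ with $1\le\sum_\ell|u_\ell|\le N$.

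The key structural step is to force the block size to remain $m$ after switching to $\vec v^{(i)}$. Since $\sum_\ell|u_\ell|\le N$ and there are at most $k+1$ pieces $w_j$, averaging gives some $w_j$ with $|w_j|\ge(2mn-N)/(k+1)$, which exceeds $3m$ once $n$ is large enough; and any substring of $(a^mb^m)^n$ of length at least $3m$ contains a maximal run $a^m$ or $b^m$ flanked inside $w_j$ itself by the opposite letter. Substituting the stored tuple yields $w':=w_0v^{(i)}_1w_1\cdots v^{(i)}_{m_i}w_{m_i}\in L_1$, and writing $w'=(a^{m'}b^{m'})^{n'}$, this same maximal run of length $m$ persists as a maximal run in $w'$, forcing $m'=m$. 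Comparing lengths, $|w'|-|w|=\sum_\ell|v^{(i)}_\ell|-\sum_\ell|u_\ell|$ is a multiple of $2m$; but by construction $0<\sum_\ell|v^{(i)}_\ell|-\sum_\ell|u_\ell|\le V<2m$, a contradiction. The main obstacle is the structural claim that $m'=m$, which depends on the absolute bound $\sum_\ell|u_\ell|\le N$ (independent of $m$) letting some $w_j$ retain enough of the $(a^mb^m)$-pattern to pin down the block size; once this is secured, the linear-in-$m$ step $2m$ outpaces the fixed $\vec v^{(i)}$, closing the argument.
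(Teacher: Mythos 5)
Your proof is correct, but it takes a genuinely different route from the one in the paper. The paper's proof fixes $n=N_1+k+2$ and lets $m$ vary: it marks only one letter per period (the first $a$ of each $a^m$-block), introduces a notion of \emph{type} ($u$ has type $p$ if it contains $ab^pa$ or $ba^pb$), uses the fact that at least $k+2$ marked letters fall into at most $k+1$ factors $w_j$ to force some $w_j$ to have type $m$, concludes that the infinite \UpSet\ $U_i$ must then contain a tuple of type $m$, and finally derives the contradiction from there being finitely many \UpSet s but infinitely many values of $m$. You instead mark every letter, pre-select one tuple of total length exceeding $N_1$ from each infinite \UpSet, work with a \emph{single} witness word with $m$ larger than all of these tuples and $n$ large, use averaging to find a factor $w_j$ of length at least $3m$ that pins the block size of the switched word to $m$, and close with a divisibility argument: the length change is a nonzero multiple of $2m$ yet lies in $(0,V]$ with $V<2m$. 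Your argument is more arithmetic and avoids the pigeonhole over infinitely many $m$; the paper's argument exercises the marking mechanism of Theorem~\ref{thm:Substitution} nontrivially and its type machinery transfers directly to Corollary~\ref{cor:NotMCF}. Two small points, neither affecting correctness: for $y=a^ib^j$ the doubled factor produces $\ldots b^ja^ib^j\ldots$ (so the short maximal run has length $j$, not necessarily $i$); and you should remark that at least one \UpSet\ is infinite (so $V$ is well defined), which follows because otherwise alternative (b) of Theorem~\ref{thm:Substitution}~(\ref{it:4.1.3}) could never hold for your non-pumpable witness.
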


\begin{proof}
To see the first claim,  the Parikh image of $L_1$ is $\{(mn,mn)\mid m,n\in\N_+\}=\{(1,1)+k(1,1)\mid k\in \N_0\}$ which is  a linear subset of $\N_0^2$.

Let $\Sigma=\{a,b\}$. For $p\in\N_+$ a word $u\in \Sigma^*$ is said to be of \emph{type $p$} if it is of the form $cab^pad$ or $cba^pbd$, where $c,d\in \Sigma^*$.
If $u\in a^*b^*\cup b^*a^*$ then it's said to be of \emph{type $0$}. Note that words in $\Sigma^*$ may be of multiple types,
for instance  $ab^2a^3ba$  is of type $2, 3$ and $1$. A tuple  $(u_1,u_2,\dots,u_s)\in (\Sigma^*)^s$ is said to be
of  \emph{type $p$} if  it has  some coordinate $u_l$ of type $p\in \N_+$, and of \emph{type} $0$ otherwise.

 Suppose $w=(a^mb^m)^n$ for some  $m,n\in \N_+$. If  $w$  has a factorisation
\[w=w_0u_1w_1\dots w_{s-1}u_{s}w_{s}\] with $w_j,u_\ell\in \Sigma^*$ for all  $j\in[0,s], \ell\in[1,s]$,  then 
each factor $w_j,u_\ell$ has type $0$ or $m$.

Assume $L_1\subseteq \Sigma^*$ is $k$-MCF. 
 Set $\mu=1$ and let $M\in \N$ and sets $U_1,\dots, U_M$ be those given by  Theorem~\ref{thm:Substitution}, and 
let $N_1\in \N_+$ be the constant in Theorem~\ref{thm:Substitution}~\eqref{it:4.1.3}.
For each $m\in\N_+$, let \[w[m]=(\underline{a}a^{m-1}b^m)^{N_1+k+2}\]  where the underlined letters are marked (the first $a$ in each factor $a^m$) and the remaining letters are unmarked (that is, the marked letters are those in positions $2mj+1$, $j= 0, \dots, N_1+k+1$).
Then  $w[m]$ is of type $m$ and $|w[m]|_\#>N_1$, so by Theorem~\ref{thm:Substitution}~\eqref{it:4.1.3}  either (a) there is a factorisation $w[m]=w_0u_1w_1$ with $1\le |u_1|\le N_1$ and $w_0w_1\in L_1$; or (b) there exists $i\in[1,M]$ and $s\in [1,k]$ so that
$U_i$ is infinite and  $w[m]$ has a $U_i$-switchable factorisation \[w[m]=w_0u_1w_1\dots w_{s-1}u_{s}w_{s},\]
where $\sum_{l=1}^{s}|u_l|_\#\le N_1$, and each $w_j,u_l$ has type $0$ or $m$ (and no positive type $m'$ with $m\neq m'$).
If case (a) holds then, taking $m>N_1$ we have $|u_1|\le N_1$,  so $u_1$ has type $0$ and is not equal to $a^mb^m$ or $b^ma^m$,
so $w_0w_1$ cannot be in $L_1$, a contradiction. 

Hence, for $m>N_1$ case (b) must hold. In this event,  since there are $N_1+k+2$ marked letters,  we have $\sum_{j=0}^{s}|w_j|_\# \ge k+2\ge s+2$,
so there is some $w_j$ containing at least two marked letters. That is, some $w_j$ must
have type $m$ and, for each  $\mathbf{v}=(v_1,\dots, v_s)\in U_i$, as  $w_\mathbf{v}:=w_0v_1\cdots v_sw_s$ is in $L_1$ it
must also have type $m$. Now assume that $(v_1,\dots, v_s)\in U_i$ has  type $0$.
Then, as $w_{\mathbf{v}}$ is in $L_1$ and of type $m$  it follows that
$|v_\ell|\leq 2m$, since it has type $0$ but  cannot have a factor equal to $a^{m+1}$ or $b^{m+1}$.
  It follows that if all tuples in $U_i$ have type $0$ then $U_i$ would be finite, a contradiction.
  Therefore $U_i$ contains some tuple of type $m$.

  We have now shown that, if $L_1$ is  $k$-MCF, then given $m>N_1$ there exists
  $i\in [1,M]$ such that $U_i$ is infinite, every tuple of $U_i$ is of type $0$
  or type $m$,
   no  tuple of $U_i$ has a positive type not equal to $m$, and $U_i$ 
  contains a tuple of type $m$. Choosing such an $i$ for integer $m>N_1$  we have an injective map from an infinite subset of $\N_+$ to $[1,M]$, a contradiction. 
\end{proof}

The next result follows by modifying the definition of \emph{type} in the  proof of Lemma~\ref{lem:exampleA} so that, for $p\in \N_+$,
a word is of \emph{type $p$} if it is of the form $cab^pad$  where $c,d\in \Sigma^*$, and of \emph{type $0$} if it is of the form
 $b^sab^t$, where $0\le s,t<p$, and considering words with all $a$ letters  marked. 
\begin{corollary}\label{cor:NotMCF}
    The language $L_2=\{(ab^m)^n \mid m,n \geq 1 \}$ is not multiple context-free.
\end{corollary}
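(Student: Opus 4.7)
The plan is to mirror the proof of Lemma~\ref{lem:exampleA} with the modified type definition: a word is of \emph{type} $p \in \N_+$ if it contains $ab^p a$ as a factor, and of \emph{type} $0$ if it has the form $b^s a b^t$ with $0 \le s, t < p$. I will assume for contradiction that $L_2$ is $k$-MCF, apply Theorem~\ref{thm:Substitution} with $\mu = 1$ to obtain $M$, upsets $U_1, \ldots, U_M$ and $N_1 \in \N_+$, and for each $m > N_1$ consider the marked word $w[m] = (ab^m)^{N_1 + k + 2}$ with every $a$ marked, so that $|w[m]|_\# = N_1 + k + 2 > N_1$ and Theorem~\ref{thm:Substitution}(\ref{it:4.1.3}) applies.

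First I would rule out the pumpable case: if $w[m] = w_0 u_1 w_1$ with $1 \le |u_1| \le N_1$, then $|u_1| < m$ forces $u_1$ to have at most one $a$. A short direct inspection shows pumping creates a forbidden $b$-run: either extending a $b^m$ block when $u_1 = b^i$, or creating a factor $ab^{i+j} a$ with $i+j < m$ when $u_1 = b^i a b^j$. Hence the switchable case applies: some infinite $U_{i(m)}$ admits a factorisation $w[m] = w_0 u_1 w_1 \cdots u_s w_s$ with $s \le k$ and $\sum_{j=1}^s |u_j|_\# \le N_1$.

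Since $\sum_{j=0}^s |w_j|_\# \ge k+2 > s$, by pigeonhole some $w_j$ carries at least two marked $a$'s, and since any two marked positions in $w[m]$ are separated by a complete $b^m$ block, $w_j$ contains $ab^m a$ (so $w_j$ is of type $m$). Consequently every substitution $(v_1, \ldots, v_s) \in U_{i(m)}$ produces a word in $L_2$ containing $ab^m a$, hence of the form $(ab^m)^{n''}$. Each $v_l$ is then a factor of such a word, so either has at most one $a$ (type $0$, forcing $|v_l| \le 2m + 1$) or contains two $a$'s and hence a full $b^m$ block between them (type $m$). Type $0$ tuples are then bounded in total length by $s(2m+1)$, so there are finitely many; since $U_{i(m)}$ is infinite, it must contain at least one tuple in which some coordinate is of type $m$.

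To finish I would use pigeonhole on $m$: the map $m \mapsto i(m)$ from the infinitely many valid $m$ to the finite set $[1, M]$ yields distinct $m_1, m_2 > N_1$ with $i(m_1) = i(m_2) = i^\ast$. Picking a tuple in $U_{i^\ast}$ with some coordinate $v_l$ of type $m_1$ and substituting it into the $U_{i^\ast}$-switchable factorisation of $w[m_2]$ gives a word in $L_2$ containing both $ab^{m_1}a$ (from $v_l$) and $ab^{m_2}a$ (from the $w_j$ of type $m_2$), contradicting the uniformity of $b$-run lengths in $L_2$. The main obstacle, as in Lemma~\ref{lem:exampleA}, is pinning down the case analysis: verifying that any factor of a word in $L_2$ with two or more $a$'s must contain $ab^m a$ (so carries the $b^m$ block) and that type $0$ coordinates of tuples in $U_{i(m)}$ have length bounded by $2m + 1$, so there are only finitely many of them.
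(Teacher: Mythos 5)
Your proposal is correct and follows exactly the route the paper intends: the paper proves this corollary only by the one-sentence instruction to rerun the proof of Lemma~\ref{lem:exampleA} with the modified notion of \emph{type} (type $p$ meaning a factor $ab^pa$, type $0$ meaning $b^sab^t$ with $s,t<p$) and with all $a$'s marked, and your argument is a faithful, correctly detailed expansion of that, including making explicit the final substitution/pigeonhole step that the paper leaves implicit in ``finitely many sets $U_i$, infinitely many $m$''. No gaps.
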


For comparison, here is a sketch of  a proof of Lemma~\ref{lem:exampleA} using  the weak pumping lemma of Seki \emph{et al.} \cite{Seki}.
First we recall the weak pumping lemma. 
\begin{lemma}[{\cite[Lemma 3.2]{Seki}}]\label{lem:Seki}
For any infinite $k$-MCF language $L\subseteq \Sigma^*$, there exist some $u_j\in \Sigma^*, j\in[1,k+1]$ and $v_j,w_j,s_j\in \Sigma^*, j\in[1,k]$ which satisfy the following conditions.
\begin{\arabiclist}\item $\sum_{j=1}^k |v_js_j|>0$ and 
\item for any non-negative integer $i$, \[u_1v_1^iw_1s_1^iu_2v_2^iw_2s_2^iu_3\dots u_kv_k^iw_ks_k^iu_{k+1}\in L\]
\end{\arabiclist}
In other words, any infinite $k$-MCF language has at least one pumpable word in it (or a family that looks like a single word pumped).
\end{lemma}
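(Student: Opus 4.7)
The plan is to invoke the $k$-restricted tree stack automaton characterisation (Theorem~\ref{equ}), extract a matching pair of vertices on a long branch by pigeonhole on history arrays, and then iterate the Single Swap Lemma (Lemma~\ref{ss}) to produce the pumping form.

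I would first fix a \good\ $k$-restricted TSA $\mathcal A=(Q,C,\Sigma,q_0,\delta,Q_f)$ accepting $L$, with runs terminating at the root (Remark~\ref{rmk:finish-root}). The number of possible history arrays at a non-root vertex is bounded by the finite constant $B:=\sum_{i=1}^{k}(|C||Q|)^{2i}$. Because $L$ is infinite and Lemma~\ref{lem:atv_2} caps the number of input letters that can be read while the pointer is at any single vertex, some $w\in L$ admits a proper accepting run $\mathcal R$ whose \dendrite\ $T$ contains a simple root-to-leaf path of length exceeding $B$. Pigeonhole along that path supplies two non-root vertices $\nu_a,\nu_b$, with $\nu_b$ strictly above $\nu_a$, that share the same history array $\bigl(\begin{smallmatrix}\mathbf c\\ \mathbf q\end{smallmatrix}\bigr)$ of some width $s\in[1,k]$.

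Assuming (after possibly strengthening the pigeonhole) that the $s$ up-visits to $\nu_b$ correspond bijectively to the $s$ up-visits to $\nu_a$, I would split the $j$-th sojourn above $\nu_a$ into three pieces: the portion $v_j$ read before the pointer first reaches $\nu_b$ during that sojourn, the portion $u_j'$ read while the pointer is at or above $\nu_b$ (which is the $j$-th factor of the $\nu_b$-factorisation), and the portion $t_j$ read after the pointer leaves $\nu_b$ and returns to $\nu_a$. This gives $u_j=v_ju_j't_j$, and both $(v_1u_1't_1,\ldots,v_su_s't_s)$ and $(u_1',\ldots,u_s')$ lie in the \UpSet\ $U_{(\mathbf c,\mathbf q)}$. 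By induction on $i$, chaining $i$ nested copies of the $\nu_a$-to-$\nu_b$ detour along one branch of the tree stack (valid under the $k$-restriction, since each vertex in the chain is still visited from below at most $s\le k$ times) and applying Lemma~\ref{ss} repeatedly should give $(v_1^iu_1't_1^i,\ldots,v_s^iu_s't_s^i)\in U_{(\mathbf c,\mathbf q)}$ for every $i\ge 0$. Substituting back into the $\nu_a$-factorisation then produces
\[
w_0\,v_1^i u_1' t_1^i\,w_1\cdots v_s^i u_s' t_s^i\,w_s\ \in\ L \qquad\text{for all }i\ge 0,
\]
which, after padding with empty strings up to width $k$, matches the claimed form.

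The hard part will be two-fold. First, forcing the bijective correspondence above may require refining the pigeonhole so that the history array also encodes the visit-pattern of descendants during each $\nu_a$-sojourn, increasing $B$ but keeping it finite. Second, the non-triviality condition $\sum_{j=1}^{s}|v_jt_j|>0$ must be secured: if every detour segment $v_j,t_j$ were empty, iteration would only reproduce the same word. I would handle this via a preprocessing step on $\mathcal A$ analogous to standardisation (Remark~\ref{rmk:standardised}), eliminating any inert $\varepsilon$-only detours so that a non-trivial loop between matching vertices is guaranteed to read at least one terminal; alternatively, by taking $w$ long enough that the shortcut would leave infinitely many words of bounded length, contradicting the finiteness of $\mathcal A$'s $\varepsilon$-reachability structure.
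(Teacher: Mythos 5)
The paper does not prove this statement at all: it is imported verbatim from Seki \emph{et al.} as \cite[Lemma 3.2]{Seki}, and the proof environment that follows it in the paper is actually a sketch of Lemma~\ref{lem:exampleA} \emph{using} the weak pumping lemma, not a proof of the lemma itself. So your proposal is not paralleling anything in the paper; it is an attempt at an independent proof via the tree stack machinery, and it has to be judged on its own.

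Judged that way, there is a genuine gap, and it sits exactly where you flagged it: the ``bijective correspondence'' between the $s$ sojourns above $\nu_a$ and the $s$ sojourns above $\nu_b$. Sharing a history array forces the two vertices to be visited from below the same number $s$ of times, but the induced monotone map $\phi\colon[1,s]\to[1,s]$ (sending each sojourn above $\nu_b$ to the sojourn above $\nu_a$ containing it) need not be the identity: one sojourn above $\nu_a$ may contain two sojourns above $\nu_b$ while another contains none, in which case $u_j$ does not factor as $v_ju_j't_j$ and the iteration collapses. Your proposed repair --- enlarging the pigeonhole to record the visit-pattern of each vertex's path-child --- does not fix this, because $\phi$ is the \emph{composite} of all the local maps along the path from $\nu_a$ to $\nu_b$, and identical local data at every vertex is compatible with every composite being non-surjective (e.g.\ every local map constant). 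Making the argument work requires either a Ramsey-type selection of a pair whose composite map is idempotent, followed by an analysis of how iterating a non-bijective context still yields strings of the form $u_1v_1^iw_1s_1^i\cdots$, or else passing to a grammar normal form (non-permuting, non-deleting) as Seki \emph{et al.} do; both are substantive arguments, not refinements of the pigeonhole. The non-triviality condition $\sum_j|v_jt_j|>0$ is a second unfinished obligation, though a more routine one. On the positive side, your use of Lemma~\ref{ss} for the induction is sound (the swap at $\nu_b$ leaves the run below $\nu_b$, hence the history array at $\nu_a$, untouched), and the worry about preserving $k$-restrictedness is unnecessary, since membership in $L(\mathcal A)$ only requires \emph{some} accepting run.
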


\begin{proof}[Sketch of proof of Lemma~\ref{lem:Seki} as given in  {\cite[ Lemma 6.14]{kallmeyer}}]
Suppose for contradiction that $L_1$ is $k$-MCF for some $k\geq 1$. 
Let $L'$ be $L_1$ intersected with the regular language
\[\{a^{p_1}b^{q_1}\cdots a^{p_{k+1}}b^{q_{k+1}}\mid p_i,q_i \geq 1 \}.\]  
Then $L'$ is   $k$-MCF by Proposition~\ref{prop:closure_props}~\eqref{item:2}.

Each word in $L'$ has the form
$(a^mb^m)^{k+1}$ for some $m\in \N_+$. By  Lemma~\ref{lem:Seki}  there exists $m\in\N_+$ so that 
\[u_1v_1w_1s_1u_2v_2w_2s_2u_3\dots u_kv_kw_ks_ku_{k+1}=(a^mb^m)^{k+1}\] 
and for each $i\in \N$, 
\[w_i=u_1v_1^iw_1s_1^iu_2v_2^iw_2s_2^iu_3\dots u_kv_k^iw_ks_k^iu_{k+1}\in L'.\] 
 If any factor $v_j$ or $s_j$, $j \in [1, k]$ has both $a$ and $b$ letters, setting $i=2$ gives a word with 
more than $k+1$ factors of each letter, therefore each $v_i,s_i$ must contain only one type of letter.
This means that at most $2k$ of the factors $a^m,b^m$ contain $v_j,s_j$ and the remaining factors, of which there are at least $2$,
do not, so $w_i\not\in L'$, 
for $i\neq 1$, contrary to our assumption. Therefore  $L'$, and hence $L_1$, is not multiple context-free.\end{proof}

\subsection{The word problem of $F_2\times F_2$}\label{subsec:F2F2}

Recall that the word problem for a group $G$ with generating set $X$, denoted $\operatorname{WP}{(G,X)}$,  is the set of words in $\left(X\cup X^{-1}\right)^*$ which equal the identity element of the group (where $X^{-1}=\{x^{-1}\mid x\in X\}$ contains the inverse element for each $x\in X$).
 Since \MCF\ languages are closed under inverse homomorphism (Proposition~\ref{prop:closure_props}~\eqref{item:3}), having \MCF\ word problem is independent of choice of finite generating set.
Salvati's result shows that the word problem of $\Z^2$ is 2-\MCF\ (see also \cite{Nederhof,Caminati}).
  Ho \cite{Turbo} extended this to show that $\Z^n$ is $k$-\MCF\ for some $k>n$, and later Gebhardt \emph{et al.} \cite{ZnMCF} sharpened this by showing that in fact the word problem of $\Z^n$ is $n$-\MCF.
Since having a \MCF\ word problem is closed under finite index supergroup  \cite{Gilman2024},
   it follows that all finitely generated virtually  abelian groups have \MCF\ word problem, and since context-free languages are  \MCF, 
   virtually free groups have \MCF\ word problem by Muller-Schupp \cite{MS}.
 In addition, Kropholler and Spriano \cite{KS} show that having multiple context-free word problem is closed under taking  free products.
 Gilman,  Kropholler and Schleimer \cite{GKS} proved that nilpotent groups and some right-angled Artin groups including $F_2\times F_2$ (where $F_2$ is the free group of rank $2$)
 and $\GG(P_4)$, the right-angled Artin group with commutation graph $P_4$ (the tree with $2$ leaves and $2$ vertices of degree $2$), 
 do not have \MCF\ word problem by showing that the word problem intersected with some regular language over the generators yields a language with Parikh image that is not semi-linear (this contradicts that the word problem is \MCF\ since \MCF\ languages are closed under intersection with regular languages, and have semi-linear Parikh image). Thus having a \MCF\ word problem is not closed under direct product.
It is unknown if $F_2\times \Z$ has \MCF\ word problem.

Here we give a short alternative proof   that the word problem for $F_2\times F_2$ is not \MCF\ via the Substitution Lemma (Theorem~\ref{thm:Substitution}). 
 Later, Section~\ref{sec:RationalSubset}, we briefly consider decision problems for rational subsets of groups with multiple context-free word problem, which yields  
   even shorter proofs  that the word problems of $F_2\times F_2$ and $\GG(P_4)$ are not \MCF.

\begin{proposition}[{\cite[Theorem 25]{GKS}}]
\label{prop:F2xF2}
The word problem of $F_2\times F_2$ is not multiple context-free.
\end{proposition}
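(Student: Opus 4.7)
The plan is to assume $L=\WP(F_2\times F_2)$ is $k$-multiple context-free for some $k\in\mathbb{N}_+$ and derive a contradiction by combining Theorem~\ref{thm:Substitution} with closure under intersection with regular languages (Proposition~\ref{prop:closure_props}(ii)). Fix generators $X=\{a,b,c,d\}$ with $\langle a,b\rangle$ the first factor and $\langle c,d\rangle$ the second, so each pair of letters from different factors commutes and a word $w\in(X\cup X^{-1})^*$ represents the identity iff both projections $\pi_1(w)\in\{a,b,a^{-1},b^{-1}\}^*$ and $\pi_2(w)\in\{c,d,c^{-1},d^{-1}\}^*$ reduce freely to the empty word. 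I would work inside $L'=L\cap R$ for a regular language $R$ chosen so that the two free-group reductions are coupled through a rigid alternation pattern, following the template of the proofs of Lemma~\ref{lem:exampleA} and Corollary~\ref{cor:NotMCF}.

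The candidate $R$ I would try first is $R=(a^+c^+a^{-+}c^{-+})^+$, strengthened if necessary to pull in $b$ and $d$ (for example $R=(a^+bc^+d)^+(d^{-1}c^{-+}b^{-1}a^{-+})^+$) so that the LIFO matching enforced by free-group reduction in each factor is cross-linked by the alternation template. As a test family, take
\[
w_m=(\alpha_m)^{N_\mu+k+2}
\]
where $\alpha_m$ is a ``period'' of length linear in $m$ built from the $R$-template (e.g.\ $\alpha_m=a^mc^ma^{-m}c^{-m}$) and mark the first letter of each period. Fix $N_\mu$ from Theorem~\ref{thm:Substitution}(\ref{it:4.1.3}) with $\mu=1$, and choose $m>N_\mu$, so that $|w_m|_\#\ge N_\mu+k+2>N_\mu$. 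The lemma gives two alternatives. In the pumping case $w_m=xy^nz\in L'$ for all $n$ with $1\le|y|\le N_\mu<m$; since $y$ is confined to at most two adjacent single-letter blocks of $w_m$, pumping either shifts the net $a$- or $c$-count (killing triviality of $\pi_1$ or $\pi_2$) or injects a wrong-sign letter (breaking the $R$-template), so this alternative is excluded. In the switching case $w_m$ has a $U_i$-switchable factorisation $w_0u_1w_1\cdots u_{m_i}w_{m_i}$ with $U_i$ infinite, $m_i\le k$ and $\sum_j|u_j|_\#\le N_\mu$. Each $(v_1,\ldots,v_{m_i})\in U_i$ must preserve the $R$-template and both free-group identities simultaneously, which ties the exponents inside each $v_j$ to $m$, so the tuple is ``typed by $m$'' in the sense of Lemma~\ref{lem:exampleA}. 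As there are only finitely many $U_i$'s but infinitely many $m$, some fixed $U_i$ must contain tuples typed by arbitrarily large, mutually incompatible $m$, contradicting that $U_i$ was a set of substitutes all compatible with the history array at one vertex.

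The main obstacle is the design of $R$: an $R$ whose intersection with $L$ degenerates to something already MCF (essentially a bounded-counter Dyck) will give no contradiction. The delicate point is to use all four letters so that the non-commutativity of each $F_2$ factor forces the exponents in any permitted substitution to be rigidly determined by the surrounding $w_j$'s, and hence by the parameter $m$. Once $R$ is tuned this way, the pumping alternative dies by the same block-counting argument used for $S_m$ in Lemma~\ref{lem:exampleXX}, and the switching alternative dies by the pigeonhole mechanism of Lemma~\ref{lem:exampleA}, yielding the required contradiction and hence the non-MCF property of $\WP(F_2\times F_2)$.
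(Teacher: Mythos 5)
Your overall strategy (intersect $\WP(F_2\times F_2)$ with a carefully chosen regular language, then derive a contradiction from the Substitution Lemma) is the same as the paper's, but you have left the one genuinely hard step unresolved: the construction of the regular language. The paper's $T$ is
\[T=\left((ca)^+(db)^+\right)^+(b^{-1})^+ \left(\left(d^{-1}a^{-1}\right)^+\left(c^{-1}b^{-1}\right)^+\right)^*\left(d^{-1}a^{-1}\right)^+ \left(c^{-1}\right)^+,\]
and its whole point is that the free cancellation in the factor $\langle a,b\rangle$ pairs the exponents as $y_n=p_1,\ x_n=q_1,\ y_{n-1}=p_2,\dots$, while the cancellation in $\langle c,d\rangle$ pairs them with a shift of one, $y_n=q_1,\ x_n=p_2,\ y_{n-1}=q_2,\dots$; chaining the two gives $p_1=y_n=q_1=x_n=p_2=\cdots$, so \emph{all} exponents are forced equal and $T\cap L$ maps under an erasing homomorphism onto $\{(a^mb^m)^n\mid m,n\in\N_+\}$, which is not \MCF\ by Lemma~\ref{lem:exampleA}. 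Neither of your candidate languages achieves this offset. With $R=(a^+c^+a^{-+}c^{-+})^+$ each projection lands in a cyclic subgroup, so triviality reduces to two sum conditions and $L\cap R$ is essentially a piece of $\WP(\Z^2)$, which \emph{is} \MCF. With $R=(a^+bc^+d)^+(d^{-1}c^{-+}b^{-1}a^{-+})^+$ the two projections impose the \emph{same} nested matching on the same indices ($x_i=p_{n+1-i}$ and $y_i=q_{n+1-i}$ independently), so the two conditions never interact: the intersection is a single nested-palindrome-type language recognisable with one stack, hence context-free, and again yields no contradiction. You correctly flag the design of $R$ as ``the main obstacle'' and ``the delicate point,'' but that obstacle is precisely the content of the proof; without the shift mechanism there is nothing to contradict.

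A secondary, smaller issue: even granting a correct $R$, your plan applies the Substitution Lemma directly to $L\cap R$ and argues that substitutable tuples are ``typed by $m$.'' The paper instead pushes $T\cap L$ through the erasing homomorphism $\psi$ to land exactly on $\{(a^mb^m)^n\}$ and quotes Lemma~\ref{lem:exampleA}, where the type argument is carried out carefully (in particular, one must first rule out that all tuples in the infinite \UpSet\ have type $0$, using the bound $|v_\ell|\le 2m$, before the pigeonhole over $m$ applies). Your sketch of the switching case compresses this into one sentence and would need that bookkeeping to be made rigorous. But the decisive gap is the missing offset in the regular language.
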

\begin{proof}
Let $\langle a,b,c,d\mid [a,c]=[b,c]=[a,d]=[b,d]=1\rangle$ be the standard presentation for $F_2\times F_2$ and fix the generating set $\Sigma=\{a,b,c,d, a^{-1},b^{-1},c^{-1}, d^{-1}\}$. For $w \in \Sigma^*$ we use the standard notations $w^*$ and $w^+$ to denote the sets $\{w^i\mid i\in \N_0\}$ and $\{w^i\mid i\in \N_+\}$ respectively.
Let $A=\langle a,b\mid-\rangle$ and $C=\langle c,d\mid-\rangle$ be subgroups isomorphic to $F_2$.
Let $L\subseteq \Sigma^*$ be the language of all words equal to the identity in $F_2\times F_2$. Assume for contradiction that $L$ is multiple context-free.

Define a regular language $T\subseteq \Sigma^*$ by the regular expression
\[T=\left((ca)^+(db)^+\right)^+(b^{-1})^+ \left(\left(d^{-1}a^{-1}\right)^+\left(c^{-1}b^{-1}\right)^+\right)^*\left(d^{-1}a^{-1}\right)^+ \left(c^{-1}\right)^+.\]

For $x_i,y_i,p_i,q_i\in\N_+$ let  $w\in T$ be a word of the  form
\begin{equation*}
\begin{split}
\left[(ca)^{x_1}(db)^{y_1}\cdots(ca)^{x_n}(db)^{y_n}\right] (b^{-1})^{p_1}
\left[\left(d^{-1}a^{-1}\right)^{q_1}\left(c^{-1}b^{-1}\right)^{p_2}\cdots \right.\\\left.
 \left(c^{-1}b^{-1}\right)^{p_t} \left(d^{-1}a^{-1}\right)^{q_{t}} \right]\left(c^{-1}\right)^{p_{t+1}}.
\end{split}
\end{equation*}

If $w\in T\cap L$, we have
\[a^{x_1}b^{y_1}\cdots a^{x_n}b^{y_n}
(b^{-1})^{p_1}\left(a^{-1}\right)^{q_1}\left(b^{-1}\right)^{p_2}\left(a^{-1}\right)^{q_2}\cdots \left(b^{-1}\right)^{p_t}  \left(a^{-1}\right)^{q_{t}}
=_A1,\]
which implies 
\begin{equation}\label{eqn1}
  y_n=p_1, \  x_n=q_1, y_{n-1}=p_2, \  x_{n-1}=q_2, \dots, y_1=p_t, x_1=q_t
  \textrm{ and } n=t.
\end{equation}

From the second factor we  have
\[c^{x_1}d^{y_1}\cdots c^{x_n}d^{y_n}
\left(d^{-1}\right)^{q_1}\left(c^{-1}\right)^{p_2}\left(d^{-1}\right)^{q_2}\cdots  \left(c^{-1}\right)^{p_t}\left(d^{-1}\right)^{q_t} \left(c^{-1}\right)^{p_{t+1}}=_C1,\] and, since $n=t$, we can write this as 
\[c^{x_1}d^{y_1}\cdots c^{x_n}d^{y_n}
\left(d^{-1}\right)^{q_1}\left(c^{-1}\right)^{p_2}\left(d^{-1}\right)^{q_2}\cdots  \left(c^{-1}\right)^{p_n}\left(d^{-1}\right)^{q_n} \left(c^{-1}\right)^{p_{n+1}}=_C1,\] 
which implies 
\begin{equation}\label{eqn2} y_n=q_1, x_n=p_2, y_{n-1}=q_2, x_{n-1}=p_3, \dots, y_1=q_n, x_1=p_{n+1}.\end{equation}

Putting Eqs.~\eqref{eqn1} and~\eqref{eqn2} together we get
\[p_1=y_n=q_1=x_n=p_2=y_{n-1}=q_2=x_{n-1}=\dots =p_n=y_1=q_n=x_1=p_{n+1}
\]
and so all powers in $w$ must be identical. Setting  $x_i=y_i=p_i=q_i=m$ we obtain 
\[w=((ca)^{m}(db)^{m})^n
(b^{-1})^{m}\left((d^{-1}a^{-1})^{m}(c^{-1}b^{-1})^{m}\right)^{n-1}(d^{-1}a^{-1})^{m}\left(c^{-1}\right)^{m}.\]
It follows that 
\[T\cap L\subseteq\{((ca)^{m}(db)^{m})^n
(b^{-1})^{m}\left((d^{-1}a^{-1})^{m}(c^{-1}b^{-1})^{m}\right)^{n-1}(d^{-1}a^{-1})^{m}\left(c^{-1}\right)^{m}
\mid m,n\in\N_+\}.\]

Conversely, since every word of the form 
\[((ca)^{m}(db)^{m})^n
(b^{-1})^{m}\left((d^{-1}a^{-1})^{m}(c^{-1}b^{-1})^{m}\right)^{n-1}(d^{-1}a^{-1})^{m}\left(c^{-1}\right)^{m}\]
 for $m,n\in \N_+$ evaluates to the identity in $F_2\times F_2$, and has the form of words in $T$, the above set inclusion is equality.
Applying the erasing homomorphism $\psi\colon \Sigma^*\to \{a,b\}^*$ defined by $a\mapsto a, b\mapsto b, x\mapsto \varepsilon$ for all $x\in\{c,c^{-1},d,d^{-1},a^{-1},b^{-1}\}$, we obtain 
 \[\psi\left(T\cap L\right)=\{(a^mb^m)^n\mid m,n\in \N_+\}\] which is not \MCF\ by Lemma~\ref{lem:exampleA}
  (itself a direct consequence of  Theorem~\ref{thm:Substitution}~\eqref{it:4.1.3}).
\end{proof}

\subsection{Discussion}
We finish this section with some further observations on the applicability and variations of Theorem~\ref{thm:Substitution}.

\subsubsection{Substitution lemma for context-free languages} 
Since $1$-MCF languages are exactly the context-free languages, 
one might think that Theorem~\ref{thm:Substitution} with $k=1$ yields a new way to prove a language is not context-free. However,
Theorem~\ref{thm:Substitution} with $k=1$ follows directly 
from Ogden's lemma for context-free languages as follows.

Recall Ogden's lemma.
\begin{lemma}[Ogden's lemma {\cite{Ogden}}]\label{lem:Ogden}
  Let $L$ be a context-free language. Then there exists $P\in \N_+$ such that, for all $w\in L$ with at least $P$ letters marked, there is a factorisation
  $w=aubvc$ of $w$ such that
  \begin{\arabiclist}
\item $|uv|_\#\ge 1$;
\item $|ubv|_\#\le P$ and
  \item for all $i\in \N$, $au^ibv^ic\in L$.
  \end{\arabiclist}
\end{lemma}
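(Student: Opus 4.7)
The plan is to follow the classical parse-tree argument of Ogden, which is essentially a refinement of the proof of the standard pumping lemma for context-free languages. First I would put the grammar for $L$ into Chomsky normal form, so every production is either $A\to BC$ with $A,B,C$ non-terminals, or $A\to a$ with $a\in\Sigma$. Let $n$ be the number of non-terminals of this grammar and set $P=2^{n+1}$ (the exact constant can be tuned later). Given $w\in L$ with at least $P$ marked positions, fix a parse tree $T$ for $w$.

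Next I would define a distinguished root-to-leaf path in $T$ by repeatedly descending to the child whose subtree contains the larger number of marked leaves (breaking ties arbitrarily). Call an internal node on this path a \emph{branch point} if both of its children have at least one marked descendant. Since on a branch point the marked count can at most halve as we descend, and on a non-branch point the marked count is preserved, starting from $\ge 2^{n+1}$ marked leaves and ending at a single leaf forces the existence of at least $n+1$ branch points along the distinguished path. By pigeonhole applied to the \emph{last} $n+1$ branch points (those closest to the leaf), two of them must carry the same non-terminal label; call them $N_1$ (closer to the root) and $N_2$ (closer to the leaf), both labelled by some non-terminal $A$.

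From this configuration I would read off the factorisation $w=aubvc$, where $b$ is the yield of the subtree rooted at $N_2$ and $ubv$ is the yield of the subtree rooted at $N_1$. Because $N_1, N_2$ carry the same variable $A$, splicing in subderivations gives $au^ibv^ic\in L$ for every $i\in\N_0$ by the standard CFG pumping argument (replace the $A$-subtree at $N_2$ by the $A$-derivation between $N_1$ and $N_2$, iterated $i$ times, or remove it when $i=0$). The condition $|uv|_\#\ge 1$ holds because $N_1$ is itself a branch point: the child of $N_1$ that does not contain $N_2$ contributes at least one marked leaf, and that leaf lies in $u$ or $v$. The condition $|ubv|_\#\le P$ follows because $N_1$ was chosen among the last $n+1$ branch points, so at most $n+1$ branch points lie weakly below $N_1$; since the marked count at most doubles when moving one branch point up the distinguished path and is constant otherwise, the number of marked leaves under $N_1$ is at most $2^{n+1}=P$.

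The main obstacle, and the part that always needs slight care in write-ups, is the bookkeeping around the constant and the ``last $n+1$ branch points'' choice: one must ensure simultaneously that $N_1$ is deep enough to control $|ubv|_\#$ from above, and that $N_1$ really is a branch point so that $|uv|_\#\ge 1$. Both are automatic once one commits to picking the two repeating labels from the bottom $n+1$ branch points, but stating this cleanly (and handling the degenerate case where very few marked letters actually force branching) is where the details live. The pumping step itself and the reduction to Chomsky normal form are entirely routine.
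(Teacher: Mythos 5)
The paper does not prove this lemma: it is recalled as a classical result and cited directly to Ogden's original article, so there is no in-paper argument to compare against. Your proposal is the standard parse-tree proof (Chomsky normal form, distinguished path chosen by marked-leaf majority, branch points, pigeonhole on the deepest $n+1$ branch points), and it is correct: the count of at least $n+1$ branch points from $P=2^{n+1}$, the bound $|ubv|_\#\le 2^{n+1}=P$ from taking both repeated labels among the bottom $n+1$ branch points, and $|uv|_\#\ge 1$ from the off-path child of the upper branch point all check out, as does the splicing argument for $au^ibv^ic\in L$ (including $i=0$).
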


\begin{proof}[Proof that  Theorem~\ref{thm:Substitution}  for $1$-\MCF\ languages follows from {Lemma~\ref{lem:Ogden}}] Let  $L\subseteq \Sigma^*$ be an infinite context-free language and let $P$ be the constant guaranteed by {Lemma~\ref{lem:Ogden}}.
 Given $\mu\in \N_+$ define sets $U_{a_0,b_0,c_0,u,v}=\{(a_0u^ib_0,v^ic_0)\in(\Sigma^*)^2\mid i\in \N\}$ for all  $a_0,b_0,c_0,u,v\in \Sigma^*$ such that  $0<|uv|_\#$ and  $\mu\leq |a_0ub_0vc_0|_\#\leq \max\{\mu,P\}$, and note that this is a finite collection of infinite sets.

 Given $\lambda\in\N_+$ let $N_\lambda=\max\{\mu,P\}+\lambda$. If $w\in L$ and  $|w|_\#\geq N_\lambda$ then $|w|_\#\geq P$ so, from Ogden's lemma,
 $w=aubvc$ with $|uv|_\#>0$ and $|ubv|_\#\leq P$, so $|uv|_\#\leq P$.
 If $|uv|_\#<\mu$ let $a_0,b_0,c_0\in\Sigma^*$ be factors of $w$  such that  $a_0$ is a suffix of $a$, $b_0$ is a prefix of $b$ and $c_0$ is a prefix of $c$, and $|a_0ub_0vc_0|_\#=\mu$, and otherwise set $a_0=b_0=c_0=\varepsilon$; so $|a_0ub_0vc_0|_\#\leq \max\{\mu,P\}$.
 Thus  $(a_0ub_0,vc_0)\in U_{a_0,b_0,c_0,u,v}$,
  and $w$ has a factorisation
  $w=a_1a_0ub_0b_1vc_0c_1$ where $a_1$ is a prefix of $u$ and $b_1,c_1$ are suffixes of $b,c$ respectively, $|a_1b_1c_1|_\#\geq \lambda$
  since $|w|_\#=|a_0ub_0vc_0|_\#+|a_1b_1c_1|_\#$ and $|a_0ub_0vc_0|_\#\leq \max\{\mu,P\}$.
  {Lemma~\ref{lem:Ogden}} implies that    $a_1(a_0u^ib_0)b_1(v^ic_0)c_1\in L$ for all $i\in \N$. 
  This establishes item \eqref{it:4.1.2}.

Item~(3) follows from the previous paragraph after replacing $N_\lambda$ by $N_\mu=\max\{\mu,P\}$, yielding $\mu\leq |a_0ub_0vc_0|_\#\leq \max\{\mu,P\}=N_\mu$.

Item~(1) can be obtained by taking only the sets  $U_{a_0,b_0,c_0,u,v}$ where there exist $ a_1,b_1,c_1$ so that $a_1a_0ub_0b_1vc_0c_1\in L $.\end{proof}

\subsubsection{Non-application to a language of Lehner and Lindorfer}
In \cite{LL} Lehner and Lindorfer prove that for $a_1,\dots, a_k$ distinct letters, the
language \[L_k=\left\{a_1^{n_1}\cdots a_k^{n_k}\middle| n_1\geq \cdots \geq n_k \geq 0 \right\}\]  is $\lceil \frac{k}{2} \rceil$-MCF and
not $(\lceil \frac{k}{2} \rceil - 1)$-MCF. It is not difficult to show that $L_k$ is $\lceil \frac{k}{2} \rceil$-MCF, either by constructing a $\lceil \frac{k}{2} \rceil$-MCF grammar as in \cite{LL}, or directly by constructing a $\lceil \frac{k}{2} \rceil-$restricted TSA. The second part of the statement is shown in \cite{LL}  via simulating pumping on derivation trees of a grammar in Seki's normal form.
Here we observe that none of the statements of  Theorem~\ref{thm:Substitution} are sufficient to show $L_k$ is not $\lceil \frac{k}{2} \rceil$-MCF, in the case where $k=5$ and all letters are marked.

Given $\mu\in \N_+$ there exist the following infinite set $U_1=\{(a_1^i,\varepsilon)\mid i\in \N, i\ge \mu\}$,  
such that any word \[w=a_1^{n_1}a_2^{n_2}a_3^{n_3}a_4^{n_4} a_5^{n_5}\] of length at least $N=5(\mu-1)+1$
can be cut up into five factors $w_0u_1w_1u_2w_2$ with $(u_1,u_2)\in U_1$  as follows. As $w$ has length at least $5(\mu-1)+1$
it must have a prefix $a_1^\mu$, say $w=a_1^\mu w'$. Take  $u_1=a_1^\mu$, $w_0=w_1=u_1=\varepsilon$ and $w_2=w'$ to give a
factorisation $w=w_0u_1w_1u_2w_2$ with $(u_1,u_2)\in U_1$. This is a $U_1$-switchable factorisation as, for any $(v_1,v_2)\in U_1$, we have
$w''=w_0v_1w_1u_2w_2=a_1^iw'$, with $i\ge \mu$, so $w''\in L_5$.

Here the length constraints of the conclusion of Theorem~\ref{thm:Substitution}~\eqref{it:4.1.3} are met automatically, and choosing $N_\lambda\ge N +\lambda$, the same is true for Theorem~\ref{thm:Substitution}~\eqref{it:4.1.2}.

Indeed, if we restrict to the case where  all letters are marked,  Theorem~\ref{thm:Substitution} cannot even be used to show $L_5$ is not $1$-MCF! 
It is unclear to the authors whether there might be some way to use markings to show  $L_5$ is not $2$-MCF (or not $1$-MCF).

\subsubsection{Additional statement bounding $\sum_{j=0}^s|w_j|$ from above}
A version of Theorem~\ref{thm:Substitution}, where the sum of the marked lengths of the factors $w_j$ can be bounded above by a constant
can also be proved. For completeness we include this version below, even though we were unable to find a useful application for it. The proof 
follows lines similar to that of Theorem~\ref{thm:Substitution},
 but note that  the sets $U_i$ and associated parameters in the following are not necessarily the same as those in
 Theorem~\ref{thm:Substitution}. 

 For the proof, several variants of definitions in Section \ref{sec:prelims} are needed.  
    Let $\cA$, $Q$, $C$ and $D$ be as in the proof of Theorem~\ref{thm:Substitution}. We shall use a version of the \historyvec\ to
    record the states and labels of $\cA$ as the pointer moves away from the root and just before the pointer moves back down to the root.
    Let $w\in L$ and let $\cR=\t_1\cdots \t_r$ be a proper run of $\cA$ accepting $w$, with \dendron\ $T$. 
    Define the \emph{level-1-up-down array} of $w$ with respect to $\cR$ to be the $3\times s$ array
    \[l(w,\cR)=
      \begin{pmatrix}
      l_1 & l_2 & \ldots & l_s\\
      m_1 & m_2 & \ldots & m_s\\
      n_1 & n_2 & \ldots & n_s
    \end{pmatrix}
    \]
    with the following properties. Entries are all positive integers, $l_1< l_2< \cdots < l_s$, and 
    $l_i\le m_i$ for $i\in[1,s]$. 
    The integer $l_1$ is the smallest integer such that the transition $\t_{l_1}$ of $\cR$ contains a push command. This command pushes a vertex  $n_1$
    (incident to the root). The entries in the first and third row of column 1 of $l(w,\cR)$ are then $l_1$ and $n_1$, respectively.
    This is repeated for all other transitions which push vertices from the root. In detail,
    setting $i_1=1$, there is a maximal sequence of integers $l_{i_1}< l_{i_2}< \cdots < l_{i_d}$ such that,
    in the run $\cR$, the transition $\t_{l_{i_j}}$ contains a push command and the pointer is at the root before this transition is applied. Then
    the root has degree $d$ in $T$ and there are pairwise distinct integers $n_{i_j}$ such that $\t_{l_{i_j}}$ pushes the vertex $n_{i_j}$.
    The entries in the first and third row of columns $i_j$ are $l_{i_j}$ and $n_{i_j}$, respectively, for $1\le j\le d$.
    Every other column $i$ in the array corresponds to a transition $\t_{l_i}$ which contains an up command that is applied when the
    pointer is at the root and every such transition of $\cR$ corresponds to a unique column. If $\t_{l_i}$ contains an up command to a vertex
    $n_i$ then the third row of column $i$ equals $n_i$  (and by construction $n_i\in \{n_{i_1},\ldots, n_{i_d}\}$). Finally, for each $i$, the pointer remains above the root throughout transitions $\t_{l_i+1}, \ldots \t_{m_i}$ and is returned to the root by transition $\t_{m_i+1}$. Note that by definition the number of columns $s$ of the
    array is at most $kD$.

    Now define the \emph{level-1-factorisation} $[w]_{\cR,l=1}$ of $w$ by following Definition \ref{defnuf} but using the first two rows
    of $l(w,\cR)$ instead of $(l,m)_{\cR,\nu}$. To construct a \historyvec\ corresponding to $l(w,\cR)$ we combine the {\historyvec}s of each of the
    vertices adjacent to the root, keeping track of which child of the root the
    \historyvec\ is associated to, as follows. Fix $j$ such that $1\le j\le s$ and let $l(w,\cR)_j$ be the array produced by deleting all columns of
    $l(w,\cR)$ except those where the entry in the third row is $n_j$. Note that the first two rows of this array are equal to $(l,m)_{\cR,n_j}$
    (after rewriting the latter as a $2\times s$ array) and that the \historyvec\ $\harry_{w, \mathcal{R},n_j}$ of $w$ with respect to $n_j$ is determined by  $(l,m)_{\cR,n_j}$.
    Suppose that the columns of $l(w,\cR)$ that have $n_j$ in the third row are columns $i_1, \ldots ,i_{s_j}$, where $i_1< i_2< \cdots < i_{s_j}$. 
    The \emph{level-1-\historyvec} of $w$ with respect to  $\cR$ is the $3\times (2s)$ array $\harry_{w, \mathcal{R},l=1}$
    that has in the first two rows of its $(2i_{p}-1)$th and $2i_p$th columns the entries of the 
    $(2p-1)$th and $2p$th columns of $\harry_{w, \mathcal{R},n_j}$, respectively, and in its third row has $j$, for $1\le p\le s_i$ and $1\le j\le s$.
   
    Now, in the spirit of  Definition \ref{defU}, for  $s\in [1,kD]$, $\mathbf c\in C^{2s}$, $\mathbf{q}\in Q^{2s}$ and $\mathbf{n}\in \N^{2s}$ 
    we define the \emph{level-1-\UpSet}, $U_{(\mathbf c,\mathbf q, 1)}$,
    associated to the $3\times 2s$ array $\mathbf h=\left(\begin{smallmatrix}\mathbf{c}\\ \mathbf{q}\\ \mathbf{n}\end{smallmatrix}\right)$,
    to be the set of tuples $(u_1,\ldots ,u_s)$ such that there exists a proper run $\cR$ accepting a word $w\in L$ with level-1-factorisation
    $[w]_{\cR,l=1}=w_0u_1\cdots u_sw_s$ and $\harry_{w, \mathcal{R},l=1}=\mathbf h$. Again there are finitely many level-1-{\UpSet}s and
    we define $\cM_{l=1}$ by adjusting \eqref{eq:M} in the obvious way to capture the maximum sum of the lengths of $u_i$'s contained in the finite
    level-1-{\UpSet}s.

\begin{proposition}[Bounded Substitution Lemma]
\label{prop:Substitution_b} 
  Let  $L$ be an infinite  $k$-MCF language over an alphabet $\Sigma$.
  Given $\mu\in \N_+$ there exist integers $K,M, m_1,\dots, m_M\in\N_+$ 
  where $K\ge k$,  $m_i\in[1, K]$ and sets $U_1,\dots, U_M$ where  $U_i\subseteq (\Sigma^*)^{m_i}$  for $i\in[1,M]$ 
such that the following  holds.
\begin{\arabiclist}
\item\label{it:5.1.1} For each $i\in[1,M]$ and each  $(u_1,\dots, u_{m_i})\in U_i$, there exists a word $w\in L$  with $U_i$-switchable
  factorisation \[w=w_0u_1w_1\dots w_{m_i-1}u_{m_i}w_{m_i}.\]
\item\label{it:5.1.2} 
  There exists $N_\mu\in \N_+$ such that if $w\in L$ and at least $N_\mu$ letters of  $w$ are marked then either 
  \begin{alphlist}[(a)]
    \item $w$ is $(\mu,N_\mu)$-pumpable or 
    \item there exists $i\in [1,M]$, such that $U_i$ is infinite,  and $w$ has a $U_i$-switchable factorisation 
    \[w=w_0u_1w_1\dots w_{m_{i-1}}u_{m_i}w_{m_i}\]
  satisfying \[\Sigma_{j=0}^{m_i}|w_j|_\#\leq N_\mu.\]
  \end{alphlist}
\end{\arabiclist}
\end{proposition}

 \begin{proof}
    Observe that Lemma~\ref{1s} goes through on replacing $[w]_{\cR,\nu}$ and $[w']_{\cR',\nu'}$ with $[w]_{\cR,l=1}$ and  $[w']_{\cR',l=1}$, respectively, 
    and assuming that $\mathbf{h}_{w',\cR',l=1}=\mathbf{h}_{w,\cR,l=1}$. 
    To prove the proposition, first repeat the  proof of the first part of Theorem~\ref{thm:Substitution}, defining the
    sets $\{U_1,\ldots U_M\}$ based on level-1-\UpSet s instead of the original \UpSet s. Then the proof of part  \eqref{it:5.1.1} of the
    proposition follows from the proof of Theorem~\ref{thm:Substitution}~\eqref{it:4.1.1}.
    
    The first part of the proof of \eqref{it:5.1.2} above is exactly as the proof
    of Theorem~\ref{thm:Substitution}~\eqref{it:4.1.3}. For the second part set
    \[N_\mu=(D+1)(\mu k |C||Q|+\mu)+D\cM_{l=1}\] and as before assume there is no
    vertex of the \dendron\ at which the pointer  remains for more than
    $\mu|C||Q|$ consecutive transitions during the run $\cR$.
    Assume $w$ is a word in $L$ with at least $N_\mu$ letters marked and let
    $w=w_0u_1\cdots u_sw_s$ be the level-1-factorisation of $w$. From Lemma \ref{lem:atv_2} we have $\sum_{j=0}^s|w_j|\le \mu kD|Q|<N_\mu$ so
     $\sum_{j=0}^s|w_j|_\#<N_\mu$.
    If the level-1-\UpSet\, $U_i$ say, of $w$ with respect to $\cR$ is finite then $\sum_{j=1}^s|u_j|\le \cM_{l=1}$ so  
    $|w|_\#\le |w|<N_\mu$, a contradiction. Hence $U_i$ is infinite. 
 \end{proof}

\section{Rational subsets of groups with multiple context-free word problem}\label{sec:RationalSubset}

The following gives yet another way to obtain  Proposition~\ref{prop:F2xF2}.

Given a monoid $M$ generated by $\Sigma$, with generating map $\pi:\Sigma^*\rightarrow M$, a subset $R\subseteq M$ is said to be \emph{rational} if $R=\pi(L)$ for some regular language $L$ in $\Sigma^*$. (The subset $R$ is said to be \emph{recognisable} if $\pi^{-1}(R)$ is a regular language.) The \emph{rational subset membership problem} for $M$ is to decide, given a finite state automaton (FSA) over $\Sigma$ accepting a language $L$ and a word $w\in \Sigma^*$, whether or not $\pi(w)\in \pi(L)$.  
   The \emph{rational subset intersection (resp. equality) problem} for $M$ is to decide, given two FSAs over $\Sigma$ accepting   regular languages $L_1$ and $L_2$ in $\Sigma^*$,  whether or not $\pi(L_1)\cap\pi(L_2)=\emptyset$  (resp. $\pi(L_1)=\pi(L_2)$). 
The \emph{regular intersection decision problem for a language $L$}  is to decide, given one FSA over $\Sigma$ accepting a regular language $L_1$, whether $L\cap L_1$ is non-empty (see \cite[Section 3]{KambSilStei}).

It is straightforward to see that, in the setting of finitely generated groups, the rational subsets intersection problem is Turing equivalent to the rational subset membership problem: given two FSAs over $\Sigma$ accepting languages $L_1$ and $L_2$ in $\Sigma^*$, $\pi(L_1)\cap \pi(L_2)$ is non-empty if and only if $1\in \pi(L_1L_2^{-1})$ (which is a rational subset since $L_1L_2^{-1}$ is regular); conversely,  given an FSA over $\Sigma$ accepting $L$  and a word $x\in \Sigma^*$, the singleton set $\{\pi(x)\}$ is rational and $\{\pi(x)\}\cap \pi(L)$ is non-empty if and only if $\pi(x)\in \pi(L)$. 
Furthemore, by \cite[Theorem 3.1]{KambSilStei}, a finitely generated group $G$ has decidable rational subset membership problem if and only if $WP(G)$ has decidable regular intersection decision problem.

\begin{lemma}\label{mcftorsm} Let $G$ be a finitely generated group. If $WP(G)$ is multiple context-free then the rational subset membership problem is decidable in $G$ (and hence the rational subset intersection problem is decidable, and $WP(G)$ has decidable regular intersection decision problem).
\end{lemma}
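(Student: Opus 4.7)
The plan is to reduce rational subset membership to decidability of emptiness for multiple context-free languages, relying on the machinery already recalled just before the lemma statement. By the Kambites equivalence \cite[Theorem 3.1]{KAMBITES2007622}, and by the routine Turing equivalence between the rational subset membership and intersection problems also quoted above, it will suffice to show that if $WP(G)$ is multiple context-free then $WP(G)$ has decidable regular intersection problem. That is, given an FSA over $\Sigma$ accepting a regular language $L\subseteq \Sigma^*$, I need to decide whether $L\cap WP(G)=\emptyset$.

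First I would invoke Proposition~\ref{prop:closure_props}(ii) to conclude that $L\cap WP(G)$ is itself multiple context-free. The point that needs to be emphasised is that this closure is \emph{effective}: given the input DFA for $L$ together with a fixed MCF grammar for $WP(G)$ (which exists by hypothesis), one applies the standard product construction, indexing new non-terminals by tuples of state-pairs of the DFA that track the endpoints of the strings substituted for each variable of the original non-terminal. This yields an MCF grammar for $L\cap WP(G)$ algorithmically from the input data.

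The remaining step is to decide emptiness of an MCF language from a given MCF grammar $G_0=(N,\Sigma,R,S)$. The plan is to iteratively compute the set of \emph{productive} non-terminals (those $A\in N$ from which some tuple of words over $\Sigma$ can be derived): initialise with those non-terminals $A$ appearing on the left of initialising rules $A(t_1,\dots,t_m)\leftarrow$ (so with $t_1,\dots,t_m\in\Sigma^*$), and repeatedly add $A$ to the productive set whenever some rule $A(\cdots)\leftarrow B_1(\cdots),\dots,B_n(\cdots)$ has all body non-terminals $B_1,\dots,B_n$ already marked productive. The process terminates in at most $|N|$ rounds, and $L(G_0)\neq\emptyset$ if and only if $S$ is ultimately productive. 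Applying this algorithm to the product grammar produced above decides whether $L\cap WP(G)=\emptyset$.

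The only substantive ingredient is the final step, the decidability of emptiness of an MCF language; all other pieces are either effective versions of standard closure properties or reductions already quoted in the paper, so the argument is really a packaging together of these facts. I do not anticipate any genuine obstacle, and the main care needed is simply to verify that each closure and reduction step is effective rather than merely existential.
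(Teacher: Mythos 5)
Your proposal is correct and rests on exactly the same two ingredients as the paper's proof: the effective closure of multiple context-free languages under intersection with regular languages, and the decidability of emptiness for multiple context-free languages. The only cosmetic difference is that you route the group-theoretic reduction through the quoted Kambites equivalence between rational subset membership and the regular intersection problem for $\WP(G)$, whereas the paper makes that reduction explicit by appending a path labelled $w^{-1}$ to the given automaton; the mathematical content is the same.
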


\begin{proof} Assume $\Sigma$ is a finite inverse-closed generating set for $G$. Let $\mathcal B$ be a FSA accepting a language $L(\cB)$, where $R=\pi(L(\cB))$.  
As $\Sigma$ is inverse closed, if $w\in\Sigma^*$ represents $g\in G$,  then $w^{-1}\in \Sigma^*$ and $\pi(w^{-1})=g^{-1}$.

Without loss of generality, assume that $\cB$ has a unique accept state $f$. 
Attach a path labelled by $w^{-1}$ from $f$ to a new state $f_{\text{new}}$. Let   $\mathcal{B}_w$ be the resulting FSA, with single accept state $f_{\text{new}}$. Then $\mathcal{B}_w$ accepts exactly those words of the form $vw^{-1}$ where $\pi(v)\in R$, whence $\pi(L(\cB_w))=Rg^{-1}$.
It follows that  $z\in WP(G)\cap L(\cB_w)$ if and only if $z$ has the form  $vw^{-1}$, where $v\in L(\cB)$ and $\pi(v)=g$.
Hence $WP(G) \cap L(\cB_w)$ is non-empty if and only if $g\in R$.

Then the following algorithm decides membership in $R$. On input $w\in \Sigma^*$, 

\begin{\arabiclist}
\item build the FSA  $\cB_w$, as above

\item construct the TSA $\cA$ for $WP(G) \cap L(\cB_w)$.
This is possible since the TSA (or grammar) for the intersection of a multiple context
free language with a regular language is constructible \cite[Theorem 3.9]{Seki}

\item check whether or not $L(\cA)$ is empty, using the fact that  emptiness for multiple context free languages is decidable \cite{Kasami1} 
(see also \cite{Seki}).
\end{\arabiclist}

\end{proof}

We can then recover \cite[Theorems 24 and 25]{GKS} (see Section \ref{subsec:F2F2} above) as follows.
From \cite{lohrey}, a right-angled Artin group has decidable rational subset membership problem
 if and only if its defining graph has no full subgraph isomorphic to $C_4$ or $P_4$. 
   From \cite{DSS} a right-angled Artin group $G$ has a subgroup isomorphic to $\GG(P_4)$  if and only if its defining graph contains $P_4$ as a full subgraph and from \cite{Kambites_2009} (see also \cite{Blatherwick}) $G$ has a subgroup isomorphic to $\GG(C_4)\cong F_2\times F_2$  if and only if its defining graph contains $C_4$.

\section*{Acknowledgements}
We are grateful to Sarah Rees, Graham Campbell, Turbo Ho, Alex Levine and Mark Kambites for their invaluable input into this project. We thank the anonymous reviewer for their careful reading and helpful comments.
 The first author was supported 
 by The Leverhulme Trust, Research Project Grant RPG-2022-025.  The second author was supported by an Australian Research Council grant DP210100271 and by the London Mathematical Society Visiting Speakers to the UK--Scheme 2.
\bibliographystyle{plain}
\bibliography{refs_Swap_ijac}

\begin{thebibliography}{10}

\bibitem{ben69}
Mich{\`e}le Benois.
\newblock Parties rationnelles du groupe libre.
\newblock {\em C. R. Acad. Sci. Paris, S{\'e}r. A}, 269:1188--1190, 1969.

\bibitem{Blatherwick}
Vikki~A. Blatherwick.
\newblock {\em Centralisers and big powers in partially commutative groups}.
\newblock PhD thesis, University of Newcastle-upon-Tyne, Newcastle-upon-Tyne,
  UK, 2007.

\bibitem{Bondarenko}
Ievgen Bondarenko.
\newblock Quaternionic lattices and the poly-context-free word problem.
\newblock {\em J. Group Theory}, 28(5):1163--1175, 2025.
\newblock \url{https://doi.org/10.1515/jgth-2024-0098}.

\bibitem{Brough}
Tara Brough.
\newblock Groups with poly-context-free word problem.
\newblock {\em Groups Complex. Cryptol.}, 6(1):9--29, 2014.
\newblock \url{https://doi.org/10.1515/gcc-2014-0002}.

\bibitem{Caminati}
Marco~B. Caminati.
\newblock {$O_2$} is a multiple context-free grammar: an implementation-,
  formalisation-friendly proof, 2024.
\newblock \url{https://doi.org/10.48550/arXiv.2405.09396}.

\bibitem{Clark}
Alexander Clark.
\newblock An introduction to multiple context free grammars for linguists,
  originally appeared in lot 2012, 2014.
\newblock \url{https://alexc17.github.io/static/pdfs/mcfgsforlinguists.pdf}.

\bibitem{Denkinger}
Tobias Denkinger.
\newblock An automata characterisation for multiple context-free languages.
\newblock In {\em Developments in language theory}, volume 9840 of {\em Lecture
  Notes in Comput. Sci.}, pages 138--150. Springer, Berlin, 2016.
\newblock \url{https://doi.org/10.1007/978-3-662-53132-7_12}.

\bibitem{DSS}
Carl Droms, Brigitte Servatius, and Herman Servatius.
\newblock Groups assembled from free and direct products.
\newblock {\em Discrete Math.}, 109(1):69--75, 1992.
\newblock \url{https://doi.org/10.1016/0012-365X(92)90279-O}.

\bibitem{PermClosure}
Andrew Duncan, Murray Elder, Lisa Frenkel, and Mengfan Lyu.
\newblock Permutation closure for multiple context-free languages, 2025.
\newblock \url{https://doi.org/10.48550/arXiv.2509.22239}.

\bibitem{E-BScounter}
Murray Elder.
\newblock A context-free and a 1-counter geodesic language for a
  {B}aumslag-{S}olitar group.
\newblock {\em Theoret. Comput. Sci.}, 339(2-3):344--371, 2005.
\newblock \url{https://doi.org/10.1016/j.tcs.2005.03.026}.

\bibitem{ZnMCF}
Kilian Gebhardt, Fr\'{e}d\'{e}ric Meunier, and Sylvain Salvati.
\newblock {$O_ n$} is an {$n$}-{MCFL}.
\newblock {\em J. Comput. System Sci.}, 127:41--52, 2022.
\newblock \url{https://doi.org/10.1016/j.jcss.2022.02.003}.

\bibitem{Gilman2024}
Robert Gilman.
\newblock {\em Groups with multiple context-free word problems}, chapter~4,
  pages 161--182.
\newblock De Gruyter, Berlin, Boston, 2024.
\newblock \url{https://doi.org/10.1515/9783110984323-004}.

\bibitem{Gilman-Shrinking}
Robert~H. Gilman.
\newblock A shrinking lemma for indexed languages.
\newblock {\em Theoret. Comput. Sci.}, 163(1-2):277--281, 1996.
\newblock \url{https://doi.org/10.1016/0304-3975(96)00244-7}.

\bibitem{GKS}
Robert~H. Gilman, Robert~P. Kropholler, and Saul Schleimer.
\newblock Groups whose word problems are not semilinear.
\newblock {\em Groups Complex. Cryptol.}, 10(2):53--62, 2018.
\newblock \url{https://doi.org/10.1515/gcc-2018-0010}.

\bibitem{Ginsburg}
Seymour Ginsburg.
\newblock {\em The mathematical theory of context-free languages}.
\newblock McGraw-Hill Book Co., New York-London-Sydney, 1966.

\bibitem{Turbo}
Meng-Che Ho.
\newblock The word problem of {$\mathbb{Z}^n$} is a multiple context-free
  language.
\newblock {\em Groups Complex. Cryptol.}, 10(1):9--15, 2018.
\newblock \url{https://doi.org/10.1515/gcc-2018-0003}.

\bibitem{Joshi_1985}
Aravind~K. Joshi.
\newblock Tree adjoining grammars: How much context-sensitivity is required to
  provide reasonable structural descriptions?
\newblock In David~R. Dowty, Lauri Karttunen, and Arnold~M.Editors Zwicky,
  editors, {\em Natural Language Parsing: Psychological, Computational, and
  Theoretical Perspectives}, Studies in Natural Language Processing, page
  206–250. Cambridge University Press, 1985.

\bibitem{kallmeyer}
Laura Kallmeyer.
\newblock {\em Parsing Beyond Context-Free Grammars}.
\newblock Cognitive Technologies. Springer, 2010.
\newblock \url{https://doi.org/10.1007/978-3-642-14846-0}.

\bibitem{Kambites_2009}
Mark Kambites.
\newblock On commuting elements and embeddings of graph groups and monoids.
\newblock {\em Proc. Edin. Math. Soc.}, 52(1):155–170, 2009.
\newblock \url{https://doi.org/10.1017/S0013091507000119}.

\bibitem{KambSilStei}
Mark Kambites, Pedro~V. Silva, and Benjamin Steinberg.
\newblock On the rational subset problem for groups.
\newblock {\em J. Algebra}, 309(2):622--639, 2007.
\newblock \url{https://doi.org/10.1016/j.jalgebra.2006.05.020}.

\bibitem{KanazawaWN}
Makoto Kanazawa.
\newblock The pumping lemma for well-nested multiple context-free languages.
\newblock In Volker Diekert and Dirk Nowotka, editors, {\em Developments in
  Language Theory}, volume 5583, pages 312--325, Berlin, Heidelberg, 2009.
  Springer Berlin Heidelberg.
\newblock \url{https://doi.org/10.1007/978-3-642-02737-6_25}.

\bibitem{KanazawaOgden}
Makoto Kanazawa.
\newblock Ogden's lemma, multiple context-free grammars, and the control
  language hierarchy.
\newblock {\em Inform. and Comput.}, 269:104449, 2019.
\newblock \url{https://doi.org/10.1016/j.ic.2019.104449}.

\bibitem{failure}
Makoto Kanazawa, Gregory~M. Kobele, Jens Michaelis, Sylvain Salvati, and Ryo
  Yoshinaka.
\newblock The failure of the strong pumping lemma for multiple context-free
  languages.
\newblock {\em Theory Comput. Syst.}, 55(1):250--278, 2014.
\newblock \url{https://doi.org/10.1007/s00224-014-9534-z}.

\bibitem{Kasami1}
Tadao Kasami, Hiroyuki Seki, and Mamoru Fujii.
\newblock Generalized context-free grammars, multiple context-free grammars and
  head grammars.
\newblock Tech. report, Osaka University, 1987.

\bibitem{Kasami2}
Tadao Kasami, Hiroyuki Seki, and Mamoru Fujii.
\newblock Generalized context-free grammars and multiple context-free grammars.
\newblock {\em Systems and Computers in Japan}, 20(7):43--52, 1989.
\newblock \url{https://onlinelibrary.wiley.com/doi/abs/10.1002/scj.4690200705}.

\bibitem{KS}
Robert~P. Kropholler and Davide Spriano.
\newblock Closure properties in the class of multiple context-free groups.
\newblock {\em Groups Complex. Cryptol.}, 11(1):1--15, 2019.
\newblock \url{https://doi.org/10.1515/gcc-2019-2004}.

\bibitem{LL}
Florian Lehner and Christian Lindorfer.
\newblock Comparing consecutive letter counts in multiple context-free
  languages.
\newblock {\em Theoret. Comput. Sci.}, 868:1--5, 2021.
\newblock \url{https://doi.org/10.1016/j.tcs.2021.03.034}.

\bibitem{LLsaw}
Florian Lehner and Christian Lindorfer.
\newblock Self-avoiding walks and multiple context-free languages.
\newblock {\em Comb. Theory}, 3(1):Paper No. 18, 50, 2023.
\newblock \url{https://doi.org/10.5070/c63160431}.

\bibitem{lohrey}
Markus Lohrey and Benjamin Steinberg.
\newblock The submonoid and rational subset membership problems for graph
  groups.
\newblock {\em J. Algebra}, 320(2):728--755, 2008.
\newblock \url{https://doi.org/10.1016/j.jalgebra.2007.08.025}.

\bibitem{Mitrana}
Victor Mitrana and Ralf Stiebe.
\newblock The accepting power of finite automata over groups.
\newblock In {\em New trends in formal languages}, volume 1218 of {\em Lecture
  Notes in Comput. Sci.}, pages 39--48. Springer, Berlin, 1997.
\newblock \url{https://doi.org/10.1007/3-540-62844-4_4}.

\bibitem{MS}
David~E. Muller and Paul~E. Schupp.
\newblock Groups, the theory of ends, and context-free languages.
\newblock {\em J. Comput. System Sci.}, 26(3):295--310, 1983.
\newblock \url{https://doi.org/10.1016/0022-0000(83)90003-X}.

\bibitem{Nederhof}
Mark-Jan Nederhof.
\newblock A short proof that {$O_2$} is an {MCFL}.
\newblock In Katrin Erk and Noah~A. Smith, editors, {\em Proceedings of the
  54th Annual Meeting of the Association for Computational Linguistics (Volume
  1: Long Papers)}, pages 1117--1126, Berlin, Germany, August 2016. Association
  for Computational Linguistics.
\newblock \url{https://aclanthology.org/P16-1106/}.

\bibitem{Ogden}
William Ogden.
\newblock A helpful result for proving inherent ambiguity.
\newblock {\em Math. Systems Theory}, 2:191--194, 1968.
\newblock \url{https://doi.org/10.1007/BF01694004}.

\bibitem{Interchange}
William Ogden, Rockford~J. Ross, and Karl Winklmann.
\newblock An “interchange lemma” for context-free languages.
\newblock {\em SIAM Journal on Computing}, 14(2):410--415, 1985.
\newblock \url{https://doi.org/10.1137/0214031}.

\bibitem{cyclicShift}
Alexander Okhotin and Alexey Sorokin.
\newblock Cyclic shift on multi-component grammars.
\newblock In Alberto Leporati, Carlos Mart{\'{\i}}n{-}Vide, Dana Shapira, and
  Claudio Zandron, editors, {\em Language and Automata Theory and Applications
  - 14th International Conference, {LATA} 2020, Milan, Italy, March 4-6, 2020,
  Proceedings}, Lecture Notes in Computer Science, pages 287--299, Cham, 2020.
  Springer.
\newblock \url{https://doi.org/10.1007/978-3-030-40608-0\_20}.

\bibitem{Salvati}
Sylvain Salvati.
\newblock M{IX} is a 2-{MCFL} and the word problem in {$\mathbb{Z}^2$} is
  captured by the {IO} and the {OI} hierarchies.
\newblock {\em J. Comput. System Sci.}, 81(7):1252--1277, 2015.
\newblock \url{https://doi.org/10.1016/j.jcss.2015.03.004}.

\bibitem{Seki}
Hiroyuki Seki, Takashi Matsumura, Mamoru Fujii, and Tadao Kasami.
\newblock On multiple context-free grammars.
\newblock {\em Theoret. Comput. Sci.}, 88(2):191--229, 1991.
\newblock \url{https://doi.org/10.1016/0304-3975(91)90374-B}.

\bibitem{sipser13}
Michael Sipser.
\newblock {\em Introduction to the Theory of Computation}.
\newblock Course Technology, Boston, MA, third edition, 2013.

\bibitem{Sorokin}
Alexey Sorokin.
\newblock Pumping lemma and ogden lemma for displacement context-free grammars.
\newblock In A.~M. Shur and M.~V. Volkov, editors, {\em Developments in
  Language Theory}, volume 8633, pages 154--165, Cham, 2014. Springer
  International Publishing.
\newblock \url{https://doi.org/10.1007/978-3-319-09698-8_14}.

\bibitem{Vijay}
K.~Vijay-Shanker, David~J. Weir, and Aravind~K. Joshi.
\newblock Characterizing structural descriptions produced by various
  grammatical formalisms.
\newblock In {\em Proceedings of the 25th Annual Meeting on Association for
  Computational Linguistics}, ACL '87, page 104–111, USA, 1987. Association
  for Computational Linguistics.
\newblock \url{https://doi.org/10.3115/981175.981190}.

\end{thebibliography}

\end{document}